\numberwithin{equation}{section} 
\numberwithin{figure}{section} 
\theoremstyle{plain}
\theoremstyle{plain}
\newtheorem{thm}{Theorem}
  \theoremstyle{remark}
  \newtheorem*{rem*}{Remark}
  \theoremstyle{plain}
  \newtheorem{prop}[thm]{Proposition}
  \theoremstyle{plain}
  \newtheorem{cor}[thm]{Corollary}
  \theoremstyle{plain}
  \newtheorem{lem}[thm]{Lemma}
\newcommand{\moy}[1]{\langle #1 \rangle}
\newcommand{\lp}{\left(}
\newcommand{\rp}{\right)}
\newcommand{\lva}{\left|}
\newcommand{\rva}{\right|}
\newcommand{\lno}{\left\|}
\newcommand{\rno}{\right\|}
\newcommand{\Dg}{\Delta_g}
\newcommand{\fr}{\frac}
\newcommand{\h}{\hbar}
\newcommand{\ti}[1]{\tilde {#1}}
\newcommand{\eps}{\varepsilon}
\newcommand{\trm}[1]{\textrm{#1}}
\def\hto0{\xrightarrow{h\to 0}}
\renewcommand{\d}{\partial}
\newcommand{\norm}[1]{\|#1\|}
\newcommand{\defeq}{\stackrel{\rm{def}}{=}}
\newcommand{\llb}{\llbracket}
\newcommand{\rrb}{\rrbracket}
\newcommand{\bbbone}{{\mathchoice {1\mskip-4mu {\rm{l}}} {1\mskip-4mu {\rm{l}}}{ 1\mskip-4.5mu {\rm{l}}} { 1\mskip-5mu {\rm{l}}}}}
\newcommand{\cA}{\mathcal A}
\newcommand{\cD}{\mathcal D}
\newcommand{\cL}{{\mathcal L}}
\newcommand{\cM}{{\mathcal M}}
\newcommand{\cO}{{\mathcal O}}
\newcommand{\cU}{{\mathcal U}}
\newcommand{\cB}{{\mathcal B}}
\newcommand{\cS}{\mathcal S}
\newcommand{\cH}{{\mathcal H}}
\newcommand{\cP}{{\mathcal P}}
\newcommand{\cE}{{\mathcal E}}
\newcommand{\cV}{{\mathcal V}}
\newcommand{\cW}{{\mathcal W}}
\newcommand{\cK}{{\mathcal K}}
\newcommand{\cT}{\mathcal T}
\newcommand{\cF}{\mathcal F}
\newcommand{\cX}{\mathcal X}
\newcommand{\ka}{\kappa}
\newcommand{\ga}{\gamma}
\newcommand{\be }{\beta }
\newcommand{\bs}{\boldsymbol}
\newcommand{\fkP}{\mathsf P}
\newcommand{\fkV}{\mathsf V}
\newcommand{\msV}{\mathsf V}
\newcommand{\C}{{\mathbb C}}
\newcommand{\N}{{\mathbb N}}
\newcommand{\R}{{\mathbb R}}
\newcommand{\supp}{\operatorname {supp}}
\newcommand{\Spec}{\operatorname{Spec}}
\newcommand{\e}{\operatorname{e}}
\newcommand{\Op}{\operatorname{Op}}
\newcommand{\oph}{\operatorname{Op}_{\h}}
\newcommand{\Id}{\operatorname{Id}}
\newcommand{\Card}{\operatorname{Card}}
\renewcommand{\div}{\operatorname{div}}
\newcommand{\WF}{\operatorname{WF}}
\newcommand{\diam}{\operatorname{diam}}
\renewcommand{\Im }{\operatorname{Im}}
\renewcommand{\Re}{\operatorname{Re}}
\newcommand{\Res}{\operatorname{Res}}
\newcommand{\et}{\operatorname{and}}
\newcommand{\press}{\operatorname{Pr}}
\newcommand{\Jac}{\operatorname{Jac}}
\newcommand{\dvol}{d\textrm{vol}}
\renewcommand{\i}{\operatorname{i}}
\newcommand{\lh}{\log \hbar^{-1}}
\address{
Institut de Physique Th\'eorique\\
CEA Saclay\\
91191 Gif-sur-Yvette\\
France}
\begin{document}
\global\long\def\trm#1{\textrm{#1}}
\global\long\def\scal#1#2{\langle#1,#2\rangle}
\global\long\def\moy#1{\langle#1\rangle}
\global\long\def\norm#1{\|#1\|}
\global\long\def\PP#1{\mathsf{P}_{\gamma_{#1}}}
\global\long\def\VV#1{\mathsf{V}_{\gamma_{#1}}}
\global\long\def\FF{\mathsf{F}}

\global\long\def\prq{\Pr(q^{u})^{+}}
\global\long\def\pra{\Pr(a^{u})^{+}}
\global\long\def\pr{\Pr(a^{u})}

\title{Energy decay for the damped wave equation under a pressure condition}

\author{Emmanuel Schenck}

\email{emmanuel.schenck@cea.fr}
\begin{abstract}
We establish the presence of a spectral gap near the real axis for
the damped wave equation on a manifold with negative curvature. This
results holds under a dynamical condition expressed by the negativity
of a topological pressure with respect to the geodesic flow. As an
application, we show an exponential decay of the energy for all initial
data sufficiently regular. This decay is governed by the imaginary
part of a finite number of eigenvalues close to the real axis.
\end{abstract}
\maketitle

\section{Introduction}

One of the standard questions in geometric control theory concerns
the so-called stabilization problem: given a dissipative wave equation
on a manifold, one is interested in the behaviour of the solutions
and their energies for long times. The answers that can be given to
this problem are closely related to the underlying manifold and the
geometry of the control (or damping) region. 

In this paper, we shall study these questions in the particular case
of the damped wave equation on a compact Riemannian manifold $(M,g)$
with negative curvature and dimension $d\geq2$. For simplicity, we
will assume that $M$ has no boundary. If $a\in C^{\infty}(M)$ is
a real valued function on $M$, this equation reads\begin{equation}
(\d_{t}^{2}-\Delta_{g}+2a(x)\d_{t})u=0\,,\ \ (t,x)\in\R\times M\,,\label{eq:DWE}\end{equation}
 with initial conditions\begin{eqnarray*}
u(0,x) & = & \omega_{0}(x)\in H^{1}\\
\i\d_{t}u(0,x) & = & \omega_{1}(x)\in H^{0}.\end{eqnarray*}
Here $H^{s}\equiv H^{s}(M)$ are the usual Sobolev spaces on $M$.
The Laplace-Beltrami operator $\Dg\equiv\Delta$ is expressed in local
coordinates by \begin{equation}
\Dg=\frac{1}{\sqrt{\bar{g}}}\d_{i}(g^{ij}\sqrt{\bar{g}}\d_{j})\,,\quad\bar{g}=\det g.\label{eq:laplacian}\end{equation}
We will also denote by $\dvol=\sqrt{\bar{g}}dx$ the natural Riemannian
density, and $\scal uv=\int_{M}u\bar{v}\dvol$ the associated scalar
product. 

In all the following, we will consider only the case where the waves
are \textit{damped}, wich corresponds to take $a\geq0$ with $a$
non identically 0. We can reformulate the above problem into an equivalent
one by considering the unbounded operator \[
\cB=\left(\begin{array}{cc}
0 & \Id\\
-\Delta_{g} & -2\i a\end{array}\right)\,:H^{1}\times H^{0}\to H^{1}\times H^{0}\]
with domain $D(\cB)=H^{2}\times H^{1}$, and the following evolution
equation : \begin{equation}
(\d_{t}+\i\cB)\bold u=0\,,\ \ \bold u=(u_{0},u_{1})\in H^{1}\times H^{0}.\label{eq:DWE 2x2}\end{equation}
From the Hille-Yosida theorem, one can show that $\cB$ generates
a uniformly bounded, strongly continuous semigroup $\e^{-\i t\cB}$
for $t\geq0$, mapping any $(u_{0},u_{1})\in H^{1}\times H^{0}$ to
a solution $(u(t,x),\i\d_{t}u(t,x))$ of \eqref{eq:DWE 2x2}. Since
$\cB$ has compact resolvent, its spectrum $\Spec\cB$ consist in
a discrete sequence of eigenvalues $\{\tau_{n}\}_{n\in\N}$ . The
eigenspaces $E_{n}$ corresponding to the eigenvalues $\tau_{n}$
are all finite dimensional, and the sum $\bigoplus_{n}E_{n}$ is dense
in $H^{1}\times H^{0}$, see \cite{GK}. If $\tau\in\Spec\cB$, there
is $v\in H^{1}$ such that \begin{equation}
u(t,x)=\e^{-\i t\tau}v(x)\,,\label{eq:Stationary State}\end{equation}
and the function $u$ then satisfies \begin{equation}
P(\tau)u=0\,,\quad\trm{where}\quad P(\tau)=-\Delta-\tau^{2}-2\i a\tau.\label{eq:DWE-modes}\end{equation}
From \eqref{eq:DWE-modes}, it can be shown that the spectrum is symmetric
with respect to the imaginary axis, and satisfies \[
-2\|a\|_{\infty}\leq\Im\tau_{n}\leq0\]
while $|\Re\tau_{n}|\to\infty$ as $n\to\infty$ . Furthermore, if
$\Re\tau\neq0$, we have $\Im\tau\in[-\|a\|_{\infty},0]$, and the
only real eigenvalue is $\tau=0$, associated to the constant solutions
of \eqref{eq:DWE}. 

The question of an asymptotic density of modes has been adressed by
Markus and Matsaev in \cite{MM}, where they proved the following
Weyl-type law, also found later independently by Sjöstrand in \cite{Sjo}
:\[
\Card\{n:0\leq\Re\tau_{n}\leq\lambda\}=\lp\frac{\lambda}{2\pi}\rp^{d}\int_{p^{-1}([0,1])}dxd\xi+\cO(\lambda^{d-1})\,.\]
Here $p=g_{x}(\xi,\xi){}^{2}$ is the principal symbol of $-\Delta_{g}$
and $dxd\xi$ denotes the Liouville measure on $T^{*}M$ coming from
its symplectic structure. Under the asumption of ergodicity for the
geodesic flow with respect to the Liouville measure, Sjöstrand also
showed that most of the eigenvalues concentrate on a line in the high-frequency
limit. More precisely, he proved that given any $\eps>0$, \begin{equation}
\Card\{n:\tau_{n}\in[\lambda,\lambda+1]+\i(\R\setminus[-\bar{a}-\eps,-\bar{a}+\eps])\}=o(\lambda^{d-1})\:.\label{eq:Accumulation}\end{equation}
The real number $\bar{a}$ is the ergodic mean of $a$ on the unit
cotangent bundle $S^{*}M=\{(x,\xi)\in T^{*}M,g_{x}(\xi,\xi)=1\}$.
It is given by \[
\bar{a}=\lim_{T\to\infty}T^{-1}\int_{0}^{T}a\circ\Phi^{t}dt,\ \trm{well\ defined\ }dxd\xi-\trm{almost\ everywhere\ on\ }S^{*}M\:.\]
 Hence the eigenvalues close to the real axis, say with imaginary
parts in $[\alpha,0]$, $0>\alpha>-\bar{a}$ can be considered as
{}``exceptional''. The first result we will present in this paper
show that a spectral gap of finite width can actually exist below
the real axis under some dynamical hypotheses, see Theorem \ref{thm:gapW}
below. 

The second object studied in this work is the energy of the waves.
From now on, we call $\cH=H^{0}\times H^{1}$ the space of Cauchy
data. Let $u$ be a solution of \eqref{eq:DWE} with initial data
$\bs\omega\in\cH$. The energy of $u$ is defined by \[
E(u,t)=\fr12(\|\d_{t}u\|_{L^{2}}^{2}+\|\nabla u\|_{L^{2}}^{2})\,.\]
As a well known fact, $E$ is decreasing in time, and $E(u,t)\stackrel{t\to\infty}{\longrightarrow}0$.
It is then natural to ask if a particular rate of decay of the energy
can be identified. Let $s>0$ be a positive number, and define the
Hibert space \[
\cH^{s}=H^{1+s}\times H^{s}\subset\cH\,.\]
Generalizing slightly a definition of Lebeau, we introduce the best
exponential rate of decay with respect to $\|\cdot\|_{\cH^{s}}$ as
\begin{equation}
\rho(s)=\sup\{\beta\in\R_{+}:\exists C>0\ \trm{such\ that\ }\forall\bs\omega\in\cH^{s},\ E(u,t)\leq C\e^{-\beta t}\|\bs\omega\|_{\cH^{s}}\}\label{eq:Best decay rate}\end{equation}
where the solutions $u$ of \eqref{eq:DWE} have been identified with
the Cauchy data $\bs\omega\in\cH^{s}$. It is shown in \cite{Leb}
that \[
\rho(0)=2\min(G,C(\infty)),\]
where $G=\inf\lbrace-\Im\tau;\tau\in\Spec\cB\setminus\{0\}\rbrace$
is the spectral gap, and \[
C(\infty)=\lim_{t\to\infty}\inf_{\rho\in T^{*}M}\frac{1}{t}\int_{0}^{t}\pi^{*}a(\Phi^{s}\rho)ds\geq0.\]
Here $\Phi^{t}:T^{*}M\to T^{*}M$ is the geodesic flow, and $\pi:T^{*}M\to M$
is the canonical projection along the fibers. It follows that the
presence of a spectral gap below the real axis is of significative
importance in the study of the energy decay. However, an explicit
example is given in \cite{Leb}, where $G>0$ while $C(\infty)=0$,
and then $\rho(0)=0$ . This particular situation is due to the failure
of the geometrical control, namely, the existence of orbits of the
geodesic flow not meeting $\supp a$ (which implies $C(\infty)=0$).
Hence, the spectrum of $\cB$ may not always control the energy decay,
and some dynamical assumptions on the geodesic flow are required if
we want to solve positively the stabilization problem. In the case
where geometric control holds \cite{RT}, it has been shown in various
settings that $\rho(0)>0$, see for instance \cite{BLR,Leb,Hit}.
In \cite{Chr}, a particular situation is analyzed where the geometric
control does not hold near a closed hyperbolic orbit of the geodesic
flow: in this case, there is a sub-exponential decay of the energy
with respect to $\|\cdot\|_{\cH^{\eps}}$ for some $\eps>0$. 

\medskip{}

\textbf{Dynamical assumptions.} In this paper, we first assume $(M,g)$
has strictly negative sectional curvatures. This implies that the
geodesic flow has the Anosov property on every energy layer, see Section
\ref{sec:flows} below. Without loss of generality, we suppose that
the injectivity radius satisfies $r\geq2$. Then, we drop the geometric
control assumption, and replace it with a dynamical hypothese involving
the topological pressure of the geodesic flow on $S^{*}M$, which
we define now. For ervery $\eps>0$ and $T>0$, a set $\cS\subset S^{*}M$
is $(\eps,T)-$separated if $\rho,\theta\in\cS$ implies that $d(\Phi^{t}\rho,\Phi^{t}\theta)>\eps$
for some $t\in[0,T]$, where $d$ is the distance induced from the
adapted metric on $T^{*}M$. For $f$ continuous on $S^{*}M$, set
\[
Z(f,T,\eps)=\sup_{\cS}\left\{ \sum_{\rho\in\cS}\exp\sum_{k=0}^{T-1}f\circ\Phi^{k}(x)\right\} .\]
The topological pressure $\Pr(f)$ of the function $f$ with respect
to the geodesic flow is defined by \[
\Pr(f)=\lim_{\eps\to0}\limsup_{T\to\infty}\frac{1}{T}\log Z(f,T,\eps).\]
The pressure $\Pr(f)$ contains useful information on the Birkhoff
averages of $f$ and the complexity of the geodesic flow, see for
instance \cite{Wal} for a general introduction and further properties.
The particular function we will deal with is given by \begin{equation}
a^{u}:\rho\in S^{*}M\mapsto a^{u}(\rho)=-\int_{0}^{1}\pi^{*}a\circ\Phi^{s}(\rho)\: ds+\fr12\log J^{u}(\rho)\in\R\label{eq:Fonction Pression}\end{equation}
where $J^{u}(\rho)$ is the unstable Jacobian at $\rho$ for time
$1$, see Section \ref{sec:flows}. In this paper, we will always
assume that \begin{equation}
\Pr(a^{u})<0.\label{eq:Condition Pression}\end{equation}

\medskip{}
\textbf{Main results. }Under the condition $\Pr(a^{u})<0$, we will
see that a spectral gap of finite width exists below the real axis.
As a consequence, there is an exponential decay of the energy of the
waves with respect to $\|\cdot\|_{\cH^{\ka}}$ for any $\ka\geq d/2$,
and if $G<|\Pr(a^{u})|$, we have $\rho(\ka)=2G$. We begin by stating
the result concerning the spectral gap. 
\begin{thm}
\textbf{\textup{(Spectral gap) }}\label{thm:gapW} Suppose that the
topological pressure of $a^{u}$ with respect to the geodesic flow
on $S^{*}M$ satisfies $\Pr(a^{u})<0$, and let $\eps>0$ be such
that \[
\Pr(a^{u})+\eps<0.\]
Then, there exisits $e_{0}(\eps)>0$ such that for any $\tau\in\Spec\cB$
with $|\Re\tau|\geq e_{0}(\eps)$, we have \[
\Im\tau\leq\Pr(a^{u})+\eps.\]

\end{thm}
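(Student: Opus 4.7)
The plan is to argue by contradiction at the semiclassical scale $h=|\Re\tau|^{-1}$ determined by the real parts of large eigenvalues. Suppose there is a sequence $\tau_h \in \Spec\cB$ with $|\Re\tau_h|\to\infty$ and $\Im\tau_h > \Pr(a^u)+\eps$, and let $u_h$ be corresponding normalized eigenfunctions. Setting $h=|\Re\tau_h|^{-1}$ the equation $P(\tau_h)u_h=0$ reads
$$(-h^2\Dg-1)\,u_h=\cO(h)\,u_h,$$
so that $u_h$ is microlocally concentrated on $S^*M$. On the other hand, $u_h$ being a true eigenmode of $\cB$, the semigroup acts exactly by $e^{-itB}u_h=e^{-it\tau_h}u_h$, and
$$\|e^{-itB}u_h\|_{\cH}=e^{t\,\Im\tau_h}\|u_h\|_\cH.$$
The aim is to obtain a competing upper bound on the left-hand side involving $\Pr(a^u)+\eps$.

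To that end, I would decompose $e^{-itB}$ along pieces of orbit. Fix a finite open cover $(V_j)$ of $S^*M$ of small diameter $\eps'$, with a smooth quadratic partition $\sum_j \pi_j^2=1$ quantized into operators $\Pi_j=\oph(\pi_j)$; modulo negligible errors these give a partition of unity $\sum_j \Pi_j^*\Pi_j=\Id$ microlocally near $S^*M$. For $N\in\N$ and $\bs j=(j_0,\dots,j_N)$, define the cylinder operator
$$\Pi_{\bs j}=\Pi_{j_N}\,U(1)\,\Pi_{j_{N-1}}\cdots U(1)\,\Pi_{j_0},\quad U(1)=e^{-iB},$$
so that $U(N)=\sum_{\bs j}\Pi_{\bs j}$ modulo $\ohi$ on the relevant energy layer. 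Only sequences for which $V_{\bs j}=V_{j_0}\cap\Phi^{-1}(V_{j_1})\cap\cdots\cap\Phi^{-N}(V_{j_N})$ is nonempty contribute at leading order.

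The heart of the proof is a hyperbolic dispersion estimate: for every such nonempty $V_{\bs j}$ and every $\rho_{\bs j}\in V_{\bs j}$,
$$\|\Pi_{\bs j}\|_{L^2\to L^2}\leq C h^{-\alpha}\exp\!\lp\sum_{k=0}^{N-1}a^u(\Phi^k\rho_{\bs j})\rp,$$
for some $\alpha>0$ independent of $h$, valid uniformly up to times $N\lesssim\lh$. The exponent fuses two effects: the shrinking of the cylinder in the unstable direction, which through a half-density FIO representation of the one-sided propagator yields the factor $(J^u_N)^{1/2}$, and the decay along the orbit induced by the damping term in $\cB$, which contributes $\exp(-\int_0^N \pi^*a\circ\Phi^s\,ds)$. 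Together they reconstruct $\exp(\sum a^u)$. Establishing this sharp bound over a logarithmic time window is the main technical obstacle: it requires an iterated WKB/FIO calculus for the positive-frequency branch of $e^{-iB}$ expressed in stable/unstable coordinates, exploiting the Anosov structure of the geodesic flow to control the $N$-fold composition without losing the Jacobian weight.

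Once the dispersion estimate is in hand, summing over nonempty cylinders and invoking the definition of the topological pressure yields
$$\lno U(N)\rno_{\trm{eff}}\leq C h^{-\alpha}\,Z(a^u,N,\eps')\leq C h^{-\alpha}e^{N(\Pr(a^u)+\eps/2)}$$
for $\eps'$ small enough and $N$ large. Choosing $N=N(h)\sim\lh$ with proportionality constant large enough to absorb $h^{-\alpha}$ into $e^{N\eps/2}$ gives $\|U(N)\|_{\trm{eff}}\leq e^{N(\Pr(a^u)+\eps)}$. Applied to $u_h$ this forces $e^{N\Im\tau_h}\leq e^{N(\Pr(a^u)+\eps)}$, hence $\Im\tau_h\leq\Pr(a^u)+\eps$ for $h$ small, i.e. for $|\Re\tau|$ above some threshold $e_0(\eps)$, contradicting the assumption and proving the spectral gap.
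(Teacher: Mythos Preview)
Your outline is essentially the strategy the paper follows: semiclassical reduction, a hyperbolic dispersion estimate for cylinder operators, summation controlled by the topological pressure, and evaluation at a logarithmic time $N\sim\lh$ chosen large enough to absorb the $\h^{-\alpha}$ loss. Two technical points deserve mention, since they are where the paper's execution differs from your sketch.

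First, the paper does not work with the wave semigroup $\e^{-\i t\cB}$ on $\cH=H^1\times H^0$, but instead reduces at the outset to the scalar semiclassical Schr\"odinger propagator $\cU^t=\e^{-\i t\cP/\h}$ with $\cP=-\tfrac{\h^2}{2}\Delta-\i\h q_z$, acting on $L^2(M)$. This is exactly the ``positive-frequency branch'' you allude to, but making it explicit avoids the type mismatch in your sketch (your $\Pi_j=\oph(\pi_j)$ act on $L^2$, while your $U(1)=\e^{-\i\cB}$ acts on $\cH$) and turns the propagator into a genuine semiclassical FIO associated to $\Phi^t$, so that the WKB iteration is standard. The eigenfunction relation becomes $\|\cU^{t}\Psi\|=\e^{t\Im z/\h}$, which plays the role of your $\e^{N\Im\tau_h}$.

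Second, and more substantively, the step ``summing over nonempty cylinders $\leq Z(a^u,N,\eps')\leq\e^{N(\Pr(a^u)+\eps/2)}$'' is not immediate: the sum over \emph{all} nonempty $N$-cylinders of a fixed cover is not a priori controlled by the pressure (which is an infimum over subcovers, or a supremum over separated sets). The paper handles this by a two-scale construction: one first fixes a time $t_0$ and a specific subcover $\cW=\{\cW_\beta\}_{\beta\in\cB_{t_0}}$ of the $t_0$-refinement so that $\sum_{\beta\in\cB_{t_0}}\e^{\moy{a^u}_{t_0,\beta}}\leq\e^{t_0(\Pr(a^u)+\eps)}$, and only then builds the partition of unity adapted to $\cW$. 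The sum over $N$-fold products then factorizes as $\bigl(\sum_{\beta\in\cB_{t_0}}\e^{\moy{a^u}_{t_0,\beta}}\bigr)^N$, which gives the desired bound directly. Your single-scale version can be made to work, but this refinement is the clean way to pass from the dispersion estimate to the pressure.
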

The presence of a spectral gap of finite width below the real axis
is not obvious a priori if geometric control does not hold, since
there may be a possibility for $|\Im\tau_{n}|$ to become arbitrary
small as $n\to\infty$ : see for instance \cite{Hit}, Theorem 1.3.
However, this accumulation on the real axis can not occur faster than
a fixed exponential rate, as it was shown in \cite{Leb} that \[
\exists C>0\ \trm{such\ that}\ \forall\tau\in\Spec\cB\,,\ \ \Im\tau\leq-\frac{1}{C}\e^{-C|\Re\tau|}.\]

Let us mention a result comparable to Theorem \ref{thm:gapW} in the
framework of chaotic scattering obtained recently by Nonnenmacher
and Zworski \cite{NZ}, in the semiclassical setting. For a large
class of Hamiltonians, including $P(\h)=-\h\Delta+V$ on $\R^{d}$
with $V$ compactly supported, they were able to show a resonance-free
region near the energy $E$:\[
\exists\delta,\gamma>0\ \trm{such\ that\ }\Res(P(\h))\cap([E-\delta,E+\delta]-\i[0,\gamma\h])=\emptyset\ \trm{for\ }0<\h\leq\h_{\delta,\gamma}.\]
 This holds provided that the hamiltonian flow $\Phi^{t}$ on the
trapped set $K_{E}$ at energy $E$ is hyperbolic, and that the pressure
of the unstable Jacobian with respect to the geodesic flow on $K_{E}$
is strictly negative. We will adapt several techniques of \cite{NZ}
to prove Theorem \ref{thm:gapW}, some of them coming back to \cite{Ana1,AN}.

In a recent paper, Anantharaman \cite{Ana2} studied the spectral
deviations of $\Spec\mbox{ }\cB$ with respect to the line of accumulation
$\Im z=-\bar{a}$ appearing in \eqref{eq:Accumulation}. In the case
of constant negative curvature, she obtained an upper bound for the
number of modes with imaginary parts above $-\bar{a}$ , and showed
that for $\alpha\in[-\bar{a},0[$, there exists a function $H(\alpha)$
such that \begin{equation}
\forall c>0,\forall\eps>0,\ \ \limsup_{\lambda\to\infty}\frac{\log\Card\{\tau_{n}:\Re\tau_{n}\in[\lambda-c,\lambda+c],\ \Im\tau_{n}\geq\alpha+\eps\}}{\log\lambda}\leq H(\alpha).\label{eq:Deviations}\end{equation}
$H(\alpha)$ is a dynamical quantity defined by \[
H(\alpha)=\sup\{h_{KS}(\mu),\ \mu\in\cM_{\fr12},\ \int ad\mu=-\alpha\}\]
where $\cM_{\fr12}$ denotes the set of $\Phi^{t}-$invariant measures
on $S^{*}M$, and $h_{KS}$ stands for the Kolmogorov--Sinai entropy
of $\mu$. As a consequence of Theorem \ref{thm:gapW}, the result
of Anantharaman is not always optimal : $H(\alpha)\neq0$ for $\alpha\in[-\bar{a},0[$,
but if $\Pr(a^{u})<0$, there is no spectrum in a strip of finite
width below the real axis, i.e. the $\limsup$ in \eqref{eq:Deviations}
vanishes for some $\alpha=\alpha(a)\neq0$. 

The operator $\cB$ being non-selfadjoint, its eigenfunctions may
fail to form a Riesz basis of $\cH$. However, if a solution $\bs u$
of \eqref{eq:DWE 2x2} has initial data sufficiently regular, it is
still possible to expand it on eigenfunctions which  eigenmodes are
close to the real axis, up to an exponentially small error in time
: 
\begin{thm}
\textbf{\textup{(Eigenvalues expansion)}}\label{thm:Eigenvalues expansion}
Let $\eps>0$ such that $\Pr(a^{u})+\eps<0$, and $\ka\geq\frac{d}{2}$.
There exists $e_{0}(\eps)>0$, $n=n(\eps)\in\N$ and a (finite) sequence
$\tau_{0},\dots,\tau_{n-1}$ of eigenvalues of $\cB$ with \[
\tau_{j}\in[-e_{0}(\eps),e_{0}(\eps)]+\i[\pr+\eps,0],\quad j\in\llb0,n-1\rrb\,,\]
such that for any solution $\bs u(t,x)$ of \eqref{eq:DWE 2x2} with
initial data $\bs\omega\in\cH^{\ka}$, we have \[
\bs u(t,x)=\sum_{j=0}^{n-1}\e^{-\i t\tau_{j}}\bs u_{j}(t,x)+\bold r_{n}(t,x)\,,\ \ t>0\,.\]
The functions $\bs u_{j},r_{n}$ satisfy \[
\|\bs u_{j}(t,\cdot)\|_{\cH}\leq Ct^{m_{j}}\|\bs\omega\|_{\cH}\ \ \et\ \ \|\bold r_{n}(t,\cdot)\|_{\cH}\leq C_{\eps}\e^{t(\Pr(a^{u})+\eps)}\|\bs\omega\|_{\cH^{\ka}},\]
where $m_{j}$ denotes the multiplicity of $\tau_{j}$, the constants
$C>0$ depends only on $M$ and $a$, while $C_{\eps}>0$ depending
on $M,a$ and $\eps$. 
\end{thm}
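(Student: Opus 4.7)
The plan is to recover the expansion from a Dunford-type contour integral representation of the semigroup $\e^{-\i t\cB}$, deforming the contour past the finitely many eigenvalues lying above the critical line $\{\Im z=\pr+\eps\}$. Concretely, for $\bs\omega$ in the domain of $\cB$ and $t>0$ I would start from
\[
\bs u(t)=\frac{1}{2\pi\i}\int_{\Im z=\delta_{0}}\e^{-\i tz}(z-\cB)^{-1}\bs\omega\,dz,
\]
with $\delta_{0}>0$ chosen so that the horizontal line $\Im z=\delta_{0}$ lies above $\Spec\cB$ (recall $\Im\tau_{n}\leq 0$). By Theorem~\ref{thm:gapW}, any $\tau\in\Spec\cB$ with $\Im\tau>\pr+\eps$ must satisfy $|\Re\tau|<e_{0}(\eps)$; since $\cB$ has compact resolvent, the rectangle $[-e_{0}(\eps),e_{0}(\eps)]+\i[\pr+\eps,\delta_{0}]$ contains only finitely many eigenvalues $\tau_{0},\dots,\tau_{n-1}$.

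Deforming the contour from $\Im z=\delta_{0}$ down to $\Gamma_{\eps}=\{\Im z=\pr+\eps\}$ (avoiding each $\tau_{j}$ by a small loop), the residue theorem gives
\[
\bs u(t)=\sum_{j=0}^{n-1}\Res_{z=\tau_{j}}\bigl[\e^{-\i tz}(z-\cB)^{-1}\bs\omega\bigr]+\bs r_{n}(t),
\]
where $\bs r_{n}$ is the remaining integral along $\Gamma_{\eps}$. The residue at $\tau_{j}$, computed via the Jordan decomposition of $\cB$ on the finite-dimensional generalized eigenspace, has the form $\e^{-\i t\tau_{j}}\bs u_{j}(t)$ with $\bs u_{j}$ polynomial in $t$ of degree at most $m_{j}-1$; this immediately gives the claimed bound $\|\bs u_{j}(t,\cdot)\|_{\cH}\leq Ct^{m_{j}}\|\bs\omega\|_{\cH}$.

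The core technical step is estimating the remainder. For this I would use a polynomial resolvent bound of the form
\[
\|(z-\cB)^{-1}\|_{\cH\to\cH}\leq C_{\eps}\langle\Re z\rangle^{N},\quad z\in\Gamma_{\eps},\ |\Re z|\geq e_{0}(\eps),
\]
extracted from the semiclassical machinery underlying the proof of Theorem~\ref{thm:gapW}: the escape-function and hyperbolic propagator estimates used in \cite{NZ} typically yield such a bound as a quantitative refinement of the absence-of-spectrum statement. A dyadic frequency decomposition of $\bs\omega$ together with the regularity gain $\|\cdot\|_{\cH}\lesssim\langle\Re z\rangle^{-\ka}\|\cdot\|_{\cH^{\ka}}$ on pieces localized near frequency $|\Re z|$ then bounds the integrand in $\cH$-norm by $C_{\eps}\langle\Re z\rangle^{N-\ka}\e^{t(\pr+\eps)}\|\bs\omega\|_{\cH^{\ka}}$, using $|\e^{-\i tz}|=\e^{t\Im z}=\e^{t(\pr+\eps)}$ along $\Gamma_{\eps}$. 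Absolute convergence of the $z$-integral requires $\ka$ to exceed $N+1$, which is precisely why the threshold $\ka\geq d/2$ appears: it matches the polynomial loss $N\sim d/2-1$ produced by the anisotropic Sobolev / escape-function constructions natural to the Anosov setting.

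The main obstacle is this quantitative resolvent estimate: as formulated, Theorem~\ref{thm:gapW} asserts only absence of spectrum on $\Gamma_{\eps}$, so its proof must be revisited to extract an operator-norm bound with explicit polynomial control in $|\Re z|$. Once that bound is available, the contour deformation, residue identification and tail estimation are essentially routine, and the role of the regularity assumption $\ka\geq d/2$ becomes transparent as the amount of Sobolev regularity needed to absorb the resolvent growth on the shifted contour.
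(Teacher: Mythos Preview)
Your strategy is essentially the one the paper follows: deform a resolvent contour below the strip, pick up residues at the finitely many eigenvalues above $\Im z=\pr+\eps$ (finiteness coming from Theorem~\ref{thm:gapW} plus discreteness), and control the remainder via a quantitative resolvent bound on the shifted line. The paper does supply precisely the bound you anticipate, namely
\[
\|R(\tau)\|_{L^{2}\to L^{2}}\leq C_{\eps}\moy{\tau}^{-1-c_{0}\gamma}\log\moy{\tau},\qquad c_{0}=\frac{d}{2|\pra|},
\]
on the line $\Im\tau=\gamma\in[\pra,0)$, so your guess $N\sim d/2-1$ for the polynomial loss is exactly right.

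Two points where the paper's execution differs from your sketch are worth flagging. First, the resolvent bound does not come from an escape-function construction but directly from the hyperbolic dispersion estimate for the propagator (Corollary~\ref{cor:Decay Propag}): one writes an approximate inverse $A_{1}=\frac{\i}{\h}\int_{0}^{T_{\h}}\cU^{t}\e^{\i tz/\h}\oph(\chi)\,dt$ near the energy shell and uses the decay of $\|\cU^{t}\oph(\chi)\|$ for $t\gtrsim\lh$. Second, and more substantively, the paper does not try to make the contour integral on $\Gamma_{\eps}$ absolutely convergent via a dyadic frequency decomposition. Instead it handles the remainder $\bold r_{n}=\e^{-\i t\cB}(1-\Pi)\bs\omega$ by a time-side Plancherel argument: setting $v=\chi(t)u$ with a smooth time cutoff, Fourier-transforming in $t$, and using that interpolation between the $L^{2}\to L^{2}$ and $L^{2}\to H^{2}$ bounds gives a \emph{uniform} bound $\|R(\tau)\|_{H^{\ka}\to H^{1}}\leq C_{\eps}$ once $\ka\geq d/2$. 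This sidesteps the borderline integrability issue your approach would face exactly at $\ka=d/2$ (your condition $\ka>N+1$ gives only $\ka>d/2$, and there is also a $\log\moy{\tau}$ factor to absorb). Your dyadic route can presumably be made to work with a little $\eps$-room, but the Plancherel argument is cleaner and explains why the threshold is $\ka\geq d/2$ rather than a strict inequality.
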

A similar eigenvalues expansion can be found in \cite{Hit}, where
no particular assumption on the curvature of $M$ is made, however
the geometric control must hold. Our last result deals with an exponential
decay of the energy, which will be derived a consequence of the preceding
theorem : 
\begin{thm}
\textbf{\textup{(Exponential energy decay)}}\label{thm:energy} Let
$\eps>0$, $(\tau_{j})_{0\leq j\leq n(\eps)-1}$, $\ka$ and $\bs u$
as in Theorem \ref{thm:Eigenvalues expansion}. Set by convention
$\tau_{0}=0$. The energy $E(u,t)$ satisfies \[
E(u,t)\leq\lp\sum_{j=1}^{n-1}\e^{t\Im\tau_{j}}t^{m_{j}}C\|\bs\omega\|_{\cH}+C_{\eps}\e^{t(\Pr(a^{u})+\eps)}\|\bs\omega\|_{\cH^{\ka}}\rp^{2}\]
where $m_{j}$ denotes the multiplicity of $\tau_{j}$. The constants
$C>0$ depends only on $M$ and $a$, while $C_{\eps}>0$ depending
on $M,a$ and $\eps$. In particular, $\rho(\ka)=2\min(G,|\Pr(a^{u})+\eps|)>0$. \end{thm}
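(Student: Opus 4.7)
The plan is to deduce the energy bound directly from the eigenvalues expansion of Theorem \ref{thm:Eigenvalues expansion}. Writing
\[
\bs u(t,x) = \sum_{j=0}^{n-1} \e^{-\i t\tau_j}\bs u_j(t,x) + \bs r_n(t,x)\,,
\]
the first observation is that the $j=0$ contribution, associated by convention with the zero eigenvalue $\tau_0=0$, corresponds to constants on the connected compact manifold $M$: it is time-independent and has vanishing energy, so it drops out of $E(u,t)$ and only the indices $j \geq 1$ together with the remainder need to be estimated.

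Next I would control the energy by the Cauchy-data norm via the elementary inequality $E(v,t)^{1/2} \leq C\,\|\bs v(t)\|_\cH$, valid for any $\bs v(t) \in \cH$ associated to a function $v$ (since $\|\nabla v\|_{L^2} \leq \|v\|_{H^1}$). Applying the triangle inequality in the energy semi-norm to the truncated expansion and using $|\e^{-\i t \tau_j}| = \e^{t \Im \tau_j}$, this yields
\[
E(u,t)^{1/2} \leq \sum_{j=1}^{n-1} \e^{t \Im \tau_j}\,\|\bs u_j(t)\|_\cH + \|\bs r_n(t)\|_\cH\,.
\]
Inserting the bounds on $\bs u_j$ and $\bs r_n$ supplied by Theorem \ref{thm:Eigenvalues expansion} and squaring produces the inequality in the statement.

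For the identification of $\rho(\ka)$, I would argue as follows. By definition of the spectral gap, every $\tau_j$ with $j \geq 1$ satisfies $-\Im \tau_j \geq G$; hence for any $\eta > 0$ the polynomial prefactor $t^{m_j}$ can be absorbed into $\e^{\eta t}$, so the finite sum decays like $\e^{-(G-\eta)t}$, while the remainder decays like $\e^{-|\pr + \eps| t}$. Bounding the finite-sum terms via the continuous embedding $\cH^\ka \hookrightarrow \cH$, one combines both contributions into $E(u,t) \leq C_{\eta,\eps}\, \e^{-2 \min(G - \eta,\, |\pr + \eps|)\, t}\, \|\bs\omega\|_{\cH^\ka}^2$, and letting $\eta \to 0$ gives the lower bound $\rho(\ka) \geq 2 \min(G, |\pr + \eps|) > 0$. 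The matching upper bound $\rho(\ka) \leq 2G$, when $G$ is smaller than $|\pr + \eps|$, follows by considering the pure eigenmode $\e^{-\i t \tau^*} v(x)$ attached to an eigenvalue $\tau^*$ realising the spectral gap (which, by Theorem \ref{thm:gapW}, necessarily lies in the bounded-real-part region and hence belongs to the finite list $\tau_1, \dots, \tau_{n-1}$): its energy decays exactly like $\e^{-2Gt}$.

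The only genuinely delicate point in this scheme is the treatment of the $\tau_0 = 0$ mode, where one must verify that no polynomial-in-$t$ generalised eigenfunction appears that would spoil the energy estimate. This is ensured by the uniform boundedness of the semigroup $\e^{-\i t \cB}$ on $\cH$ together with the fact that $\ker P(0) = \ker(-\Delta)$ reduces to the constants on the connected compact manifold $M$, so the Jordan structure at $\tau = 0$ is trivial on the energy-relevant subspace.
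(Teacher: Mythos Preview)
Your proof is correct and takes a more direct route than the paper's. The paper first proves an auxiliary lemma (Lemma \ref{lem:Energy}) bounding $E(u,T)$ by $\|u\|_{L^2([T-2,T+1];H^1)}^2$, i.e.\ by a time-local $H^1$ norm of the \emph{first} component alone, via cutoffs in time and a Gronwall argument; it then inserts the decomposition of Theorem \ref{thm:Eigenvalues expansion} into this estimate. You bypass this entirely by observing that $\bs v\mapsto E(\bs v)^{1/2}$ is a seminorm on $\cH$ dominated by $\|\cdot\|_\cH$ and vanishing on the constant mode $(1,0)$, so the expansion of Theorem \ref{thm:Eigenvalues expansion}---which already delivers $\cH$-bounds on the full pairs $\bs u_j(t)$ and $\bs r_n(t)$---feeds straight into the triangle inequality. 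The paper's lemma would be the right tool if one only controlled $u$ in $H^1$ and not $\partial_t u$ in $H^0$, but since both are available here, your argument is the shorter one. Your discussion of the $\tau_0=0$ mode (semisimplicity, constants carry no energy) and of the two-sided identification of $\rho(\ka)$ is also more explicit than what the paper writes down.
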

\begin{rem*}
In our setting, it may happen that geometric control does not hold,
while $\Pr(a^{u})<0$. In this particular situation, it follows from
\cite{BLR} that we can not have an exponential energy decay uniformly
for all Cauchy data in $\cH$, where by uniform we mean that the constant
$C$ appearing in \eqref{eq:Best decay rate} does not depend on $\bs u$.
However, if for $\ka\geq\frac{d}{2}$ we look at $\rho(\ka)$ instead
of $\rho(0)$, our results show that we still have uniform exponential
decay, namely $\rho(\ka)>0$ while $\rho(0)=0$.
\end{rem*}

\subsection{Semiclassical reduction}

The main step yielding to Theorem \eqref{thm:gapW} is more easily
achieved when working in a semiclassical setting. From the eigenvalue
equation \eqref{eq:DWE-modes}, we are lead to study the equation
\[
P(\tau)u=0\]
where $\Im\tau=\cO(1)$. To obtain a spectral gap below the real axis,
we are lead to study eigenvalues with arbitrary large real parts since
$\Spec\cB$ is discrete. For this purpose, we introduce a semiclassical
parameter $\h\in]0,1]$, and write the eigenvalues as \[
\tau=\frac{1}{\h}+\cO(1).\]
If we let $\h$ go to 0, the eigenvalues $\tau$ we are interested
in then satisfy $\tau\h\hto01$ . Putting $\tau=\frac{\lambda}{\h}$
and $z=\lambda^{2}/2$, we rewrite the stationary equation \[
\lp-\frac{\h^{2}\Delta}{2}-z-\i\h q_{z}\rp u=0\,,\quad q_{z}(x)=\sqrt{2z}a(x).\]
Equivalently, we write\begin{equation}
(\cP(z,\h)-z)u=0\label{eq:eigenmodes}\end{equation}
 where $\cP(z,\h)=-\frac{\h^{2}\Delta}{2}-\i\h q_{z}$. The parameter
$z$ plays the role of a complex eigenvalue of the non-selfadjoint
quantum Hamiltonian $\cP.$ It is close to the {}``energy'' $E=1/2$,
while $\Im z$  is of order $\h$ and represents the {}``decay rate''
of the mode. In order to recall these properties, we will often write
\begin{equation}
z=\fr12+\h\zeta\,,\ \ \zeta\in\C\ \et\ |\zeta|=\cO(1).\label{eq: z - zeta}\end{equation}
In most of the following, we will deal with the semiclassical analysis
of the non-selfadjoint Schrödinger operator $\cP(z,\h)$ and the associated
Schrödinger equation \begin{equation}
\i\h\d_{t}\Psi=\cP(z,\h)\Psi\,\quad\trm{with}\ \ \|\Psi\|_{L^{2}}=1.\label{eq:Schrodinger}\end{equation}
The basic facts and notations we will use from semiclassical analysis
are recalled in Appendix \ref{sec:semiclass}. The operator $\cP$
has a principal symbol equal to $p(x,\xi)=\fr12g_{x}(\xi,\xi)$ ,
and a subprincipal symbol given by $-\i q_{z}$. Note that the classical
Hamiltonian $p(x,\xi)$ generates the geodesic flow on the energy
surface $p^{-1}(\fr12)=S^{*}M$. The properties of the geodesic flow
on $S^{*}M$ which will be useful to us are summarized in the next
section, where is also given an alternative definition of the topological
pressure more adapted to our purposes. We will denote the quantum
propagator by \[
\cU^{t}\equiv\e^{-\frac{\i t}{\h}\cP},\]
so that if $\Psi\in L^{2}(M)$ satisfies \eqref{eq:Schrodinger},
we have $\Psi(t)=\cU^{t}\Psi(0).$ Using standard methods of semiclassical
analysis, one can show that $\cU^{t}$ is a Fourier integral operator
(see \cite{EZ}, chapter 10) associated with the symplectic diffeomorphism
given by the geodesic flow $\Phi^{t}$. Since we assumed that $a\geq0$,
it is true that $\|\cU^{t}\|_{L^{2}\to L^{2}}\leq1,\ \forall t\geq0$. 

Denote \[
\Sigma_{\fr12}=\{z=\fr12+\cO(\h)\in\C,\ \exists\Psi\in L^{2}(M),\ (\cP(z,\h)-z)\Psi=0\}.\]
If $z\in\Sigma_{\fr12}$ and $\Psi$ is such that \eqref{eq:eigenmodes}
holds, the semiclassical wave front set of $\Psi$ satisfies \[
\WF_{\h}(\Psi)\subset S^{*}M.\]
This comes from the fact that $\Psi$ is an eigenfunction associated
with the eigenvalue $\fr12$ of a pseudodifferential operator with
principal symbol $p(x,\xi)=\fr12g_{x}(\xi,\xi)$. Using these semiclassical
settings, we will show the following key result : 
\begin{thm}
\label{thm:gap}Let $z\in\Sigma_{\fr12}$ , and $\eps>0$ be such
that $\Pr(a^{u})+\eps<0$. There exists $\h_{0}=\h_{0}(\eps)$ such
that \[
\h\leq\h_{0}\ \Rightarrow\ \frac{\Im z}{\h}\leq\Pr(a^{u})+\eps.\]

\end{thm}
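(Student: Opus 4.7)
The plan is to control $\|\cU^T\Psi\|_{L^2}$ for a well-chosen time $T = T(\h) \to \infty$ as $\h \to 0$, and to read off the bound on $\Im z/\h$ from this decay. Since $(\cP(z,\h) - z)\Psi = 0$ and $\cU^t = \e^{-\i t\cP/\h}$, one has $\cU^t\Psi = \e^{-\i tz/\h}\Psi$, hence
$$\|\cU^T\Psi\|_{L^2} = \e^{T\,\Im z/\h}.$$
It therefore suffices to match this identity against an upper bound $\|\cU^T\Psi\|_{L^2} \leq C(\h)\,\e^{T(\pr + \eps)}$ with $\log C(\h) = o(T)$ as $T\to\infty$.

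To this end, I would introduce a finite smooth partition of unity $\sum_\alpha \pi_\alpha \equiv 1$ on a small neighborhood of $S^*M$, with pieces of diameter small enough to exploit the local hyperbolic product structure of $\Phi^t$, and set $\Pi_\alpha = \oph(\pi_\alpha)$. Since $\WF_\h(\Psi) \subset S^*M$, inserting $\sum_\alpha \Pi_\alpha = \Id + \ohi$ microlocally between successive iterates of $\cU^1$ decomposes
$$\cU^n\Psi = \sum_{\vec\alpha} U_{\vec\alpha}\Psi + \ohi\,\|\Psi\|,\qquad U_{\vec\alpha} = \Pi_{\alpha_{n-1}}\cU^1\Pi_{\alpha_{n-2}}\cdots\cU^1\Pi_{\alpha_0},$$
the sum running over words $\vec\alpha = (\alpha_0,\dots,\alpha_{n-1})$. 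Only words for which $\Phi^1(\supp\pi_{\alpha_k})\cap\supp\pi_{\alpha_{k+1}} \neq \emptyset$ at every step, i.e.\ those shadowing a true pseudo-orbit of the geodesic flow, yield a non-negligible $U_{\vec\alpha}\Psi$.

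The heart of the argument, and the main obstacle, is a per-word hyperbolic dispersive estimate in the Anantharaman-Nonnenmacher / Nonnenmacher-Zworski spirit: for $n$ at most an Ehrenfest time $T_E = K|\log\h|$,
$$\|U_{\vec\alpha}\|_{L^2 \to L^2} \leq C\,\h^{-d/2} \exp\Big(\sum_{k=0}^{n-1} a^u(\rho_{\alpha_k})\Big),$$
where $\rho_{\alpha_k}$ is any representative point of $\supp\pi_{\alpha_k}\cap S^*M$. Obtaining it requires writing each factor $\Pi_{\alpha_{k+1}}\cU^1\Pi_{\alpha_k}$ as a WKB-type Fourier integral operator attached to $\Phi^1$, whose amplitude carries the damping factor $\exp(-\int_0^1 \pi^{*}a\circ\Phi^s\,ds)$, and then iterating the composition while tracking, via stationary phase, the unstable-Jacobian factors that assemble into $\frac{1}{2}\log J^u$ per step. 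These are exactly the two contributions that build $a^u$ in the definition \eqref{eq:Fonction Pression}. Controlling the resulting WKB ansatz out to $n = T_E$ is the technical bottleneck: one is quantizing at the Ehrenfest barrier, where standard semiclassical symbol calculus is on the verge of breaking down.

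Once this per-word estimate is in hand, the definition of the topological pressure supplies
$$\sum_{\vec\alpha} \exp\Big(\sum_{k=0}^{n-1} a^u(\rho_{\alpha_k})\Big) \leq \e^{n(\pr + \eps/2)}$$
for the partition taken fine enough and $n$ large. Combining with the triangle inequality $\|\cU^n\Psi\|_{L^2} \leq \sum_{\vec\alpha}\|U_{\vec\alpha}\|\,\|\Psi\|_{L^2} + \ohi$ yields
$$\|\cU^n\Psi\|_{L^2} \leq C\,\h^{-d/2}\,\e^{n(\pr + \eps/2)}.$$
Choosing $n = T_E = K|\log\h|$ with $K > d/\eps$ absorbs the factor $\h^{-d/2} = \e^{T_E d/(2K)}$ into $\e^{T_E\eps/2}$. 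Comparison with $\|\cU^{T_E}\Psi\|_{L^2} = \e^{T_E\Im z/\h}$, followed by $\h\to 0$ (so that $T_E\to\infty$), delivers $\Im z/\h \leq \pr + \eps$ for every $\h \leq \h_0(\eps)$.
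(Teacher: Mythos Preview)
Your strategy matches the paper's almost exactly: both exploit the identity $\|\cU^{n}\Psi\| = \e^{n\,\Im z/\h}$, both localize via a microlocal partition of unity and sum a per-word hyperbolic dispersion estimate $\|U_{\vec\alpha}\| \leq C\h^{-d/2}\exp\big(\sum_k a^u\big)$ established by iterated WKB on Lagrangian states, and both absorb the $\h^{-d/2}$ prefactor by pushing $n$ to a large multiple of $|\log\h|$.

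The one place where the paper is more careful is your passage from the per-word bound to $\sum_{\vec\alpha}\exp\big(\sum_k a^u(\rho_{\alpha_k})\big) \leq \e^{n(\Pr(a^u)+\eps/2)}$. If you sum over \emph{all} words of length $n$ with time step $1$ and factorize, you obtain $\big(\sum_\alpha \e^{a^u(\rho_\alpha)}\big)^n$, and the single-step partition function $\sum_\alpha \e^{a^u(\rho_\alpha)}$ has no reason to be $\leq \e^{\Pr(a^u)+\eps/2}$; refining the cover does not help at a fixed time. The paper resolves this with a two-scale device: it first fixes a time $t_0$ (depending on $\eps$ and the coarse cover $\cV$) and selects a refined subcover $\cB_{t_0}\subset\cA^{t_0}$ for which the very definition of the pressure furnishes $\sum_{\beta\in\cB_{t_0}}\e^{\moy{a^u}_{t_0,\beta}}\leq \e^{t_0(\Pr(a^u)+\eps)}$; it then inserts the projectors $\Pi_\beta$ only every $t_0$ steps, so the factorized sum over $\cB_{t_0}^N$ reads $\big(\sum_{\beta\in\cB_{t_0}}\e^{\moy{a^u}_{t_0,\beta}}\big)^N \leq \e^{Nt_0(\Pr(a^u)+\eps)}$ for total time $Nt_0$. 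This submultiplicativity trick is the ingredient your outline is missing; once it is in place the rest of your argument is identical to the paper's.
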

From \eqref{eq: z - zeta}, we also notice that the above equation
implies $\Im\tau\leq\Pr(a^{u})+\eps+\cO(\h)$ since $\tau=\h^{-1}\sqrt{2z}$,
and then $\Im\tau=\Im\zeta+\cO(\h)$. It follows by rescaling that
Theorem \ref{thm:gap} is equivalent to Theorem \ref{thm:gapW}.

\section{Quantum dynamics and spectral gap\label{sec:Spectral Gap}}

\subsection{Hyperbolic flow and topological pressure\label{sec:flows}}

We call \[
\Phi^{t}=\e^{tH_{p}}:T^{*}M\to T^{*}M\]
 the geodesic flow, where $H_{p}$ is the Hamilton vector field of
$p$. In local coordinates, \[
H_{p}\defeq\sum_{i=1}^{d}\frac{\d p}{\d\xi_{i}}\d_{x_{i}}-\frac{\d p}{\d x_{i}}\d_{\xi_{i}}=\{p,\cdot\}\]
where the last equality refers to the Poisson bracket with respect
to the canonical symplectic form $\omega=\sum_{i=1}^{d}d\xi_{i}\wedge dx_{i}$.
Since $M$ has strictly negative curvature, the flow generated by
$H_{p}$ on constant energy layers $\cE=p^{-1}(E)\subset T^{*}M,\ E>0$
has the Anosov property: for any $\rho\in\cE$, the tangent space
$T_{\rho}\cE$ splits into flow, stable and unstable subspaces\[
T_{\rho}\cE=\R H_{p}\oplus E^{s}(\rho)\oplus E^{u}(\rho)\,.\]
The spaces $E^{s}(\rho)$ and $E^{u}(\rho)$ are $d-1$ dimensional,
and are stable under the flow map: \[
\forall t\in\R,\ \ d\Phi_{\rho}^{t}(E^{s}(\rho))=E^{s}(\Phi^{t}(\rho)),\quad d\Phi_{\rho}^{t}(E^{u}(\rho))=E^{u}(\Phi^{t}(\rho)).\]
Moreover, there exist $C,\lambda>0$ such that \begin{eqnarray}
i) &  & \|d\Phi_{\rho}^{t}(v)\|\leq C\e^{-\lambda t}\|v\|,\ \trm{\ for\ all\ }v\in E^{s}(\rho),\ t\geq0\nonumber \\
ii) &  & \|d\Phi_{\rho}^{-t}(v)\|\leq C\e^{-\lambda t}\|v\|,\ \trm{\ for\ all\ }v\in E^{u}(\rho),\ t\geq0.\label{eq:Unstable}\end{eqnarray}
One can show that there exist a metric on $T^{*}M$ call the adapted
metric, for which one can takes $C=1$ in the preceding equations.
At each point $\rho$, the spaces $E^{u}(\rho)$ are tangent to the
unstable manifold $W^{u}(\rho)$, the set of points $\rho^{u}\in\cE$
such that $d(\Phi^{t}(\rho^{u}),\Phi^{t}(\rho))\xrightarrow{t\to-\infty}0$
where $d$ is the distance induced from the adapted metric. Similarly,
$E^{s}(\rho)$ is tangent to the stable manifold $W^{s}(\rho)$, the
set of points $\rho^{s}$ such that $d(\Phi^{t}(\rho^{s}),\Phi^{t}(\rho))\xrightarrow{t\to+\infty}0$. 

The adapted metric induces a the volum form $\Omega_{\rho}$ on any
$d$ dimensional subspace of $T(T_{\rho}^{*}M)$. Using $\Omega_{\rho}$,
we now define the unstable Jacobian at $\rho$ for time $t$. Let
us define the weak-stable and weak-unstable subspaces at $\rho$ by
\[
E^{s,0}(\rho)=E^{s}(\rho)\oplus\R H_{p}\,,\quad E^{u,0}(\rho)=E^{u}(\rho)\oplus\R H_{p}.\]
We set \[
J_{t}^{u}(\rho)=\det d\Phi^{-t}|_{E^{u,0}(\Phi^{t}(\rho))}=\frac{\Omega_{\rho}(d\Phi^{-T}v_{1}\wedge\dots\wedge d\Phi^{-t}v_{d})}{\Omega_{\Phi^{t}(\rho)}(v_{1}\wedge\dots\wedge v_{d})}\,,\ \ \ \ J^{u}(\rho)\defeq J_{1}^{u}(\rho),\]
where $(v_{1},\dots,v_{d})$ can be any basis of $E^{u,0}(\rho)$.
While we do not necessarily have $J^{u}(\rho)<1$, it is true that
$J_{t}^{u}(\rho)$ decays exponentially as $t\to+\infty$. 

The definition of the topological pressure of the geodesic flow given
in the introduction, although quite straighforward to state, is not
really suitable for our purposes. The alternative definition of the
pressure we will work with is based on refined covers of $S^{*}M$,
and can be stated as follows. For $\delta>0$, let $\cE^{\delta}=p^{-1}[\fr12-\delta,\fr12+\delta]$
be a thin neighbourhood of the constant energy surface $p^{-1}(\fr12)$,
and $V=\{V_{\alpha}\}_{\alpha\in I}$ an open cover of $\cE^{\delta}$.
In what follows, we shall always choose $\delta<1/2$. For $T\in\N^{*}$,
we define the refined cover $V^{(T)}$, made of the sets \[
V_{\beta}=\bigcap_{k=0}^{T-1}\Phi^{-k}(V_{b_{k}})\,,\quad\beta=b_{0}b_{1}\dots b_{T-1}\in I^{T}.\]
 It will be useful to coarse-grain any continuous function $f$ on
$\cE^{\delta}$ with respect to $V^{(T)}$ by setting \[
\moy f_{T,\beta}=\sup_{\rho\in V_{\beta}}\sum_{i=0}^{T-1}f\circ\Phi^{i}(\rho).\]
One then define \[
Z_{T}(V,f)=\inf_{B_{T}}\left\{ \sum_{\beta\in B_{T}}\exp(\moy f_{T,\beta})\ :\ B_{T}\subset I^{T}\,,\ \cE^{\delta}\subset\bigcup_{\beta\in B_{T}}V_{\beta}\right\} \,.\]
The topological pressure of $f$ with respect to the geodesic flow
on $\cE^{\delta}$ is defined by :\[
\press^{\delta}(f)=\lim_{\diam V\to0}\lim_{T\to\infty}\frac{1}{T}\log Z_{T}(V,f)\,.\]
 The pressure on the unit tangent bundle $S^{*}M$ is simply obtained
by continuity, taking the limit $\Pr(f)=\lim_{\delta\to0}\Pr^{\delta}(f)$.
To make the above limits easier to work with, we now take $f=a^{u}$
and fix $\eps>0$ such that $\Pr(a^{u})+\eps<0$. Then, we choose
the width of the energy layer $\delta\in]0,1[$ sufficiently small
such that $|\Pr(a^{u})-\press^{\delta}(a^{u})|\leq\eps/2$. Given
a cover $\cV=\{\cV_{\alpha}\}_{\alpha\in\cA}$ (of arbitrary small
diameter), there exist a time $t_{0}$ depending on the cover $\cV$
such that \[
\left|\frac{1}{t_{0}}\log Z_{t_{0}}(\cV,a^{u})-\press^{\delta}(a^{u})\right|\leq\frac{\eps}{2}\,.\]
Hence there is a subset of $t_{0}-$strings $\cB_{t_{0}}\subset\cA^{t_{0}}$
such that $\{\cV_{\alpha}\}_{\alpha\in\cB_{t_{0}}}$ is an open cover
of $\cE^{\delta}$ and satisfies \begin{equation}
\sum_{\beta\in\cB_{t_{0}}}\exp(\moy{a^{u}}_{t_{0},\beta})\leq\exp\lp t_{0}(\press^{\delta}(a^{u})+\frac{\eps}{2})\rp\leq\exp\lp t_{0}(\press(a^{u})+\eps)\rp\,.\label{eq:PressApprox}\end{equation}
For convenience, we denote by $\{\cW_{\beta}\}_{\beta\in\cB_{t_{0}}}\equiv\{\cV_{\beta}\}_{\beta\in\cB_{t_{0}}}$
the sub-cover of $\cV^{(t_{0})}$ such that \eqref{eq:PressApprox}
holds. Note that in this case, the diameter of $\cV$, $t_{0}$ and
then $\cW$ depends on $\eps$.

\subsection{Discrete time evolution\label{sub:MicroPartition}}

Let $\{\varphi_{\beta}\}_{\beta\in\cB_{t_{0}}}$ be a partition of
unity adapted to $\cW$, so that its Weyl quantization $\varphi_{\beta}^{w}\defeq\Pi_{\beta}$
(see Appendix \ref{sec:semiclass}) satisfy \[
\WF_{\h}(\Pi_{\beta})\subset\cE^{\delta},\quad\Pi_{\beta}^{*}=\Pi_{\beta},\quad\sum_{\beta}\Pi_{\beta}=\bbbone\quad\trm{microlocally\ near\ }\cE^{\delta/2}\,.\]
We will also consider a partition of unity $\{\ti\varphi_{\alpha}\}_{\alpha\in\cA}$
adapted to the cover $\cV,$ and its Weyl quantization $\ti\Pi\defeq\ti\varphi^{w}$.
In what follows, we will be interested in the propagator $\cU^{Nt_{0}+1}$
, and \[
N=T\log\h^{-1},\ T>0\,.\]
It is important to note that $T$ can be arbitrary large, but is fixed
with respect to $\h.$ The propagator $\cU^{Nt_{0}}$ is decomposed
by inserting $\sum_{\beta\in\cB_{t_{0}}}\Pi_{\beta}$ at each time
step of length $t_{0}$. Setting first $\cU_{\beta}=\cU^{t_{0}}\Pi_{\beta}$,
we have (microlocally near $\cE^{\delta/2}$) the equality $\cU^{t_{0}}=\sum_{\beta}\cU_{\beta}$,
and then \begin{equation}
\cU^{Nt_{0}}=\sum_{\beta_{1},\beta_{2},\dots,\beta_{N}\in\cB_{t_{0}}}\cU_{\beta_{N}}\dots\cU_{\beta_{1}}\,,\quad\trm{near}\ \cE^{\delta/2}.\label{eq:DecompPropag}\end{equation}

\subsection{Proof of Theorem \ref{thm:gap}}

We begin by choosing $\chi\in C_{0}^{\infty}(T^{*}M)$ such that $\supp\chi\Subset\cE^{\delta}$
and $\chi\equiv1$ on $\cE^{\delta/4}$, and considering $\oph(\chi)$.
Applying the Cauchy-Schwartz inequality, we get immediately 

\begin{equation}
\|\cU^{Nt_{0}+1}\oph(\chi)\|\leq\sum_{\beta_{1},\beta_{2},\dots,\beta_{N}\in\cB_{t_{0}}^{N}}\|\cU_{\beta_{N}}\dots\cU_{\beta_{1}}\cU^{1}\oph(\chi)\|+\cO_{L^{2}\to L^{2}}(\h^{\infty}).\label{eq:Insertion Identity}\end{equation}
Unless otherwise stated, the norms $\|\cdot\|$ always refer to $\|\cdot\|_{L^{2}\to L^{2}}$
or $\|\cdot\|_{L^{2}}$, according to the context. The proof of Theorem
\eqref{thm:gap} relies on the following intermediate result, proven
much later in Section \ref{sec:HypDispEst}. 
\begin{prop}
\label{pro:HypDispEst}\textbf{\textup{(Hyperbolic dispersion estimate)}}
Let $\eps>0$, and $\delta,\cV,t_{0}$ be as in Section \ref{sec:flows}.
For $N=Tt_{0}\log\h^{-1}$, $T>0$, take a sequence $\beta_{1},\dots\beta_{N}$
and $\cW_{1},\dots,\cW_{\beta_{N}}$ the associated open sets of the
refined cover $\cW$. Finally , let $\oph(\chi)$ be as above. There
exists a constant $C>0$ and $\h_{0}(\eps)\in]0,1[$ such that \[
\h\leq\h_{0}\ \Rightarrow\ \|\cU^{t_{0}}\Pi_{\beta_{N}}\dots\cU^{t_{0}}\Pi_{\beta_{1}}\cU^{1}\oph(\chi)\|\leq C\h^{-d/2}\prod_{j=1}^{N}\e^{\moy{a^{u}}_{t_{0},\beta_{j}}}\]
where $\moy{a^{u}}_{t_{0},\beta}=\sup_{\rho\in\cW_{\beta}}\sum_{j=0}^{t_{0}-1}a^{u}\circ\Phi^{j}(\rho).$
The constant $C$ only depends on the manifold $M$.
\end{prop}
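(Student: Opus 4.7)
The plan is to follow the strategy of Anantharaman--Nonnenmacher and Nonnenmacher--Zworski: view the composition
\[
A := \cU^{t_{0}}\Pi_{\beta_{N}}\cdots\cU^{t_{0}}\Pi_{\beta_{1}}\cU^{1}\oph(\chi)
\]
as a Fourier integral operator associated to pieces of the geodesic flow restricted to trajectories visiting the prescribed cells, and bound its $L^{2}\to L^{2}$ norm by propagating WKB states through the composition.

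First I would use an Egorov-type construction for the non-selfadjoint operator $\cP(z,\h)=-\h^{2}\Delta/2-\i\h q_{1/2}$ to represent each $\cU^{t_{0}}$ as a semiclassical FIO associated to $\Phi^{t_{0}}$: its principal symbol transports along the flow while the skew-adjoint subprincipal part $-\i\h q_{1/2}$ contributes a real multiplicative factor $\exp(-\int_{0}^{t_{0}}a\circ\Phi^{s}(\rho)\,ds)$ along each classical trajectory. Iterated $N$ times along the word $\beta=\beta_{1}\cdots\beta_{N}$ and dominated by the supremum over each cell $\cW_{\beta_{j}}$, this produces the $a$-contribution in the exponent $\moy{a^{u}}_{t_{0},\beta_{j}}$ of the bound.

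Next I would decompose an arbitrary unit datum, microlocalized by $\oph(\chi)$ near $\cE^{\delta}$, into WKB states supported on weak-unstable Lagrangian leaves and track each piece through the alternation of propagators and cell projectors. By the Anosov property \eqref{eq:Unstable}, each $\cU^{t_{0}}$ stretches these leaves in the unstable direction by a factor $(J^{u}_{t_{0}})^{-1}$; the subsequent truncation by $\Pi_{\beta_{j+1}}$ to a cell of fixed diameter $\lesssim\diam\cV$ discards most of the stretched amplitude, producing an $L^{2}$ factor $\sqrt{J^{u}_{t_{0}}(\rho_{j})}$. Combining with the damping factor above and using the cocycle identity for $J^{u}_{t_{0}}$ yields
\[
\prod_{j=1}^{N}\exp\!\left(-\int_{0}^{t_{0}}a\circ\Phi^{s}(\rho_{j})\,ds+\tfrac{1}{2}\log J^{u}_{t_{0}}(\rho_{j})\right)\leq\prod_{j=1}^{N}\e^{\moy{a^{u}}_{t_{0},\beta_{j}}}.
\]
The $\h^{-d/2}$ prefactor then appears when converting the pointwise WKB bound on the Schwartz kernel of $AA^{*}$ into an $L^{2}\to L^{2}$ norm by a stationary-phase/Schur argument, the transverse oscillatory integral contributing the usual semiclassical volume factor $\h^{d/2}$.

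The main obstacle is keeping the WKB construction valid up to the sub-Ehrenfest time $Nt_{0}=Tt_{0}\log\h^{-1}$ with $T$ arbitrarily large but fixed. The unstable leaves then stretch by factors of order $\h^{-T\lambda}$, and one must control the transport equation and its remainders at each of the $N$ iterates, check the uniform non-degeneracy of the phase in the $AA^{*}$ analysis so that no caustics appear, and ensure that all error terms in the microlocal insertion identity \eqref{eq:DecompPropag} remain summable. This is exactly where the Anosov structure, the adapted metric, and the specific choice of $t_{0}$ coming from \eqref{eq:PressApprox} are used, to obtain a constant $C$ independent of $\h$, of $N$, and of the particular word $\beta$.
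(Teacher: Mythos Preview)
Your outline follows the same Anantharaman--Nonnenmacher/Nonnenmacher--Zworski strategy as the paper, and the mechanism you identify (damping factor from the subprincipal symbol, unstable Jacobian factor from the Lagrangian geometry, $\h^{-d/2}$ from the kernel) is correct in spirit. Two points of your sketch, however, differ from the paper's implementation and one of them is a genuine obstacle.

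\textbf{Choice of initial Lagrangians.} You propose to decompose the cut-off datum into WKB states on weak-unstable leaves. In variable negative curvature the weak-unstable foliation is only H\"older (at best $C^{1+\alpha}$), so it cannot serve as the phase of a WKB state with controlled higher-order transport equations and remainders over $N\sim\log\h^{-1}$ steps. The paper avoids this by writing the Schwartz kernel of $\oph(\chi)$ as a superposition of states $\delta_{\chi,z_{0}}$ supported on the \emph{vertical} fibers $\Lambda^{0}=T_{z_{0}}^{*}M\cap\cE^{\delta}$, which are smooth. These fibers are uniformly transverse to the stable foliation, so under $\Phi^{t}$ they converge exponentially to the weak-unstable foliation; this convergence is what ultimately makes the Jacobian of the induced base map agree with $J^{u}$ up to a bounded multiplicative error (equation \eqref{eq:Prod Jac}), while all phases and amplitudes stay $C^{\infty}$.

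\textbf{Where the factor $\sqrt{J^{u}}$ comes from.} In the paper the unstable Jacobian does not arise from the truncation by $\Pi_{\beta_{j}}$ but from the half-density factor in the $0$-th transport equation: the principal symbol of the evolved WKB state carries $|\det d\phi^{-1}_{S_{j}}|^{1/2}$ at each step (see \eqref{eq:Symbole Principal}, \eqref{eq: Ppal Symbol}), and since $\Lambda^{j}$ converges to the unstable leaf this Jacobian is asymptotically $J^{u}$. The cutoffs $\Pi_{\beta_{j}}$ only ensure that the successive Lagrangians stay in fixed charts so that the WKB parametrix remains single-valued and projectible. Correspondingly, the $\h^{-d/2}$ is obtained not by an $AA^{*}$/Schur argument but directly from the superposition bound \eqref{eq:superposition}: $\|A\Psi\|\le C\sup_{z_{0}}\|A\delta_{\chi,z_{0}}\|\,\|\Psi\|$, together with the $(2\pi\h)^{-d/2}$ normalisation of the propagated state \eqref{eq:Ansatz1}.

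The delicate part you correctly flag---controlling the full symbol and remainders over $Nt_{0}\sim\log\h^{-1}$ steps---is handled by Lemma~\ref{lem:Cnorms}, whose key input is the uniform bound \eqref{eq:Jacob Bounds} on the contracting tangent map of $\phi_{S_{n}}^{-k}$; this yields only polynomial growth $(n+1)^{3k+\ell}$ of the $C^{\ell}$ norms of $b_{k}^{n}$, which is harmless against any positive power of $\h$.
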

We also state the following crucial consequence :
\begin{cor}
\label{cor:Decay Propag} Take $\eps>0$ such that $\pr+\eps<0$.
There exists $C>0$ and $\h_{0}(\eps)\in]0,1[$ such that \[
\h\leq\h_{0}\Rightarrow\|\cU^{Nt_{0}+1}\oph(\chi)\|\leq C\h^{-\frac{d}{2}}\e^{Nt_{0}(\Pr(a^{u})+\eps)}\]
The constant $C$ only depends on $M$. \end{cor}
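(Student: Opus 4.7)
The plan is entirely bookkeeping, using that Proposition \ref{pro:HypDispEst} has been granted. Starting from the triangle--inequality decomposition \eqref{eq:Insertion Identity}, I would apply the hyperbolic dispersion estimate term by term to obtain
\[
\|\cU^{Nt_{0}+1}\oph(\chi)\|\;\leq\; C\h^{-d/2}\sum_{\beta_{1},\dots,\beta_{N}\in\cB_{t_{0}}}\prod_{j=1}^{N}\e^{\moy{a^{u}}_{t_{0},\beta_{j}}}\;+\;\cO(\h^{\infty}),
\]
where the constant $C$ comes from Proposition \ref{pro:HypDispEst} and depends only on $M$. The key observation is that the right-hand summand depends on the $\beta_{j}$ only through a product of one-step weights, so the $N$-fold sum factorizes:
\[
\sum_{\beta_{1},\dots,\beta_{N}\in\cB_{t_{0}}}\prod_{j=1}^{N}\e^{\moy{a^{u}}_{t_{0},\beta_{j}}}\;=\;\left(\sum_{\beta\in\cB_{t_{0}}}\e^{\moy{a^{u}}_{t_{0},\beta}}\right)^{\!N}.
\]

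Next I would invoke the pressure approximation \eqref{eq:PressApprox}, which was exactly the reason for selecting the sub-cover $\cW$ associated with $\cB_{t_0}$: it gives
\[
\sum_{\beta\in\cB_{t_{0}}}\e^{\moy{a^{u}}_{t_{0},\beta}}\;\leq\;\exp\!\bigl(t_{0}(\press(a^{u})+\eps)\bigr).
\]
Raising to the $N$-th power and inserting back yields
\[
\|\cU^{Nt_{0}+1}\oph(\chi)\|\;\leq\; C\h^{-d/2}\,\e^{Nt_{0}(\press(a^{u})+\eps)}\;+\;\cO(\h^{\infty}),
\]
which is the desired bound up to absorbing the remainder. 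Since $N=T\log\h^{-1}$ with $T>0$ fixed and $\press(a^{u})+\eps<0$, the main term behaves like a positive power of $\h$ (times $\h^{-d/2}$), so the $\cO(\h^{\infty})$ error is negligible for $\h$ small enough; this fixes the threshold $\h_{0}(\eps)$.

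There is no real obstacle in this step: all the difficulty has been pushed into Proposition \ref{pro:HypDispEst} and into the construction of the cover $\cW$ ensuring \eqref{eq:PressApprox}. The only minor point to verify is that the constant $C$ is uniform in the choice of $(\beta_{1},\ldots,\beta_{N})$ and in $N$, which is guaranteed because Proposition \ref{pro:HypDispEst} produces a single $C$ valid for all admissible sequences in the logarithmic Ehrenfest window; the factorization argument then propagates this uniform constant through the sum without accumulating any extra factor depending on $N$.
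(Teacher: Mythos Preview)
Your proposal is correct and follows essentially the same route as the paper: apply Proposition~\ref{pro:HypDispEst} to each term of \eqref{eq:Insertion Identity}, factorize the resulting sum into an $N$-th power, and then invoke \eqref{eq:PressApprox}. The only cosmetic difference is that the paper explicitly notes that the number of terms $(\Card\cB_{t_0})^N=\h^{-T\log\Card\cB_{t_0}}$ is polynomial in $\h^{-1}$ when absorbing the $\cO(\h^\infty)$ remainder, whereas you keep the error outside the sum from the start (as \eqref{eq:Insertion Identity} already does) and compare it directly to the main term.
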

\begin{proof}
Given $\eps>0$, we choose $\delta,\cV,t_{0},\cW$ as in the preceding
proposition. Using \eqref{eq:Insertion Identity}, we then have \begin{eqnarray*}
\|\cU^{Nt_{0}+1}\oph(\chi)\| & \leq & C\h^{-\frac{d}{2}}\sum_{\beta_{1}\dots\beta_{N}\in\cB_{t_{0}}^{N}}\lp\prod_{j=1}^{N}\e^{\moy{a^{u}}_{t_{0},\beta_{j}}}+\cO(\h^{\infty})\rp\\
 & \leq & C\h^{-\frac{d}{2}}\lp\sum_{\beta\in\cB_{t_{0}}}\e^{\moy{a^{u}}_{t_{0},\beta}}\rp^{N}+\cO(\h^{\infty}).\end{eqnarray*}
To get the second line, notice that the number of terms in the sum
is of order $(\Card\cB_{t_{0}})^{N}=\h^{-T\log\Card\cB_{t_{0}}}$.
From our choice of $\eps$ and $\delta$, we can use \eqref{eq:PressApprox},
and for $\h$ small enough, rewrite this equation as \begin{eqnarray*}
\|\cU^{Nt_{0}+1}\oph(\chi)\| & \leq & C\h^{-\frac{d}{2}}\e^{Nt_{0}(\Pr^{\delta}(a^{u})+\eps/2)}\\
 & \leq & C\h^{-\frac{d}{2}}\e^{Nt_{0}(\Pr(a^{u})+\eps)}\end{eqnarray*}
where $C>0$ only depends on the manifold $M$. 
\end{proof}
Let us show how this result implies Theorem \ref{thm:gap}. We assume
that $\Psi$ satisfies \eqref{eq:eigenmodes}, and therefore $\|\cU^{Nt_{0}+1}\Psi\|=\e^{\frac{(Nt_{0}+1)\Im z}{\h}}$.
Notice also that we have \[
\oph(\chi)\Psi=\Psi+\cO(\h^{\infty})\]
since $\WF_{\h}(\Psi)\subset S^{*}M$, and then \[
\|\cU^{Nt_{0}+1}\oph(\chi)\Psi\|=\|\cU^{Nt_{0}+1}\Psi\|+\cO(\h^{\infty})=\e^{(Nt_{0}+1)\Im z}+\cO(\h^{\infty})\,.\]
It follows from the corollary that \[
\e^{\frac{Nt_{0}+1}{\h}\Im z}\leq C\h^{-\frac{d}{2}}\e^{Nt_{0}(\Pr(q^{u})+\eps)}+\cO(\h^{m})\,,\]
where $m$ can be arbitrary large. Taking the logarithm, this yields
to \[
\frac{\Im z}{\h}\leq\frac{\log C}{Nt_{0}}-\frac{d}{2Nt_{0}}\log\h+\press(q^{u})+\eps+\cO(\frac{1}{Nt_{0}})\,.\]
But given $\eps>0$, we can take $N=T\log\h^{-1}$ with $T$ arbitrary.
Hence there is $\h_{0}(\eps)\in]0,1[$ and $T$ sufficiently large,
such that \[
\h\leq\h_{0}(\eps)\Rightarrow\frac{\Im z}{\h}\leq\press(q^{u})+2\eps.\]
Since the parameter $\eps$ can be chosen as small as wished, this
proves Theorem \ref{thm:gap}.

\section{Eigenvalues expansion and energy decay\label{sec:Energy}}

\subsection{Resolvent estimates}

To show the exponential decay of the energy, we follow a standard
route from resolvent estimates in a strip around the real axis. Let
us denote \[
Q(z,\h)=-\frac{\h^{2}}{2}\Delta-z-\i\h\sqrt{2z}a(x)=\cP(z,\h)-z.\]
The following proposition establish a resolvent estimate in a strip
of width $|\pr+\eps|$ below the real axis in the semiclassical limit.
This is the main step toward Theorems \ref{thm:Eigenvalues expansion}
and \ref{thm:energy}, see also \cite{NZ2} for comparable resolvent
estimates in the chaotic scattering situation : 
\begin{prop}
Let $\eps>0$. Choose $\ga<0$ such that \[
\Pr(a^{u})+\eps<\ga<0\,,\]
and $z=\fr12+\h\zeta,$ with $|\zeta|=\cO(1)$ satisfying \[
\gamma\leq\Im\zeta\leq0\,.\]
There exists $\h_{0}(\eps)>0$, $C_{\eps}>0$ depending on $M,a$
and $\eps$ such that \[
\h\leq\h_{0}(\eps)\ \Rightarrow\ \|Q(z,\h)^{-1}\|_{L^{2}\to L^{2}}\leq C_{\eps}\h^{-1+c_{0}\Im\zeta}\lh\]
where $c_{0}=\frac{d}{2|\Pr(q^{u})+\eps|}$. \end{prop}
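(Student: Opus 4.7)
\textbf{The plan} is to reduce the resolvent bound to an estimate of a truncated integral representation of $Q(z,\h)^{-1}$ in terms of $\cU^{t}$, and then to compute that integral using Corollary~\ref{cor:Decay Propag}. First, fix $\chi\in C_{c}^{\infty}(T^{*}M)$ equal to $1$ on $\cE^{\delta/4}$ and supported in $\cE^{\delta}$, and set $\Pi=\oph(\chi)$. On $\supp(1-\chi)$ the semiclassical principal symbol of $Q(z,\h)$ is bounded away from zero, so a standard semiclassical elliptic parametrix gives $\|Q(z,\h)^{-1}(I-\Pi)\|_{L^{2}\to L^{2}}=\cO(1)$, reducing the problem to the estimate of $\|Q(z,\h)^{-1}\Pi\|$.

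The central identity, obtained by integrating the Schr\"odinger equation from $0$ to $T$, is
\[
Q(z,\h)^{-1}\Pi=R_{T}\Pi+Q(z,\h)^{-1}\,\e^{iTz/\h}\cU^{T}\Pi,\qquad R_{T}\defeq\frac{i}{\h}\int_{0}^{T}\e^{itz/\h}\cU^{t}\,dt,
\]
since $(\cP-z)R_{T}=I-\e^{iTz/\h}\cU^{T}$. I take $T=T_{0}\log\h^{-1}$. Because $|\e^{iTz/\h}|=\h^{T_{0}\Im\zeta}$ and Corollary~\ref{cor:Decay Propag} extends to continuous times via $\|\cU^{s}\|\leq 1$ on short intervals, one gets
\[
\|\e^{iTz/\h}\cU^{T}\Pi\|\leq C\,\h^{-d/2+T_{0}(\Im\zeta-\Pr(a^{u})-\eps)}.
\]
The hypothesis $\Im\zeta\geq\ga>\Pr(a^{u})+\eps$ allows one to choose $T_{0}=T_{0}(\eps,\ga)$ large enough that this quantity is smaller than any fixed power of $\h$. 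Combining with the elliptic control $\|Q^{-1}(I-\Pi)\|=\cO(1)$ and absorbing the term $\h^{M}\|Q^{-1}\|$ on the left, the whole matter reduces to showing $\|R_{T}\Pi\|\leq C_{\eps}\,\h^{-1+c_{0}\Im\zeta}\,\lh$.

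The main calculation is the integral bound
\[
\|R_{T}\Pi\|\leq\frac{1}{\h}\int_{0}^{T}\e^{-t\Im\zeta}\,\|\cU^{t}\oph(\chi)\|\,dt.
\]
I would use two complementary bounds for $\|\cU^{t}\oph(\chi)\|$: the trivial $\leq 1$ valid for every $t\geq 0$, and the hyperbolic $C\h^{-d/2}\e^{t(\Pr(a^{u})+\eps)}$ coming from Corollary~\ref{cor:Decay Propag}. These coincide at the crossover time $t^{*}=c_{0}\lh+\cO(1)$ with $c_{0}=d/\bigl(2|\Pr(a^{u})+\eps|\bigr)$. On $[0,t^{*}]$ the integrand is bounded by $\e^{-t\Im\zeta}$, contributing $\cO\bigl(\lh\,\h^{c_{0}\Im\zeta}\bigr)$. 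On $[t^{*},T]$ the exponent $\Pr(a^{u})+\eps-\Im\zeta$ is strictly negative by the hypothesis on $\ga$, so the integral is dominated by its value at $t^{*}$ and yields $\cO_{\eps}(\h^{c_{0}\Im\zeta})$. Summing the two gives $\|R_{T}\Pi\|\leq C_{\eps}\,\h^{-1+c_{0}\Im\zeta}\,\lh$, which finishes the argument.

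\textbf{Main obstacle.} The exponent $c_{0}=d/(2|\Pr(a^{u})+\eps|)$ is dictated precisely by the balance between the semiclassical loss $\h^{-d/2}$ and the hyperbolic gain $\e^{t(\Pr(a^{u})+\eps)}$ at $t^{*}$; reproducing its \emph{exact} linear dependence on $\Im\zeta$, uniformly over the whole strip $\gamma\leq\Im\zeta\leq0$, requires the sharp pressure bound of Corollary~\ref{cor:Decay Propag} (a cruder $L^{2}$ estimate would spoil the slope) and, in the absorption step, a choice of $T_{0}$ affecting only the implicit constant and the $\lh$ factor, never the slope itself.
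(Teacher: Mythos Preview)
Your proposal is correct and follows essentially the same approach as the paper. The paper packages the argument as a right parametrix construction---it builds $A_{0}+A_{1}$ with $Q(A_{0}+A_{1})=I+R$, $\|R\|$ small, and inverts by Neumann series---whereas you write the equivalent identity $Q^{-1}\Pi=R_{T}\Pi+Q^{-1}\e^{iTz/\h}\cU^{T}\Pi$ and absorb the small $\|Q^{-1}\|$ term; your $R_{T}\Pi$ is exactly the paper's $A_{1}$, your elliptic step is their $A_{0}$, and your crossover time $t^{*}=c_{0}\lh$ is their $T_{\h}^{0}$, so the two computations are line-for-line the same.

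One cosmetic point: your statement ``a standard semiclassical elliptic parametrix gives $\|Q^{-1}(I-\Pi)\|=\cO(1)$'' hides a residual $Q^{-1}\cdot\cO(\h^{\infty})$ term (the parametrix only gives $QA_{0}=I-\Pi+\cO(\h^{\infty})$), but since you already absorb $\h^{M}\|Q^{-1}\|$ on the left, this extra $\cO(\h^{\infty})\|Q^{-1}\|$ is harmless and can be absorbed simultaneously.
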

\begin{proof}
Given $\eps>0$, we fix $\h_{0}(\eps)$ so that Corollary \ref{cor:Decay Propag}
holds. Finally, we define \[
\Pr(a^{u})^{+}=\Pr(a^{u})+\eps.\]
 In order to bound $Q(z,\h)^{-1}$, we proceed in two steps, by finding
two operators which approximate $Q^{-1}$: one on the energy surface
$\cE^{\delta}$, the other outside $\cE^{\delta}$. Let $\chi$ be
as in Section \ref{sec:Spectral Gap}, and choose also Let $\ti\chi\in C_{0}^{\infty}(T^{*}M)$
with $\supp\ti\chi\Subset\supp\chi$, such that we also have $\ti\chi=1$
near $S^{*}M$. We first look for an operator $A_{0}=A_{0}(z,\h)$
such that \[
QA_{0}=(1-\oph(\chi))+\cO_{L^{2}\to L^{2}}(\h^{\infty}).\]
For this, consider $Q(z,\h)+\i\oph(\ti\chi)\defeq Q_{0}(z,\h)$. Because
of the property of $\ti\chi$, the operator $Q_{0}$ is elliptic.
Hence, there is an operator $\ti A_{0}$, uniformly bounded in $L^{2}(M)$,
such that \[
Q_{0}\ti A_{0}=\Id+\cO_{L^{2}\to L^{2}}(\h^{\infty}).\]
The operator $A_{0}$ we are looking for is obtained by taking $A_{0}=\ti A_{0}(1-\oph(\chi))$.
Indeed, \begin{eqnarray*}
Q(z,\h)A_{0}(z,\h) & = & 1-\oph(\chi)-\i\oph(\ti\chi)\ti A_{0}(1-\oph(\chi))+\cO_{L^{2}\to L^{2}}(\h^{\infty})\\
 & = & 1-\oph(\chi)+\cO_{L^{2}\to L^{2}}(\h^{\infty})\end{eqnarray*}
since $\ti\chi$ and $1-\chi$ have disjoints supports by construction. 

We now look for the solution on $\cE^{\delta}$. From Corollary \ref{cor:Decay Propag},
we have an exponential decay of the propagator $\cU^{Nt_{0}}$ if
$N$ becomes large. To use this information, we set \[
A_{1}(z,\h,T_{\h})=\frac{\i}{\h}\int_{0}^{T_{\h}}\cU^{t}\e^{\frac{\i t}{\h}z}\oph(\chi)\, dt\]
where $T_{\h}$ has to be adjusted. Hence, \[
Q(z,\h)A_{1}(z,\h,T_{\h})=\oph(\chi)-\cU^{t}\e^{\frac{\i t}{\h}z}\vert_{t=T_{\h}}\oph(\chi)=\oph(\chi)+R_{1}\,.\]
Since $\Im z/\h=\Im\zeta=\cO(1)$, we have \begin{equation}
\|R_{1}\|=\e^{-T_{\h}\Im\zeta}\|\cU^{T_{\h}}\oph(\chi)\|.\label{eq:T1 Remainder}\end{equation}
From Corollary \ref{cor:Decay Propag}, we know that for $N=Tt_{0}\log\h^{-1}$
with $\h\leq\h_{0}$, \[
\|\cU^{Nt_{0}+1}\oph(\chi)\|\leq C\h^{-\frac{d}{2}}\e^{Nt_{0}\Pr(a^{u})^{+}}.\]
We observe that this bound is useful only if does not diverge as $\h\to0$,
which is the case if \[
T\geq\frac{d}{2t_{0}|\Pr(a^{u})^{+}|}\defeq T_{0}.\]
Let us define \begin{equation}
T_{\h}^{0}=T_{0}t_{0}\log\h^{-1}+1=\frac{d}{2|\Pr(a^{u})^{+}|}\log\h^{-1}+1\:.\label{eq:T0}\end{equation}
If $T_{\h}=T\log\h^{-1}t_{0}+1$, with $T>T_{0}$ chosen large enough,
we find \begin{eqnarray*}
\|R_{1}\| & \leq & C\h^{Tt_{0}\gamma}\h^{-\frac{d}{2}-Tt_{0}\pra}=\cO(\h^{m})\end{eqnarray*}
with $m=m(T_{\h})\geq0$, since $\gamma-\pra>0$. Consequently, there
is $T_{1}=T_{1}(\eps)>0$ such that $T_{\h}^{1}=T_{1}t_{0}\log\h^{-1}+1$
satisfies $m(T_{\h}^{1})=0.$ This means that for $T_{\h}\geq T_{\h}^{1}$,
we have \[
Q(z,\h)(A_{0}(z,\h)+A_{1}(z,\h,T_{h}))=1+\cO_{L^{2}\to L^{2}}(1),\]
in other words, $A_{0}+A_{1}$ is {}``close'' to the resolvent $Q^{-1}$
. Hence, we impose now $T_{\h}\geq T_{\h}^{1}$, and evaluate the
norms of $A_{0}$ and $A_{1}$. By construction, $\|A_{0}\|=\cO(1)$.
For $A_{1}$, we have to estimate an integral of the form \[
I_{T_{\h}}=\int_{0}^{T_{\h}}\e^{-t\Im\zeta}\|\cU^{t}\oph(\chi)\|dt.\]
Let us split the integral according to $T_{\h}^{0}$, and use the
decay of $\cU^{t}\oph(\chi)$ for $t\geq T_{\h}^{0}$ : \begin{eqnarray*}
|I_{T_{\h}}| & \leq & T_{\h}^{0}\e^{-T_{\h}^{0}\Im\zeta}+C\h^{-\frac{d}{2}}\int_{T_{\h}^{0}}^{\infty}\e^{-t\Im\zeta}\e^{(t-1)\pra}dt\\
 & \leq & T_{\h}^{0}\e^{-T_{\h}^{0}\Im\zeta}(1+C_{\eps}\h^{-\frac{d}{2}}\e^{(T_{h}^{0}-1)\pra})\\
 & \leq & C_{\eps}T_{\h}^{0}\e^{-T_{\h}^{0}\Im\zeta}\:.\end{eqnarray*}
 Using \eqref{eq:T0}, this gives \[
\|A_{1}(z,\h,T_{\h})\|\leq C_{\eps}\h^{-1+c_{0}\Im\zeta}\log\h^{-1}\]
where $C_{\eps}>0$ depends now on $M$, $a$ and $\eps$ while \begin{equation}
c_{0}=\frac{d}{2|\pra|}.\label{eq:c0}\end{equation}

\end{proof}
We now translate these results obtained in the semiclassical settings
in terms of $\tau$. Recall \[
P(\tau)=-\Delta-\tau^{2}-2\i a\tau\equiv\frac{1}{\h^{2}}Q(z,\h).\]
and set $R(\tau)\defeq P(\tau)^{-1}.$ The operator $R(\tau)$ is
directly related to the resolvent $(\tau-\cB)^{-1}$ : a straightforward
computation shows that \[
(\tau-\cB)^{-1}=\left(\begin{array}{cc}
R(\tau)(-2\i a-\tau) & -R(\tau)\\
R(\tau)(2\i a\tau-\tau^{2}) & -R(\tau)\tau\end{array}\right).\]

\begin{prop}
\label{pro:Resolvent Tau}Let $\eps>0$ be such that $\Pr(a^{u})+\eps<0$,
and $\ga<0$ satisfying \[
\Pr(a^{u})+\eps<\ga<0.\]
Let $\tau\in\C\setminus\Spec\cB$ be such that $\gamma\leq\Im\tau<0$.
Set also $\moy{\tau}=(1+|\tau|^{2})^{\fr12}$. There exists a constant
$C>0$ depending on $M,a$ and $\eps$ such that for any $\ka\geq d/2$,
we have

\begin{eqnarray*}
(i) &  & \|R(\tau)\|_{L^{2}\to L^{2}}\leq C_{\eps}\moy{\tau}^{-1-c_{0}\ga}\log\moy{\tau}\\
(ii) &  & \|R(\tau)\|_{L^{2}\to H^{2}}\leq C_{\eps}\moy{\tau}^{1-c_{0}\ga}\log\moy{\tau}\\
(iii) &  & \|R(\tau)\|_{H^{\ka}\to H^{1}}\leq C_{\eps}\\
(iv) &  & \|\tau R(\tau)\|_{H^{\ka}\to H^{0}}\leq C_{\eps}\,.\end{eqnarray*}
\end{prop}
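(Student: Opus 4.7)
Items $(i)$ and $(ii)$ will follow directly from the semiclassical resolvent bound just established. For $(i)$, I would set $\h = 1/\Re\tau$, so that $R(\tau) = \h^2 Q(z,\h)^{-1}$ with $z = \fr12 + \h\zeta$ and $\Im\zeta = \Im\tau + \cO(\h)$; the bound $\|Q(z,\h)^{-1}\|_{L^2\to L^2} \leq C_\eps \h^{-1+c_0\Im\zeta}\log\h^{-1}$ from the preceding proposition, together with $\Im\tau \geq \gamma$ and $c_0 > 0$, then translates into $(i)$. For $(ii)$, elliptic regularity suffices: writing $-\Delta u = f + (\tau^2 + 2\i a\tau) u$ for $u = R(\tau)f$ gives $\|u\|_{H^2} \leq C(\|f\|_{L^2} + \moy{\tau}^2 \|u\|_{L^2})$, and inserting $(i)$ yields $(ii)$.

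The substantial point is $(iii)$ and $(iv)$: interpolating $(i)$ and $(ii)$ produces only $\|R(\tau)\|_{L^2\to H^1} = \cO(\moy{\tau}^{-c_0\gamma}\log\moy{\tau})$, which diverges since $-c_0\gamma > 0$. To obtain a uniform bound, I would exploit the $H^\ka$-smoothness of the input through the same parametrix
\[ R(\tau) = \h^2 A_0 + \h^2 A_1 + \cO_{L^2\to L^2}(\h^\infty) \]
used in the proof of the preceding proposition, with $A_0$ an elliptic parametrix and $A_1 = \frac{\i}{\h}\int_0^{T_\h}\cU^t \e^{\i tz/\h}\oph(\chi)\,dt$. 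The elliptic contribution is harmless: $A_0$ maps $H^s \to H^{s+2}$ uniformly in $\h$, so $\h^2 A_0 \colon H^\ka \to H^1$ has norm $\cO(\h^2)$. The crucial observation for $A_1$ is the symbolic identity
\[ \oph(\chi)(1-\Delta)^{-\ka/2} = \h^\ka \oph(\chi_\h) + \cO(\h^\infty),\qquad \chi_\h(x,\xi) = \chi(x,\xi)\cdot (\h^2 + g_x(\xi,\xi))^{-\ka/2}, \]
valid because $\chi$ is supported in $\cE^\delta$ where $g_x(\xi,\xi)$ is bounded away from zero, so that $\chi_\h$ lies uniformly in $S^0$. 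This gives $\|\oph(\chi)\|_{H^\ka\to L^2} \leq C\h^\ka$, and combining with Corollary \ref{cor:Decay Propag} yields
\[ \|\cU^t \oph(\chi)\|_{H^\ka \to L^2} \leq C\min\bigl(\h^\ka,\ \h^{\ka - d/2}\e^{t\pra}\bigr), \]
the two regimes matching at the Ehrenfest-type time $T_\h^0 \simeq c_0\log\h^{-1}$.

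To establish $(iii)$, I would use the reverse Bernstein estimate $\|\cU^t\oph(\chi) f\|_{H^1} \leq C\h^{-1} \|\cU^t \oph(\chi) f\|_{L^2}$, valid because $\cU^t \oph(\chi) f$ is microlocalized in a fixed compact subset of $T^*M$, and then split the integral for $A_1$ at $t = T_\h^0$. A direct computation shows that, after multiplication by the $\h$ coming from $\h^2 A_1$, both the short-time piece (where $\h^{\ka-1}$ dominates) and the long-time piece (where the dispersion bound dominates) produce a contribution of order $C_\eps \h^{\ka - c_0|\Im\zeta|}$: the exponent comes from $\e^{T_\h^0|\Im\zeta|} = \h^{-c_0|\Im\zeta|}$. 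Since $|\Im\zeta| \leq |\gamma| < |\pra|$ and $c_0 |\pra| = d/2$, we have $c_0|\Im\zeta| < d/2 \leq \ka$, so $\h^{\ka - c_0|\Im\zeta|} = \cO(1)$, proving $(iii)$. Part $(iv)$ is established by the same splitting applied directly to $\|\h^2 A_1 f\|_{L^2}$ and then multiplying by $|\tau|\sim \h^{-1}$; the absence of the $\h^{-1}$-loss from the Bernstein step is compensated exactly by this prefactor. The main obstacle is the tight matching of exponents in these two regimes: the hypothesis $\ka \geq d/2$ is sharp in the limit $|\gamma|\to|\pra|$, and both contributions must be tracked carefully so as to combine to the same power $\h^{\ka - c_0|\Im\zeta|}$.
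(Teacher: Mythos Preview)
Your treatment of $(i)$ and $(ii)$ matches the paper's. For $(iii)$ and $(iv)$, your argument is correct but takes a substantially more technical route than the paper does.

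The paper never revisits the parametrix $A_0+A_1$. Instead, it interpolates directly between $(i)$ and $(ii)$: the H\"older-type inequality
\[
\|R(\tau)u\|_{H^{1-s}}^{2}\leq\|R(\tau)u\|_{H^{2}}^{1-s}\|R(\tau)u\|_{L^{2}}^{1+s}
\]
gives $\|R(\tau)\|_{L^2\to H^{1-s}}\leq C_{\eps}\moy{\tau}^{-(\gamma c_0+s)}\log\moy{\tau}$, which is bounded once $s>-\gamma c_0$. One then shifts this to $\|R(\tau)\|_{H^{s}\to H^{1}}\leq C_{\eps}$ (using that $R(\tau)$ is pseudodifferential of order $-2$), and the condition $s>-\gamma c_0$ together with $c_0=\frac{d}{2|\pra|}$ and $|\gamma|<|\pra|$ forces exactly $\ka\geq d/2$. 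Part $(iv)$ follows by the same interpolation with an extra factor $|\tau|$.

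Your approach---pulling the $H^\ka$-gain through $\oph(\chi)$ as a factor $\h^\ka$, then redoing the time-splitting of $A_1$ with this improved input---arrives at the same exponent $\h^{\ka-c_0|\Im\zeta|}$ and the same threshold $\ka\geq d/2$. It has the advantage of making the origin of the exponent completely transparent at the dynamical level, and avoids the regularity-shift step $H^0\to H^{1-s}\;\Rightarrow\;H^{s'}\to H^{s'+1-s}$, which in the paper is stated without comment and requires knowing that $R(\tau)$ behaves well under conjugation by $(1-\Delta)^{s'/2}$. The cost is that you re-enter the semiclassical machinery (Bernstein estimates, microlocal cutoffs, the dispersion bound) where two lines of interpolation suffice.
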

\begin{proof}
$(i)$ follows directly from rescaling the statements of the preceding
proposition. For $(ii),$ observe that \[
\|R(\tau)u\|_{H^{2}}\leq C(\|R(\tau)u\|_{L^{2}}+\|\Delta R(\tau)u\|_{L^{2}})\,,\ C>0.\]
But \[
\|\Delta R(\tau)u\|_{L^{2}}\leq\|u\|_{L^{2}}+|\tau^{2}+2\tau\i a|\|R(\tau)u\|_{L^{2}}\,,\]
so using $(i)$, we get \[
\|R(\tau)u\|_{H^{2}}\leq C\lp(1+|\tau^{2}+2\i a\tau|)\|R(\tau)u\|_{L^{2}}+\|u\|_{L^{2}}\rp\leq C_{\eps}\moy{\tau}^{1-c_{0}\gamma}\log\moy{\tau}\|u\|_{L^{2}}\,.\]
To arrive at $(iii)$, we start from the following classical consequence
of the Hölder inequality : \begin{equation}
\|R(\tau)u\|_{H^{1-s}}^{2}\leq\|R(\tau)u\|_{H^{2}}^{1-s}\|R(\tau)u\|_{L^{2}}^{1+s}\,,\ \ s>0.\label{eq:Interpolation}\end{equation}
From $(i)$ and $(ii)$, we obtain \[
\|R(\tau)u\|_{H^{1-s}}\leq C_{\eps}\moy{\tau}^{-(\ga c_{0}+s)}\log\moy{\tau}\|u\|_{L^{2}}\,.\]
If we choose $s>-\ga c_{0}$, we get $\|R(\tau)\|_{H^{0}\to H^{1-s}}\leq C_{\eps}.$
Hence, for any $s'\geq0$ we have \[
\|R(\tau)\|_{H^{s'}\to H^{s'+1-s}}\leq C_{\eps}\,.\]
Taking $s'=s$ shows $(iii)$, where we must have $\ka>-\ga c_{0}$.
In view of \eqref{eq:c0}, and the fact that $\ga\geq\pr+\ti\eps$,
this condition is satisfied as soon as $\ka\geq d/2$. The last equation
$(iv)$ is derived as $(iii)$, by considering \[
\|\tau R(\tau)u\|_{H^{1-s}}^{2}\leq|\tau|^{2}\|R(\tau)u\|_{H^{2}}^{1-s}\|R(\tau)u\|_{L^{2}}^{1+s}\,,\ \ s>0,\]
 and choosing $s$ so that $\|\tau R(\tau)\|_{H^{0}\to H^{1-s}}^{2}\leq C_{\eps}$. 
\end{proof}

\subsection{Eigenvalues expansion }

We now prove Theorem \ref{thm:Eigenvalues expansion}. Let us fix
$\eps>0$ so that $\pr+\eps<0$. From Theorem \ref{thm:gapW} we know
that \[
\Card\lp\Spec\cB\cap(\R+\i[\pr+\eps,0])\rp\defeq n(\eps)<\infty.\]
Hence there is $e_{0}(\eps)>0$ such that $\Spec\cB\cap(\R+\i[\pr+\eps])\subset\Omega$,
where \[
\Omega=\Omega(\eps)=[-e_{0},e_{0}]+\i[\pr+\eps,0].\]
 We then call $\{\tau_{0},\dots,\tau_{n(\eps)-1}\}=\Spec\cB\cap\Omega$,
and set by convention $\tau_{0}=0$. We define as above $\pra=\pr+\eps$.
Since we look at the eigenvalues $\tau\in\Omega$, let us introduce
the spectral projectors on the generalized eigenspace $E_{j}$ for
$j\in\llb0,n-1\rrb$ :\[
\Pi_{j}=\frac{1}{2\i\pi}\oint_{\gamma_{j}}(\tau-\cB)^{-1}d\tau\,,\quad\Pi_{j}\in\cL(\cH,D(\cB^{\infty})),\]
where $\ga_{j}$ are small circles centered in $\tau_{j}$. We also
denote by \[
\Pi=\sum_{j=0}^{n}\Pi_{j}\]
the spectral projection onto $\bigoplus_{j=0}^{n}E_{j}$. We call
$E_{0}$ the eigenspace corresponding to the eigenvalue $\tau_{0}=0$.
It can be shown \cite{Leb} that $E_{0}$ is one dimensional over
$\C$ and spanned by $(1,0)$, so \[
\Pi_{0}\bs\omega=(c(\bs\omega),0)\,\ \ \ \trm{with}\ \ c(\bs\omega)\in\C.\]
 Let now $\bs\omega=(\omega_{0},\omega_{1})$ be in $\cH^{\ka}$.
Near a pole $\tau_{j}$ of $(\tau-\cB)^{-1}$, we have \[
(\tau-\cB)^{-1}=\frac{\Pi_{j}}{\tau-\tau_{j}}+\sum_{k=2}^{m_{j}}\frac{(\cB-\tau_{j})^{k-1}\Pi_{j}}{(\tau-\tau_{j})^{k}}+H_{j}(\tau)\]
where $H_{j}$ is an operator depending holomorphically on $\tau$
in a neighbourhood of $\tau_{j}$, and $m_{j}$ is the multiplicity
of $\tau_{j}$. Since $\Pi\in\cL(\cH,D(\cB^{\infty})),$ we have the
following integral representation of $\e^{-\i t\cB}\Pi\bs\omega$,
with absolute convergence in $\cH$: \begin{equation}
\e^{-\i t\cB}\Pi\bs\omega=\frac{1}{2\i\pi}\int_{-\infty+\i\alpha}^{+\infty+\i\alpha}\e^{-\i t\tau}(\tau-\cB)^{-1}\Pi\bs\omega d\tau\,,\ \ \ t>0,\ \alpha>0.\label{eq:Semigroup Integral}\end{equation}
The integrand in the right hand side has poles located at $\tau_{j}$,
$j\in\llb0,n-1\rrb$, so that \begin{eqnarray*}
\e^{-\i t\cB}\Pi\bs\omega & = & \sum_{j}\frac{1}{2\i\pi}\oint_{\gamma_{j}}\e^{-\i t\tau}\frac{\Pi_{j}}{\tau-\tau_{j}}\bs\omega d\tau\\
 & = & \sum_{j}\e^{-\i t\tau_{j}}p_{\tau_{j}}(t)\bs\omega\defeq\sum_{j}\e^{-\i t\tau_{j}}\bs u_{j}(t)\,.\end{eqnarray*}
The operators $p_{\tau_{j}}(t)$ appearing in the residues are polynomials
in $t$, with degree at most $m_{j}$, taking their values in $\cL(\cH,D(\cB^{\infty}))$.
It follows that for some $C>0$ depending only on $M$ and $a$, \[
\|\bs u_{j}(t)\|_{\cH}\leq Ct^{m_{j}}\|\bs\omega\|_{\cH}\,.\]
 The remainder term appearing in Theorem \ref{thm:Eigenvalues expansion}
is now identified : \[
\bold r_{n}(t)=\e^{-\i t\cB}(1-\Pi)\bs\omega\,.\]
To conclude the proof, we have therefore to evaluate $\|\bold r_{n}\|_{\cH}$.
To do so, we will use in a crucial way the resolvent bounds below
the real axis that we have obtained in the preceding section. We consider
the solution $u(t,x)$ of \eqref{eq:DWE} with initial data $\bs u=(u_{0},u_{1})=(1-\Pi)\bs\omega$,
with $\bs\omega\in\cH^{\ka}$, $\ka\geq d/2$. Let us define $\chi\in C^{\infty}(\R)$,
$0\leq\chi\leq1$, such that $\chi=0$ for $t\leq0$ and $\chi=1$
for $t\geq1$. If we set $v=\chi u$, we have \begin{equation}
(\d_{t}^{2}-\Delta+2a\d_{t})v=g_{1}\label{eq:u1g1}\end{equation}
where \begin{equation}
g_{1}=\chi''u+2\chi'\d_{t}u+2a\chi'u\,.\label{eq:g1}\end{equation}
Note also that $\supp g_{1}\subset[0,1]\times M$, and $v(t)=0$ for
$t\leq0$. Let us denote the inverse Fourier transform in time by
\[
\cF_{t\to-\tau}:u\mapsto\check{u}(\tau)=\int_{\R}\e^{\i t\tau}u(t)dt.\]
Applying $\cF_{t\to-\tau}$ (in the distributional sense) to both
sides of \eqref{eq:u1g1} yields to \[
P(\tau)\check{v}(\tau,x)=\check{g}_{1}(\tau,x)\,.\]
We then remark that $R(\tau)\check{g}_{1}(\tau,x)$ is the first component
of \[
\i(\tau-\cB)^{-1}\cF_{t\to-\tau}\lp\chi'(t)(u,\i\d_{t}u)\rp.\]
From the properties of $\Pi$, it is clear that the operator $(\tau-\cB)^{-1}(1-\Pi)$
depends holomorphically on $\tau$ in the half-plane $\Im\tau\geq\pra$.
From $(u,\i\d_{t}u)=\e^{-\i t\cB}(1-\Pi)\bs\omega$, we then conclude
that $\i(\tau-\cB)^{-1}\cF_{t\to-\tau}\lp\chi'(t)(u,\i\d_{t}u)\rp$
depends also holomorphically on $\tau$ in the half plane $\Im\tau\geq\pra$.
Hence $\check{v}(\tau,x)=R(\tau)\check{g}_{1}(\tau,x)$ and an application
of the Parseval formula yields to \begin{eqnarray*}
\|\e^{-t\pra}v(t,x)\|_{L^{2}(\R_{+},H^{1})} & = & \|\check{v}(\tau+\i\pra\|_{L^{2}(\R,H^{1})}\\
 & = & \|R(\tau+\i\pra)\check{g}_{1}(\tau+\i\pra,x)\|_{L^{2}(\R,H^{1})}\\
 & \leq & C_{\eps}\|\check{g}_{1}(\tau+\i\pra,x)\|_{L^{2}(\R,H^{\ka})}\\
 & \leq & C_{\eps}\|g_{1}(t,x)\|_{L^{2}(\R_{+},H^{\ka})}\,.\end{eqnarray*}
where we have used Proposition \ref{pro:Resolvent Tau}. The term
appearing in the last line can in fact be controlled by the initial
data. From \eqref{eq:g1}, we have \begin{equation}
\|g_{1}\|_{L^{2}(\R_{+};H^{\ka})}\leq C\lp\|u\|_{L^{2}([0,1];H^{\ka})}+\|\d_{t}u\|_{L^{2}([0,1];H^{\ka})}\rp\,.\label{eq:g1Bound}\end{equation}
A direct computation shows \[
\d_{t}\|u\|_{\cH^{\ka}}^{2}\leq C(\|u\|_{\cH^{\ka}}^{2}+\|\d_{t}u\|_{H^{\ka}}^{2}+\|\nabla u\|_{H^{\ka}}^{2})\,.\]
The Gronwall inequality for $t\in[0,1]$ gives \begin{eqnarray*}
\|u(t,\cdot)\|_{H^{\ka}}^{2} & \leq & C\lp\|u(0,\cdot)\|_{H^{\ka}}^{2}+\int_{0}^{t}(\|\d_{s}u(s)\|_{H^{\ka}}^{2}+\|\nabla u(s)\|_{H^{\ka}}^{2})ds\rp\\
 & \leq & C\|\bs\omega\|_{\cH^{\ka}}^{2},\end{eqnarray*}
since the $\ka-$energy \[
E^{\ka}(t,u)=\fr12(\|\d_{t}u\|_{H^{\ka}}^{2}+\|\nabla u\|_{H^{\ka}}^{2})\]
is also decreasing in $t$. Coming back to \eqref{eq:g1Bound}, we
see that $\|g_{1}\|_{L^{2}(\R_{+};H^{\ka})}\leq C\|\bs\omega\|_{\cH^{\ka}}$
and then, \[
\|\e^{-t(\pr+\eps)}v(t,x)\|_{L^{2}(\R_{+},H^{1})}\leq C_{\eps}\|\bs\omega\|_{\cH^{\ka}}.\]
This is the exponential decay we are looking for, but in the integrated
form. It is now easy to see that\[
\|u(t,\cdot)\|_{H^{1}}\leq C_{\eps}\e^{t(\pr+\eps)}\|\bs\omega\|_{\cH^{\ka}}\,.\]
We have to check that the same property is valid for $\d_{t}u$. Using
the same methods as above, we also have \[
P(\tau)\cF_{t\to-\tau}(\d_{t}v)=-\tau\check{g}_{1}(\tau),\]
and then, $\cF_{t\to-\tau}(\d_{t}v)=-\tau R(\tau)\check{g}_{1}(\tau)$.
It follows that\begin{eqnarray*}
\|\e^{-t\pra}\d_{t}v(t,x)\|_{L^{2}(\R_{+},H^{0})} & = & \|\check{v}(\tau+\i\pra\|_{L^{2}(\R,H^{0})}\\
 & = & \|\tau R(\tau+\i\pra)\check{g}_{1}(\tau+\i\pra,x)\|_{L^{2}(\R,H^{0})}\\
 & \leq & C_{\eps}\|\check{g}_{1}(\tau+\i\pra,x)\|_{L^{2}(\R,H^{\ka})}\\
 & \leq & C_{\eps}\|g_{1}(t,x)\|_{L^{2}(\R_{+},H^{\ka})}\,.\end{eqnarray*}
Grouping the results, we see that \[
\|\bold u\|_{\cH}\leq C_{\eps}\e^{t(\pr+2\eps)}\|\bs\omega\|_{\cH^{\ka}}\;\]
and this concludes the proof of Theorem \ref{thm:Eigenvalues expansion}.

\subsection{Energy decay}

We end this section with the proof of the Theorem \ref{thm:energy},
which gives the exponential energy decay. This is an immediate consequence
of the following lemma, that tells us that the energy can be controlled
by the $H^{1}$ norm of $u$, for $t\geq2$:
\begin{lem}
\label{lem:Energy}There exists $C>0$ such that for any solution
$u$ of \eqref{eq:DWE} and $E(u,t)$ the associated energy functional,
we have \[
E(u,T)\leq C\|u\|_{L^{2}([T-2,T+1];H^{1})}^{2}\,,\ \ T\geq2.\]
\end{lem}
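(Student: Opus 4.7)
The plan is to derive a time-integrated version of the inequality, then pass to the pointwise value at $T$ via the monotonicity of $E(u,\cdot)$. Fix a nonnegative cutoff $\phi\in C_c^\infty((T-2,T+1))$ with $\phi\equiv 1$ on $[T-1,T]$, and set $\chi=\phi^2$. The central computation is to multiply the damped wave equation \eqref{eq:DWE} by $\chi(t)\,\bar u(t,x)$, integrate over $\R\times M$, take the real part, and integrate by parts: once in $t$ for the $\partial_t^2 u$ term, once in $x$ for the $-\Delta u$ term (producing $\int\!\int \chi\,|\nabla u|^2$), and once more in $t$ for the damping term after writing $2\Re(a\chi\bar u\,\partial_t u)=a\chi\,\partial_t|u|^2$. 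The resulting identity reads
\[
\int_\R\!\int_M \chi\,|\partial_t u|^2\,\dvol\,dt = \int_\R\!\int_M \chi\,|\nabla u|^2\,\dvol\,dt - \Re\!\int_\R\!\int_M \chi'\,\bar u\,\partial_t u\,\dvol\,dt - \int_\R\!\int_M a\,\chi'\,|u|^2\,\dvol\,dt.
\]

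The first and third terms on the right are bounded directly by $C\|u\|_{L^2([T-2,T+1];H^1)}^2$, since $\chi,\chi',a$ are uniformly bounded and supported in $[T-2,T+1]$. The genuine obstacle is the middle cross term: a naive Young's inequality applied to $\chi'\,\bar u\,\partial_t u$ produces a $|\partial_t u|^2$ contribution that is not a priori absorbable by the left-hand side, because $\chi$ can vanish on the support of $\chi'$. The choice $\chi=\phi^2$ is tailored precisely for this: since $\chi'=2\phi\phi'$, Young's inequality $2|\alpha\beta|\le \tfrac12|\alpha|^2+2|\beta|^2$ applied with $\alpha=\phi\,\partial_t u$, $\beta=\phi'\bar u$ gives
\[
\left|\Re\!\int_\R\!\int_M \chi'\,\bar u\,\partial_t u\,\dvol\,dt\right| \le \tfrac{1}{2}\!\int_\R\!\int_M \chi\,|\partial_t u|^2\,\dvol\,dt + 2\!\int_\R\!\int_M (\phi')^2\,|u|^2\,\dvol\,dt.
\]
The first term is absorbed into the left-hand side of the identity, and the second is again controlled by $C\|u\|_{L^2([T-2,T+1];H^1)}^2$. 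After absorbing and combining with the bound on $\int\!\int \chi|\nabla u|^2$, one obtains
\[
\int_\R\!\int_M \chi\,\bigl(|\partial_t u|^2+|\nabla u|^2\bigr)\,\dvol\,dt \le C\,\|u\|_{L^2([T-2,T+1];H^1)}^2,
\]
that is, $\int_\R \chi(t)\,E(u,t)\,dt\le \tfrac{C}{2}\|u\|_{L^2([T-2,T+1];H^1)}^2$.

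To conclude, I invoke the classical energy identity $\partial_t E(u,t)=-2\int_M a\,|\partial_t u|^2\,\dvol\le 0$, valid because $a\ge 0$, so $E(u,\cdot)$ is nonincreasing. Since $\chi\ge \bbbone_{[T-1,T]}$, the monotonicity yields
\[
E(u,T) \le \int_{T-1}^T E(u,t)\,dt \le \int_\R \chi(t)\,E(u,t)\,dt \le C\,\|u\|_{L^2([T-2,T+1];H^1)}^2,
\]
which is the claim. The main obstacle, and the only nonroutine step, is the quadratic cutoff trick $\chi=\phi^2$ that enables the absorption; the remaining steps are standard energy bookkeeping.
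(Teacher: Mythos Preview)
Your proof is correct. The core multiplier step --- pairing the equation with $\phi^2\,\bar u$ and using $\chi=\phi^2$ so that $\chi'=2\phi\phi'$ allows the cross term to be absorbed --- is exactly the computation the paper carries out with its cutoff $\chi_3$ (the paper writes $\chi_3^2$ for the same reason). Where you diverge is in the passage from the time-integrated bound to the pointwise value $E(u,T)$. The paper does not use the monotonicity of the energy; instead it introduces a second cutoff $\chi_2$ (equal to $0$ for $t\le T-1$ and $1$ for $t\ge T$), sets $u_2=\chi_2 u$, and runs a Gronwall argument on the energy of $u_2$ over $[T-1,T]$ to obtain $E(u,T)\le C(\|\partial_t u\|_{L^2([T-1,T];L^2)}^2+\|u\|_{L^2([T-1,T];L^2)}^2)$, after which the multiplier estimate handles the $\partial_t u$ term.

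Your route is shorter and cleaner: once $\int\chi\,E(u,t)\,dt$ is controlled, monotonicity of $E$ (valid here since $a\ge0$ is a standing assumption of the paper) immediately gives the pointwise bound, so no second cutoff or Gronwall step is needed. The paper's route has the minor advantage of not invoking the sign of $a$ at this stage, hence would survive for a general bounded $a$; but within the paper's setting that robustness is not needed, and your argument is a genuine simplification.
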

\begin{proof}
This is a standard result, we borrow the proof from \cite{EZ}. For
$T>2$, we choose $\chi_{2}\in C^{\infty}(\R)$, $0\leq\chi_{2}\leq1$
such that $\chi_{2}(t)=1$ for $t\geq T$ and $\chi_{2}(t)=0$ if
$t\leq T-1$. Setting $u_{2}(t,x)=\chi_{2}(t)u(t,x)$, we have \[
(\d_{t}^{2}-\Delta+2a\d_{t})u_{2}=g_{2}\]
 for $g_{2}=\chi_{2}''u+2\chi'_{2}\d_{t}u+2a\chi'_{2}u$. Note that
$g_{2}$ is compactly supported in $t$. Define now \[
E_{2}(u,t)=\fr12\int_{M}(|\d_{t}u_{2}|^{2}+|\nabla u_{2}|^{2})\dvol\]
and compute

\begin{eqnarray*}
E'_{2}(u,t) & = & \scal{\d_{t}^{2}u_{2}}{\d_{t}u_{2}}-\scal{\Delta u_{2}}{\d_{t}u_{2}}\\
 & = & -2\scal{a\d_{t}u_{2}}{\d_{t}u_{2}}+\scal{g_{2}}{\d_{t}u_{2}}\\
 & \leq & C\int_{M}|\d_{t}u_{2}|(|\d_{t}u|+|u|)\dvol\\
 & \leq & C\lp E_{2}(u,t)+\int_{M}(|\d_{t}u|^{2}+|u|^{2})\dvol\rp\,.\end{eqnarray*}
We remark that $E_{2}(u,T-1)=0$ and $E_{2}(u,T)=E(u,T)$, so the
Gronwall inequality on the interval $[T-1,T]$ gives \begin{equation}
E(u,T)\leq C\lp\|\d_{t}u\|_{L^{2}([T-1,T];L^{2})}^{2}+\|u\|_{L^{2}([T-1,T];L^{2})}^{2}\rp\,.\label{eq:EnergyL2}\end{equation}
To complete the proof, we need to bound the term $\|\d_{t}u\|_{L^{2}([T-1,T];L^{2})}^{2}$.
For this purpose, we choose $\chi_{3}\in C^{\infty}(\R)$, $0\leq\chi_{3}\leq1$
such that $\chi_{3}(t)=1$ for $t\in[T-1,T]$ and $\chi_{3}(t)=0$
if $t\leq T-2$ and $t\geq T+1$. From \eqref{eq:DWE}, we get 

\begin{eqnarray*}
0 & = & \int_{T-2}^{T+1}\langle\chi_{3}^{2}u,\d_{t}^{2}u-\Delta u+2a\d_{t}u\rangle dt\\
 & = & \int_{T-2}^{T+1}-\chi_{3}^{2}\langle\d_{t}u,\d_{t}u\rangle-2\chi_{3}\chi_{3}'\langle u,\d_{t}u\rangle+2\chi_{3}^{2}\langle u,a\d_{t}u\rangle+\chi_{3}^{2}\langle u,-\Delta u\rangle dt\,,\end{eqnarray*}
whence\[
\|\d_{t}u\|_{L^{2}([T-1,T];L^{2})}\leq C\|u\|_{L^{2}([T-2,T+1];H^{1})}.\]
Substituting this bound in \eqref{eq:EnergyL2} yields to the result.
\end{proof}
The Theorem \ref{thm:energy} follows now from the preceding lemma.
Let us denote by $u_{j}(t,x)$ and and $r_{n}(t,x)$ the first component
of $p_{\tau_{j}}(t)\bs\omega$ and $\e^{-\i t\cB}(1-\Pi)\bs\omega$
respectively. We learned above that \[
u(t,x)=\sum_{j=0}^{n}\e^{-\i t\tau_{j}}u_{j}(t,x)+r_{n}(t,x)\]
with $\|u_{j}(t,\cdot)\|_{H^{1}}\leq Ct^{m_{j}}\|\bs\omega\|_{\cH^{\ka}}$,
and $\|r_{n}(t,\cdot)\|_{H^{1}}\leq C_{\eps}\e^{t(\pr+2\eps)}\|\bs\omega\|_{\cH^{\ka}}$.
Suppose first that the projection $ $of $\bs\omega$ on $E_{0}$
vanishes, i.e. $\Pi_{0}\bs\omega=0$. Then, from the preceding lemma
we clearly have \[
E(u,t)^{\fr12}\leq\sum_{j=1}^{n}\e^{t\Im\tau_{j}}C\|u_{j}(t,x)\|_{H^{1}}+C_{\eps}\e^{t(\pr+2\eps)}\|\bs\omega\|_{\cH^{\ka}}\,.\]
This shows Theorem \ref{thm:energy} when $\Pi_{0}\bs\omega=0$. But
the general case follows easily: we can write $\ti u(t,x)=u(t,x)-\Pi_{0}\bs\omega$
for which we have the expected exponential decay, and notice that
$E(\ti u,t)=E(u,t)$ since $\Pi_{0}\bs\omega$ is constant.

\section{Hyperbolic dispersion estimate\label{sec:HypDispEst}}

This last section is devoted to the proof of Proposition \ref{pro:HypDispEst}.
Let $\eps,\delta,\cV,\cW$ and $\oph(\chi)$ be as in Section \ref{sec:Spectral Gap}.
We also set $N=T\log\h^{-1}$, $T>0$.

\subsection{Decomposition into elementary Lagrangian states\label{sub:Decomposition}}

Recall that each set $\cW_{\beta}\equiv\cW_{b_{0}\dots b_{t_{0}-1}}$
in the cover $\cW$ has the property\begin{equation}
\Phi^{k}(\cW_{\beta})\subset\cV_{b_{k}}\,,\ k\in\llb0,t_{0}-1\rrb\label{eq:visits}\end{equation}
for some sequence $b_{0},b_{1},\dots,b_{t_{0}-1}$. We will say that
a sequence sequence $\beta_{1}\dots\beta_{N}$ of sets $\cW_{\beta_{k}}$
is adapted to the dynamics if the following condition is satisfied
:\[
\forall k\in[1,N-1],\ \ \Phi^{kt_{0}}(\cW_{\beta_{1}})\cap\cW_{\beta_{k+1}}\neq\emptyset.\]
In this case, we can associate to the sequence $\{\beta_{i}\}$ a
sequence $\gamma_{1},\dots,\gamma_{Nt_{0}}$ of sets $\cV_{\gamma_{k}}\subset\cV$
which are visited at the times $0,\dots,Nt_{0}-1$ for some points
of $\cW_{\beta_{1}}$. We will only consider the sequences adapted
to the dynamics. Indeed, it is clear from standard results on propagation
of singularities that \[
\|\cU_{\beta_{N}}\dots\cU_{\beta_{1}}\|=\cO(\h^{\infty})\]
if the sequence is not adapted (see Appendix \ref{sec:semiclass}),
and in this case, Proposition \ref{pro:HypDispEst} is obviously true.

We now decompose further each evolution of length $t_{0}$ in \eqref{eq:DecompPropag}
by inserting additional quantum projectors. To unify the notations,
we define for $j\in\llb1,Nt_{0}\rrb$ the following projectors and
the corresponding open sets in $T^{*}M$ : \begin{equation}
\fkP_{\gamma_{j}}=\begin{cases}
\Pi_{\beta_{k}} & \trm{if}\ j-1=kt_{0},\ k\in\N\\
\ti\Pi_{\gamma_{j}} & \trm{if}\ j-1\neq0\mod t_{0}\end{cases}\,,\ \ \fkV_{\gamma_{j}}=\begin{cases}
\cW_{\beta_{k}} & \trm{if}\ j-1=kt_{0},\ k\in\N\\
\cV_{\gamma_{j}} & \trm{if}\ j-1\neq0\mod t_{0}\,.\end{cases}\label{eq:Ouverts Generalises}\end{equation}
We will also denote by $\mathsf{F}_{\gamma}\in C_{0}^{\infty}(T^{*}M)$
the function such that $\supp\FF_{\gamma}\subset\fkV_{\gamma}$ and
$\fkP_{\gamma}=\mathsf{F}_{\gamma}^{w}$.

Let us set up also a notation concerning the constants appearing in
the various estimates we will deal with. Let $\ell,K\in\N$ be two
parameters (independent of $\h$), and $e_{1},e_{2},e_{3}>0$ some
fixed numbers. For a constant $C$ depending on $M$ and derivatives
of $\chi,$ $a$, $\Phi^{t}$ (for $t$ bounded) up to order $e_{1}\ell+e_{2}K+e_{3}$,
we will write $C^{(\ell,K)}(M,\chi)$, or simply $C^{(K)}(M,\chi)$
if only one parameter is involved. If the constant $C$ depends also
on the cutoff functions $\FF_{\gamma}$ and their derivatives, we
will write \[
C=C^{(\ell,K)}(M,\chi,\cV).\]
This is to recall us the dependence on the cutoff function $\chi$
supported inside $\cE^{\delta}$, and the refined cover $\cV$. We
will sometimes use the notation $C^{(\ell,K)}(M,\cV)$ when no dependence
on $\chi$ is assumed. Note that $\cV$ depends implicitely on $\eps$
since its diameter was chosen such that \eqref{eq:PressApprox} holds.

Using \eqref{eq:visits}, standard propagation estimates give \[
\cU^{t_{0}}\Pi_{\beta_{1}}=\cU\fkP_{\gamma_{t_{0}}}\dots\cU\fkP_{\ga_{1}}+\cO_{L^{2}\to L^{2}}(\h^{\infty})\,,\ \ \ \cU\equiv\cU^{1}\,,\]
and similar properties for $\cU^{t_{0}}\Pi_{\beta_{k}},\ k>1$. Finally,
\begin{equation}
\cU_{\be_{Nt_{0}}}\dots\cU_{\beta_{1}}\cU\oph(\chi)=\cU\fkP_{\gamma_{Nt_{0}}}\dots\cU\fkP_{\gamma_{1}}\cU\oph(\chi)+\cO_{L^{2}\to L^{2}}(\h^{\infty})\,.\label{eq:Proj-Evol}\end{equation}
Take now $\Psi\in L^{2}(M)$. In order to show Proposition \ref{pro:HypDispEst},
we will write $\oph(\chi)\Psi$ as a linear decomposition over some
elementary Lagrangian states, and study the individual evolution of
such elementary states by $\cU^{Nt_{0}+1}$. This type of method comes
back to \cite{Ana1} and is the key tool to prove Proposition \ref{pro:HypDispEst}.
The decomposition of $\oph(\chi)\Psi$ is obtained by expliciting
the action of $\oph(\chi)$ in local coordinates (see Appendix \ref{sec:semiclass}).
When applying $\oph(\chi)$ to $\Psi$ using local charts labelled
by $\ell$, we get \begin{eqnarray*}
[\Op_{\h}(\chi)\Psi](x) & = & \sum_{\ell}\frac{1}{(2\pi\h)^{d}}\int\e^{\i\frac{\langle\eta,x-z_{0}\rangle}{\h}}\chi(\frac{x+z_{0}}{2},\eta)\varphi_{\ell}(z_{0})\phi_{\ell}(x)\Psi(z_{0})d\eta\: dz_{0}\\
 & = & \sum_{\ell}\int\delta_{\chi,z_{0}}^{\ell}(x)\Psi(z_{0})dz_{0}\,,\end{eqnarray*}
where we have defined \[
\delta_{\chi,z_{0}}^{\ell}(x)\defeq\frac{1}{(2\pi\h)^{d}}\int\e^{\i\frac{\langle\eta,x-z_{0}\rangle}{\h}}\chi(\frac{x+z_{0}}{2},\eta)\varphi_{\ell}(z_{0})\phi_{\ell}(x)d\eta\,.\]
This is a Lagrangian state, which Lagrangian manifold is given by
\[
\Lambda^{0}\defeq T_{z_{0}}^{*}M\cap\cE^{\delta}\subset T^{*}M\,.\]
Geometrically, $\Lambda^{0}$ corresponds to a small, connected piece
taken out of the union of spheres $\{T_{z_{0}}^{*}M\cap p^{-1}(\fr12+\nu)\:,\ |\nu|\leq\delta\}$.
If we project and evolve $\Psi$ according to the operator appearing
in the right hand side of \eqref{eq:Proj-Evol}, we get :

\begin{eqnarray}
\|\cU^{t}\fkP_{\gamma_{Nt_{0}}}\dots\cU\fkP_{\gamma_{1}}\cU\oph(\chi)\Psi\| & \leq & \sum_{\ell}\sup_{z}\|\cU^{t}\fkP_{\gamma_{Nt_{0}}}\dots\fkP_{\gamma_{1}}\cU\delta_{\chi,z_{0}}^{\ell}\|\int_{M}|\Psi(x)|dx\nonumber \\
 & \leq & C\sum_{\ell}\sup_{z}\|\cU^{t}\fkP_{\gamma_{Nt_{0}}}\dots\fkP_{\gamma_{1}}\cU\delta_{\chi,z_{0}}^{\ell}\|\|\Psi\|\label{eq:superposition}\end{eqnarray}
where $C>0$ depends only on the manifold $M$. Hence we are lead
by this superposition principle to study in detail states of the form
$\cU^{t}\fkP_{\gamma_{n}}\dots\cU\fkP_{\gamma_{1}}\cU\delta_{\chi,z_{0}}^{\ell}$,
for $n\in\llb1,Nt_{0}\rrb$ and $t\in[0,1]$. For simplicity, because
the local charts will not play any role in the following, we will
omit them in the formul\ae.

\subsection{Evolution of Lagrangian states and their Lagrangian manifolds\label{sub:Time1}}

\subsubsection{Ansatz for short times}

In this section we investigate the first step of the sequence of projection--evolution
given in \eqref{eq:Proj-Evol}: our goal is to describe the state
$\cU^{t}\delta_{\chi,z_{0}}$ with $t\in[0,1]$. Since $\cU^{t}$
is a Fourier integral operator, we know that $\cU^{t}\delta_{\chi,z_{0}}$
is a Lagrangian state, supported on the Lagrangian manifold \[
\Lambda^{0}(t)\defeq\Phi^{t}(\Lambda^{0})\,,\ t\in[0,1].\]
Because of our assumptions on the injectivity radius, the flow $\Phi^{t}:\Lambda^{0}(s)\to\Lambda^{0}(t)$
for $1\geq t\geq s>0$, induces on $M$ a bijection from $\pi\Lambda^{0}(s)$
to $\pi\Lambda^{0}(t)$. In other words, $\Lambda^{0}(t)$ projects
diffeomorphically on $M$ for $t\in]0,1]$, i.e. $\ker d\pi|_{\Lambda^{0}(t)}=0$
: in this case, we will say that $\Lambda^{0}(t)$ is \textit{projectible}.
This is the reason for introducing a first step of propagation during
a time $1$ : the Lagrangian manifold $\Lambda^{0}(0)$ is not projectible,
but as soon as $t\in]0,1]$, $\Lambda^{0}(t)$ projects diffeomorphically.
Treating separately this evolution for times $t\in[0,1]$ avoid some
unnecessary technical complications. 

The remark above implies that the Lagrangian manifold $\Lambda^{0}(t),\ t\in]0,1]$
is generated by the graph of the differential of a smooth, well defined
function $S_{0}$ : \[
\Lambda^{0}(t)=\{(x,d_{x}S_{0}(t,x,z_{0}))\,:\ 1\geq t>0,\ x\in\pi\Phi^{t}(\Lambda^{0})\}\,.\]
This means that for $t\in]0,1]$, we have the Lagrangian Ansatz :\begin{eqnarray}
v^{0}(t,x,z_{0}) & \defeq & \cU^{t}\delta_{\chi,z_{0}}(x)\nonumber \\
 & = & \frac{1}{(2\pi\h)^{\frac{d}{2}}}\lp\e^{\i\frac{S_{0}(t,x,z_{0})}{\h}}\sum_{k=0}^{K-1}\h^{k}b_{k}^{0}(t,x,z_{0})+\h^{K}B_{K}^{0}(t,x,z_{0})\rp.\label{eq:Ansatz1}\end{eqnarray}
The functions $b_{k}^{0}(t,x,z_{0})$ are smooth, and $x\in\pi\Lambda^{0}(t)$.
Furthermore, given any multi index $\ell$, they satisfy \begin{equation}
\|\d_{x}^{\ell}b_{k}^{0}(t,\cdot,z_{0})\|\leq C_{\ell,k}\label{eq:Ck Norms : time 0}\end{equation}
 where the constants $C_{\ell,k}$ depends only on $M$ (via the Hamiltonian
flow of $p$), the damping $a$, the cutoff function $\chi$ and their
derivatives up to order $2k+\ell$. However, note that $C_{0,0}$
only depends on $M$. The remainder satisfies $\|B_{K}^{0}\|\leq C_{K}$
where the constant $C_{K}$ also depends on $M$, $a$, $\chi$ and
is uniformly bounded with respect to $x,z_{0}$. The base point $z_{0}$
will be fixed until section \ref{sub:main}, so it will be ommited
in the following to simplify the notations.

\subsubsection{Further evolution}

In the sequence of projection--evolution \eqref{eq:Proj-Evol}, we
then have performed the first step, and obtained an Ansatz for $\cU^{t}\delta_{\chi},\ t\in]0,1]$
up to terms of order $\h^{K-d/2}$, for any $K\geq0$. The main goal
of the next paragraphs consist in finding an Ansatz for the full state
$ $ \begin{equation}
v^{n}(t,x)\defeq\cU^{t}\fkP_{\ga_{n}}\cU\fkP_{\ga_{n-1}}\dots\cU\fkP_{\ga_{1}}\cU\delta_{\chi}\,,\ t\in[0,1],\ n\geq1\,.\label{eq:vn}\end{equation}
The $\beta_{j}$ are defined according to $j-1\mod t_{0}$ as in the
preceding section, but here $n$ is arbitrary in the interval $\llb1,Nt_{0}\rrb$.
Because the operator $\cU^{t}\fkP$ is a Fourier integral operator,
$v^{j}(t,x)$, $j\geq1$ is a Lagrangian state, with a Lagrangian
manifold which will be denoted by $\Lambda^{j}(t)$. This manifold
consist in a small piece of $\Phi^{j+t}(\Lambda^{0})$, because of
the successive applications of the projectors $\fkP_{\gamma}$ between
the evolution operator $\cU$. If $j=1$, the Lagrangian manifold
$\Lambda^{1}(0)$ is given by \[
\Lambda^{1}(0)=\Lambda^{0}(1)\cap\fkV_{\ga_{1}},\]
and for $t\in[0,1]$ we have $\Lambda^{1}(t)=\Phi^{t}(\Lambda^{1}(0))$.
For $j\geq1$, $\Lambda^{j}(t)$ can be obtained by a similar procedure:
knowing $\Lambda^{j-1}(1)$, we take for $\Lambda^{j}(t),\ t\in[0,1]$
the Lagrangian manifold \[
\Lambda^{j}(0)\defeq\Lambda^{j-1}(1)\cap\fkV_{\gamma_{j}}\,,\ \ \trm{and}\ \ \Lambda^{j}(t)=\Phi^{t}(\Lambda^{j}(0))\,.\]
Of course, if the intersection $\Lambda^{j-1}(1)\cap\fkV_{\gamma_{j}}$
is empty, the construction has to be stopped, since by standard propagation
estimates, $v^{j}$ will be of order $\cO(\h^{\infty})$. But this
situation will not happen since the sequence $\{\beta_{i}\}$ is adapted
to the dynamics. It follows that \[
\forall j\in\llb1,n\rrb,\quad\Lambda^{j}(0)\neq\emptyset\,.\]
One can show (see \cite{AN}, Section 3.4.1 for an argument) that
the Lagrangian manifolds $\Lambda^{j}(t)$ are projectible for all
$j\geq1$. This is mainly because $M$ has no conjugate points. In
particular, any $\Lambda^{j}(t)$ can be parametrized as a graph on
$M$ of a differential, which means that there is a generating function
$S_{j}(t,x)$ such that \[
\Lambda^{j}(t)=\{x,d_{x}S_{j}(t,x)\}\,.\]
By extension, we will call a Lagrangian state projectible if its Lagrangian
manifold is. 

Let us introduce now some notations that will be often used later.
Suppose that $x\in\pi\Lambda^{j}(t)$, $j\geq1$. Then, there is a
unique $y=y(x)\in\pi\Lambda^{j}(0)$ such that \[
\pi\circ\Phi^{t}(y,d_{y}S_{j}(0,y))=x\,.\]
If we denote for $t\in[0,s]$ the (inverse) induced flow on $M$ by
\[
\phi_{S_{j}(s)}^{-t}:x\in\pi\Lambda^{j}(s)\mapsto\pi\Phi^{-t}\lp x,d_{x}S_{j}(s,x)\rp\in\pi\Lambda^{j}(s-t),\]
we have $y(x)=\phi_{S_{j}(t)}^{-t}(x).$ If $x\in\pi\Lambda^{j}(t)$,
then by construction \[
\Phi^{-t-k}(x,d_{x}S_{j}(t,x))\in\Lambda^{j-k}(0)\subset\Lambda^{j-k-1}(1)\,,\ k\in\llbracket0,j-1\rrbracket\,.\]
By definition, we will write \[
\phi_{S_{j}(t)}^{-t-k}(x)=\pi\Phi^{-t-k}(x,d_{x}S_{j}(t,x))\quad\et\quad\phi_{S_{j}}^{-k}(x)=\pi\Phi^{-k}(x,d_{x}S_{j}(1,x))\,.\]
To summarize, our sequence of projections and evolutions can be cast
into the following way:\begin{equation}
\xymatrix{\delta_{\chi}\ar[r]^{\cU^{1}} & v^{0}(1,\cdot)\ar[r]^{\fkP_{1}} & v^{1}(0,\cdot)\ar[r]^{\cU} & v^{1}(1,\cdot)\ar[r]^{\fkP_{2}} & \dots\ar[r]^{\fkP_{n}} & v^{n}(0,\cdot)\ar[r]^{\cU^{t}} & v^{n}(t,\cdot)\\
\Lambda^{0}\ar[r]^{\Phi^{1}} & \Lambda^{0}(1)\ar[r]^{|_{\fkV_{1}}} & \Lambda^{1}(0)\ar[r]^{\Phi^{1}} & \Lambda^{1}(1)\ar[r]^{|_{\fkV_{2}}} & \dots\ar[r]^{|_{\fkV_{n}}} & \Lambda^{n}(0)\ar[r]^{\Phi^{t}} & \Lambda^{n}(t)}
\label{eq:Diagram}\end{equation}
On the top line are written the successive evolutions of the Lagrangian
states, while the evolution of their respective Lagrangian manifolds
is written below (the notation $|_{\fkV}$ denotes a restriction to
the set $\fkV\subset T^{*}M$).

\subsection{Evolution of a projectible Lagrangian state}

Let $\fkV_{\gamma}$ and $\fkP_{\ga}$ be as in \eqref{eq:Ouverts Generalises}.
The next proposition contains an explicit description of the action
of the Fourier integral operators $\cU^{t}\fkP$ on projectible Lagrangian
states localized inside $\msV_{\gamma}$. 
\begin{prop}
\label{pro:BasicEvol}Let $\VV{}$ and $\PP{}=\FF_{\gamma}^{w}$ be
as in Section \ref{sub:Decomposition}. Let $w_{\h}(x)=w(x)\e^{\frac{\i}{\h}\psi(x)}$
be a projectible Lagrangian state, supported on a projectible Lagrangian
manifold \[
\Lambda=\{x,d_{x}\psi(x)\}\subset\fkV_{\gamma}\,.\]
Assume also that $\Lambda(t)\defeq\Phi^{t}\Lambda$ is projectible
for $t\in[0,1]$. We have the following asymptotic developement :\begin{equation}
[\cU^{t}\PP{}w_{\h}](x)=\e^{\frac{\i}{\h}\psi(t,x)}\sum_{k=0}^{K-1}\h^{k}w_{k}(t,x)+\h^{K}r_{K}(t,x)\label{eq:DPS}\end{equation}
where $\psi(t,\cdot)$ is a generating function for $\Lambda(t)$.
The amplitudes $w_{k}$ can be computed from the geodesic flow (via
the function $\varphi_{\ga}$), the damping $q$ and the function
$\FF_{\gamma}$. Moreover, the following bounds hold :\begin{eqnarray*}
\|w_{k}\|_{C^{\ell}} & \leq & C_{\ell,k}\|w\|_{C^{\ell+2k}}\\
\|r_{K}\|_{C^{\ell}} & \leq & C_{\ell,K}\|w\|_{C^{\ell+2K+d}}\end{eqnarray*}
where the constants depend on $\varphi_{\ga}$, $a$, $\FF_{\gamma}$
and their derivatives up to order $\ell+2K+d$, namely $C_{\ell,k}=C^{(\ell,k)}(M,\cV)$.
An explicit expression for $w_{k}$ will be given in the proof.\end{prop}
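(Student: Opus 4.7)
The plan is to carry out the WKB construction in two stages matching the two factors $\cU^{t}$ and $\msP_{\gamma}$, and then control the remainder by Duhamel's formula and the $L^{2}$--contractivity of $\cU^{t}$.

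First, I would handle the action of the projector. Since $\msP_{\gamma}=\FF_{\gamma}^{w}$ is a compactly supported pseudodifferential operator and $w_{\h}=w\,\e^{\i\psi/\h}$ is a projectible Lagrangian state supported on $\Lambda\subset\fkV_{\gamma}$, the standard stationary phase argument applied to the integral defining $\FF_{\gamma}^{w}w_{\h}$ (expanding around the critical point $\eta=d_{x}\psi(x)$, which is nondegenerate because $\Lambda$ is projectible) yields an asymptotic expansion
\[
\msP_{\gamma}w_{\h}(x)=\e^{\i\psi(x)/\h}\left(\sum_{k=0}^{K-1}\h^{k}\tilde w_{k}(x)+\h^{K}\tilde r_{K}(x)\right),
\]
with $\tilde w_{0}(x)=\FF_{\gamma}(x,d_{x}\psi(x))\,w(x)$ and higher $\tilde w_{k}$ involving at most $2k$ derivatives of $w$ (and derivatives of $\FF_{\gamma}$ and $\psi$), so that $\|\tilde w_{k}\|_{C^{\ell}}\le C^{(\ell,k)}(M,\cV)\|w\|_{C^{\ell+2k}}$; the remainder $\tilde r_{K}$ costs an extra $d$ derivatives of $w$ coming from the stationary phase remainder estimate.

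Next, I would solve $(\i\h\d_{t}-\cP(z,\h))\Psi=0$ with initial data $\msP_{\gamma}w_{\h}$ by the usual Ansatz $\Psi(t,x)=\e^{\i\psi(t,x)/\h}\sum_{k=0}^{K-1}\h^{k}w_{k}(t,x)$. Grouping by powers of $\h$:
\begin{itemize}
\item the $\h^{0}$-term forces the Hamilton--Jacobi equation $\d_{t}\psi+p(x,d_{x}\psi)=0$, which for $t\in[0,1]$ admits a smooth solution $\psi(t,\cdot)$ generating $\Lambda(t)=\Phi^{t}\Lambda$ precisely because $\Lambda(t)$ is projectible throughout;
\item the $\h^{1}$-term gives the first transport equation
\[
(\d_{t}+X_{\psi}+\tfrac{1}{2}(\Delta_{g}\psi)-q_{z})w_{0}=0,
\]
solved by integration along the Hamiltonian flow with initial data $\tilde w_{0}$;
\item each subsequent $\h^{k+1}$-term gives an inhomogeneous transport equation for $w_{k}$ with source $-\tfrac{\i}{2}\Delta_{g}w_{k-1}$, which produces two extra derivatives per order.
\end{itemize}
Tracking derivatives along these transport equations, using that $\Phi^{t}$ is smooth and $t\in[0,1]$ is bounded, gives $\|w_{k}(t,\cdot)\|_{C^{\ell}}\le C^{(\ell,k)}(M,\cV)\|w\|_{C^{\ell+2k}}$ with the dependencies on $\varphi_{\gamma},a,\FF_{\gamma}$ advertised in the statement.

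Finally, setting $\Psi_{K}(t,x)=\e^{\i\psi(t,x)/\h}\sum_{k=0}^{K-1}\h^{k}w_{k}(t,x)$, the construction of the $w_{k}$ is designed so that $(\i\h\d_{t}-\cP)\Psi_{K}=\h^{K+1}\,\e^{\i\psi/\h}\rho_{K}(t,x)$ with $\|\rho_{K}(t,\cdot)\|_{L^{2}}\le C^{(K)}(M,\cV)\|w\|_{C^{2K+d}}$, and the initial error is $\h^{K}\tilde r_{K}$. Duhamel's formula combined with $\|\cU^{t}\|_{L^{2}\to L^{2}}\le 1$ for $t\ge0$ then gives
\[
\bigl\|\cU^{t}\msP_{\gamma}w_{\h}-\Psi_{K}(t,\cdot)\bigr\|_{L^{2}}\le C\,\h^{K}\|w\|_{C^{2K+d}}.
\]
To upgrade this to the $C^{\ell}$-bound on the remainder $r_{K}$ claimed in the proposition, I would apply the same Duhamel argument to each derivative $\d_{x}^{\alpha}r_{K}$ for $|\alpha|\le\ell$ (or equivalently use Sobolev embedding $H^{\ell+\lceil d/2\rceil}\hookrightarrow C^{\ell}$), which accounts for the additional loss of derivatives in $\|w\|_{C^{\ell+2K+d}}$.

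The main technical obstacle is the bookkeeping of derivatives: one must verify that at each step, both the stationary phase expansion of step 1 and the iterated transport equations of step 2 lose exactly $2k$ derivatives at order $\h^{k}$, and that the remainder estimate in $C^{\ell}$ costs an extra $d$ derivatives (from Sobolev embedding / stationary phase). The projectibility of all the intermediate $\Lambda(t)$ for $t\in[0,1]$ is what ensures the phase $\psi(t,\cdot)$ and all amplitudes remain smooth functions of $x$, with constants controlled uniformly in $t\in[0,1]$.
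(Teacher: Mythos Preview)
Your argument is correct in outline and would prove the proposition, but it is organized differently from the paper's proof, and this difference is worth noting.

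The paper does not split the operator as ``first $\msP_{\gamma}$, then $\cU^{t}$''. Instead it writes $\cU^{t}\msP_{\gamma}$ directly as a Fourier integral operator with kernel
\[
\frac{1}{(2\pi\h)^{d}}\iint \e^{\frac{\i}{\h}(\varphi_{\gamma}(t,x,\eta)-\langle y,\eta\rangle+\psi(y))}\,a(t,x,y,\eta)\,w(y)\,dyd\eta,
\]
where $\varphi_{\gamma}$ is a generating function for the flow on $\fkV_{\gamma}$. The amplitudes $a_{k}(t,x,y,\eta)$ are found by solving the transport equations with $(y,\eta)$ treated as parameters (so $a_{0}=\cT_{0}^{t}\FF_{\gamma}$, etc.), and only \emph{afterwards} is stationary phase applied in $(y,\eta)$. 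This ordering yields an explicit formula for $w_{0}$ in terms of the stationary-phase Hessian $\det(1-\d_{\eta\eta}^{2}\varphi\circ\psi'')$, which the paper then reinterprets geometrically as the Jacobian $|\Jac(d\phi_{\psi(t)}^{-t})|^{1/2}$; that expression is exactly what is iterated in the subsequent Proposition~\ref{pro:Ansatze} and Lemma~\ref{lem:Cnorms}. Your route (stationary phase for $\msP_{\gamma}$ first, then WKB for $\cU^{t}$) gives the same $w_{k}$ and the same bounds, but the principal symbol emerges in a less immediately usable form for the later induction.

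Two small points. First, the sign in your first transport equation should be $+q_{z}$, not $-q_{z}$ (compare the paper's $\d_{t}a_{0}+qa_{0}+X[a_{0}]+\tfrac12 a_{0}\div_{g}X=0$). Second, your remainder estimate via Duhamel is naturally an $L^{2}$ bound; upgrading it to $C^{\ell}$ by Sobolev embedding works, but the paper instead obtains $C^{\ell}$ bounds directly from the stationary-phase remainder estimate (H\"ormander, Theorem~7.7.6), which is why the extra $d$ derivatives appear and why no separate argument is needed.
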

\begin{proof}
The steps we will encounter below are very standard in the non-damping
case, i.e. $q=0$. If the diameter of the partition $\cV$ of $\cE^{\delta}$
is chosen small enough, we can assume without loss of generality the
existence of a function $\varphi_{\gamma}\in C^{\infty}([0,1]\times\R^{d}\times\R^{d})$
which generates the canonical transformation given by the geodesic
flow on $\msV_{\gamma}$ for times $t\in[0,1]$, in other words :
\begin{equation}
\forall(y,\eta)\in\fkV_{\gamma}\,,\ \ \Phi^{t}(y,\eta)=(x,\xi)\Leftrightarrow\xi=\d_{x}\varphi_{\gamma}(t,x,\eta)\ \trm{and}\ y=\d_{\eta}\varphi_{\gamma}(t,x,\eta)\,.\label{eq:Generating Function}\end{equation}
Furthermore, $\varphi_{\gamma}$ satisfies $\det\d_{x,\eta}^{2}\varphi_{\gamma}\neq0$,
and solves the following Hamilton-Jacobi equation \[
\begin{cases}
\d_{t}\varphi_{\gamma}+p(x,d_{x}\varphi_{\ga})=0\\
\varphi_{\ga}(0,x,\eta)=\langle\eta,x\rangle\,.\end{cases}\]
We first look for an oscillatory integral representation: \begin{eqnarray}
\cU^{t}\mathsf{P}_{\gamma}w_{\h}(x) & = & \frac{1}{(2\pi\h)^{d}}\iint\e^{\frac{\i}{\h}(\varphi_{\gamma}(t,x,\eta)-\scal y{\eta}+\psi(y))}\sum_{k=0}^{K-1}\h^{k}a_{k}^{\ga}(t,x,y,\eta)w(y)dyd\eta\label{eq:OscIntg}\\
 &  & +\cO_{L^{2}}(\h^{K})\nonumber \\
 & \defeq & b_{\h}(t,x)+\h^{K}\ti r_{K}(t,x)\,,\ \ \ \|\ti r_{K}\|=\cO(1),\nonumber \end{eqnarray}
with $(y,\eta)\in\fkV_{\ga}$ . For simplicity, we will omit the dependence
on $\gamma$ in the formul\ae. We have to determine the amplitudes
$a_{k}$. For this, we want $b_{\h}$ to solve \[
\frac{\d b_{\h}}{\d t}=(\frac{\i\h\Delta_{g}}{2}-q)b_{\h}\]
 up to order $\h^{K}$. Direct computations using \eqref{eq:laplacian}
show that the functions $\varphi$ and $a_{k}$ must satisfy the following
equations :\begin{equation}
\begin{cases}
\d_{t}\varphi+p(x,d_{x}\varphi)=0 & \trm{(Hamilton-Jacobi\ equation)}\\
\d_{t}a_{0}+qa_{0}+X[a_{0}]+\fr12a_{0}\div_{g}X=0 & \trm{(0-th\ transport\ equation)}\\
\d_{t}a_{k}+qa_{k}+X[a_{k}]+\fr12a_{k}\div_{g}X=\frac{\i}{2}\Delta_{g}a_{k-1} & \trm (k-\trm{th\ transport\ equation)}\end{cases}\label{eq:wkb}\end{equation}
with initial conditions \[
\begin{cases}
\begin{array}{c}
\varphi(0,x,\eta)=\scal x{\eta}\\
a_{0}(0,x,y,\eta)=\FF(\frac{x+y}{2},\eta)\\
a_{k}(0,x,y,\eta)=0\ \trm{for\ }k\geq1.\end{array}\end{cases}\]
The variables $y$ and $\eta$ are fixed in these equations, so they
will play the role of parameters for the moment and will sometimes
be skipped in the formul\ae. $X$ is a vector field on $M$ depending
on $t$, and $\div_{g}X$ its Riemannian divergence. In local coordinates,
\[
X=g^{ij}(x)\d_{x_{j}}\varphi(t,x)\:\d_{x_{i}}=\d_{\xi_{i}}p(x,\d_{x}\varphi(t,x))\d_{x_{i}}\quad\et\quad\div_{g}X=\frac{1}{\sqrt{\bar{g}}}\d_{i}(\sqrt{\bar{g}}X^{i}).\]
The Hamilton-Jacobi equation is satisfied by construction. To deal
with the transport equations, we notice that $X$ corresponds to the
projection on $M$ of the Hamiltonian vector field $H_{p}$ at $(x,d_{x}\varphi(t,x,\eta))\in T^{*}M$.
Let us call first \[
\Lambda_{t,\eta}=\{(x,d_{x}\varphi(t,x,\eta)),x\in\pi\Phi^{t}\Lambda\},\ \ \eta\ \trm{fixed}.\]
 This Lagrangian manifold is the image of the Lagrangian manifold
$\Lambda_{0,\eta}=\{(y,\eta):y\in\pi\Lambda\}$ by the geodesic flow
$\Phi^{t}$. The flow $\ka_{s}^{t}$ on $M$ generated by $X$ can
be now identified with the geodesic flow restricted to $\Lambda_{s,\eta}$:
\[
\ka_{s}^{t}:\pi\Lambda_{s,\eta}\owns x\mapsto\pi\Phi^{t}(x,\d_{x}\varphi(t,x,\eta))\in\pi\Lambda_{t+s,\eta}.\]
The inverse flow $(\ka_{s}^{t})^{-1}$ will be denoted by $\ka_{s+t}^{-t}$.
Let us extend now the flow $\ka_{s}^{t}$ of $X$ on $M$ to the flow
$\cK^{t}$ generated by the vector field $\cX=\d_{t}+X$ on $\R\times M$
: \[
\cK^{t}:\begin{cases}
\R\times M\to\R\times M\\
(s,x)\mapsto(s+t,\ka_{s}^{t}(x))\,.\end{cases}\]
We then identify the functions $a_{k}$ with Riemannian half-densities
on $\R\times M$ -- see \cite{Dui,EZ}: \[
a_{k}(t,x)\equiv a_{k}(t,x)\sqrt{dt\dvol(x)}=a_{k}(t,x)\sqrt{\bar{g}(x)}^{\fr12}|dtdx|^{\fr12}\in C^{\infty}(\R\times M,\Omega_{\fr12}).\]
Since we have \[
\cL_{\cX}(a_{k}\sqrt{dt\dvol})=(\cX[a_{k}]+\fr12a_{k}\div_{g}X)\sqrt{dt\dvol}\,,\]
the 0--th transport equation takes the simple form of an ordinary
differential equation: \[
\cL_{\cX}(a_{0}\sqrt{dt\dvol})+qa_{0}\sqrt{dt\dvol}=0.\]
This is the same as \[
\frac{d}{dt}(\cK^{t})^{*}a_{0}\sqrt{dt\dvol}=-qa_{0}\sqrt{dt\dvol},\]
which is solved by\[
a_{0}\sqrt{dt\dvol}=\e^{-\int_{0}^{t}q\circ\cK^{s-t}ds}(\cK^{-t})^{*}a_{0}\sqrt{dt\dvol}\,.\]
We now have to make  explicit the coordinates dependence, which yields
to\[
a_{0}(t,x)\sqrt{\bar{g}(x)}^{\fr12}|dxdt|^{\fr12}=\e^{-\int_{0}^{t}q\circ\ka_{t}^{s-t}(x)ds}a_{0}(0,\ka_{t}^{-t}(x))\sqrt{\bar{g}(\ka_{t}^{-t}(x))}|\det d_{x}\ka_{t}^{-t}|^{\fr12}|dxdt|^{\fr12}.\]
Consequently, \foreignlanguage{french}{\[
a_{0}(t,x)=\e^{-\int_{0}^{t}q\circ\ka_{t}^{s-t}(x)ds}a_{0}(0,\ka_{t}^{-t}(x))\frac{\sqrt{\bar{g}(\ka_{t}^{-t}(x))}}{\sqrt{\bar{g}(x)}}|\det d_{x}\ka_{t}^{-t}|^{\fr12}\,.\]
}Since \[
\ka_{t}^{-t}:x\mapsto\pi\Phi^{-t}(x,\d_{x}\varphi(t,x,\eta))=\d_{\eta}\varphi(t,x,\eta),\]
it is clear that $|\det d_{x}\ka_{t}^{-t}(x)|=|\det\d_{x\eta}^{2}\varphi(t,x,\eta)|.$
For convenience, we introduce the following operator $\cT_{s}^{t}$
transporting functions $f$ on $M$ with support inside $\pi\Lambda_{s,\eta}$
to functions on $\pi\Lambda_{t+s,\eta}$ while damping them along
the trajectory :\[
\cT_{s}^{t}(f)(x)=\e^{-\int_{0}^{t}q\circ\ka_{t+s}^{\sigma-t}d\sigma}f(\ka_{t+s}^{-t}(x))\frac{\sqrt{\bar{g}(\ka_{t+s}^{-t}(x))}}{\sqrt{\bar{g}(x)}}|\det d_{x}\ka_{t+s}^{-t}(x)|^{\fr12}.\]
This operator plays a crucial role, since we have\begin{equation}
a_{0}(t,\cdot)=\cT_{0}^{t}(a_{0}(0,\cdot))=\cT_{0}^{t}\FF,\label{eq:a0}\end{equation}
from which we see that $a_{0}(t,\cdot)$ is supported inside $\pi\Lambda_{t,\eta}$.
By the Duhamel formula, the higher order terms can now be computed,
they are given by\[
a_{k}(t,\cdot)=\int_{0}^{t}\cT_{s}^{t-s}\lp\frac{\i}{2}\Delta_{g}a_{k-1}(s)\rp\: ds\,.\]
The ansatz $b_{\h}(t,x)$ constructed so far satisfies the approximate
equation\[
\frac{\d b_{\h}}{\d t}=(\i\h\Delta_{g}-q)b_{\h}-\frac{\i}{2}\h^{K}\iint\e^{\frac{\i}{\h}S(t,x,\eta,y)}w(y)\Delta_{g}a_{K-1}(t,x,y,\eta)\: dyd\eta\,.\]
The difference with the actual solution $\cU^{t}\fkP$ is bounded
by \[
\h^{K}t\|\Delta_{g}a_{K-1}\|\leq Ct\h^{K}\,,\]
 where $C=C^{(2K)}(M,\cV)$, so \eqref{eq:OscIntg} is satisfied. 

As noticed above, for time $t>0$, the state $\cU^{t}\fkP w_{\h}$
is a Lagrangian state, supported on the Lagrangian manifold $\Lambda(t)=\Phi^{t}\Lambda$.
By hypothese, $\Lambda(t)$ is projectible, so we expect an asymptotic
expansion for $b_{\h}(t,x)$, exactly as in \eqref{eq:Ansatz1}. To
this end, we now proceed to the stationary phase developement of the
oscillatory integral in \eqref{eq:OscIntg}. We set \[
I_{k}(x)=\frac{1}{(2\pi\h)^{d}}\iint\e^{\frac{\i}{\h}(\varphi(t,x,\eta)-\langle y,\eta\rangle+\psi(y))}a_{k}(t,x,y,\eta)w(y)dyd\eta\,.\]
The stationary points of the phase are given by \[
\begin{cases}
\psi'(y)=\eta\\
\d_{\eta}\varphi(t,x,\eta)=y,\end{cases}\]
for which there exists a solution $(y_{c},\eta_{c})\in\Lambda(0)$
in view of \eqref{eq:Generating Function}. Moreover, this solution
is unique since $\Lambda(t)$ is projectible: $y_{c}=y_{c}(x)\in\pi\Lambda(0)$
is the unique point in $\pi\Lambda(0)$ such that $x=\pi\Phi^{t}(y_{c},\psi'(y_{c}))\,,$
and then $\eta_{c}=\psi'(y_{c})$ is the unique vector allowing the
point $y_{c}$ to reach $x$ in time $t$. The generating function
for $\Lambda(t)$ we are looking for is then given by \[
\psi(t,x)=S(t,x,y_{c,}(x),\eta_{c}(x)).\]
Applying now the stationary phase theorem for each $I_{k}$ (see for
instance \cite{Hor}, Theorem 7.7.6, or \cite{NZ}, Lemma 4.1 for
a similar computation), summing up the results and ordering the different
terms according to their associated power of $\h$, we see that \eqref{eq:DPS w}
holds with\[
w_{0}(t,x)=\e^{\frac{\i}{\h}\beta(t)}\frac{a_{0}(t,x,y_{c},\eta_{c})}{|\det(1-\d_{\eta\eta}^{2}\varphi(t,x,\eta_{c})\circ\psi''(y_{c}))|^{\fr12}}w(y_{c})\,,\ \ \ \ \beta\in C^{\infty}(\R),\]
and\begin{equation}
w_{k}(t,x)=\sum_{i=0}^{k}A_{2i}(x,D_{x,\eta})(a_{k-i}(t,x,y,\eta)w(y))|_{(y,\eta)=(y_{c},\eta_{c})}\,.\label{eq:DPS w}\end{equation}
$A_{2i}$ denotes a differential operator of order $2i$, with coefficients
depending smoothly on $\varphi$, $\psi$ and their derivatives up
to order $2i+2$. This yields to the following bounds :\begin{eqnarray*}
\|w_{k}\|_{C^{\ell}} & \leq & C_{\ell,k}\|w\|_{C^{\ell+2k}}\end{eqnarray*}
where $C_{\ell,k}=C^{(\ell,k)}(M,\cV)$. The remainder terms $r_{K}(t,x)$
is the sum of the remainders coming from the stationary phase developement
of $I_{k}$ up to order $K-k$. Each remainder of order $K-k$ has
a $C^{\ell}$ norm bounded by $C_{\ell,K-k}\h^{K-k}\|w\|_{C^{\ell+2(K-k)+d}}$
so we see that \[
\|r_{K}\|_{C^{\ell}}\leq C_{\ell,K}\|w\|_{C^{\ell+2K+d}},\ \ C=C^{(\ell,K)}(M,\cV).\]
The principal symbol $w_{0}$ can also be interpreted more geometrically.
As in Section \ref{sub:Time1}, denote by $\phi_{\psi(t)}^{-t}$ the
following map \[
\phi_{\psi(t)}^{-t}:\begin{cases}
\pi\Lambda(t)\to\pi\Lambda(0)\\
x\mapsto\pi\Phi^{-t}(x,d_{x}\psi(t,x))\,.\end{cases}\]
Let us write the differential of $\Phi^{t}:(y,\eta)\mapsto(x,\xi)$
as $d\Phi^{t}(\delta y,\delta\eta)=(\delta x,\delta\xi).$ Using \eqref{eq:Generating Function},
we have \begin{eqnarray*}
\delta y & = & \d_{x\eta}^{2}\varphi\delta x+\d_{\eta\eta}^{2}\varphi\delta\eta\\
\delta\xi & = & \d_{xx}^{2}\varphi\delta x+\d_{x\eta}^{2}\varphi\delta\eta\,,\end{eqnarray*}
and then, since $\d_{x\eta}^{2}\varphi$ is invertible,\foreignlanguage{french}{\[
\left(\begin{array}{c}
\delta x\\
\delta\xi\end{array}\right)=\left(\begin{array}{cc}
\d_{x\eta}^{2}\varphi^{-1} & -\d_{x\eta}^{2}\varphi^{-1}\d_{\eta\eta}^{2}\varphi\\
\d_{xx}^{2}\varphi\d_{x\eta}^{2}\varphi^{-1} & \d_{x\eta}^{2}\varphi-\d_{xx}\varphi\d_{x\eta}^{2}\varphi^{-1}\d_{\eta\eta}^{2}\varphi\end{array}\right)\left(\begin{array}{c}
\delta y\\
\delta\eta\end{array}\right)\]
} If we restrict $\Phi^{t}$ to $\Lambda(0)$, we have $\delta\eta=\psi''(y)\delta y$,
which means that for $x\in\pi\Lambda(t)$, \[
d\phi_{\psi(t)}^{-t}(x)=\d_{x\eta}^{2}\varphi(t,x,\eta_{c})(1-\d_{\eta\eta}^{2}\varphi(t,x,\eta_{c})\psi''(y_{c}))^{-1}.\]
It follows from \eqref{eq:DPS w} that \begin{eqnarray*}
w_{0}(t,x) & = & \e^{\frac{\i}{\h}\beta(t)}w(y_{c})\FF(y_{c},\eta_{c})\:\e^{-\int_{0}^{1}q(\phi_{\psi(t)}^{-t+s}(x))ds}|\det d\phi_{\psi(t)}^{-t}(x)|^{\fr12}\sqrt{\frac{\bar{g}(\phi_{\psi(t)}^{-t}(x))}{\bar{g}(x)}}^{\fr12}\\
 & = & \e^{\frac{\i}{\h}\beta(t)}w(y_{c})\mathsf{F}(y_{c},\eta_{c})\:\e^{-\int_{0}^{1}q(\phi_{\psi(t)}^{-t+s}(x))ds}\lva\Jac(d\phi_{\psi(t)}^{-t}(x))\rva^{\fr12}\,,\end{eqnarray*}
where $\Jac(f)$ denotes the Jacobian of $f:M\to M$ measured with
respect to the Riemannian volume.
\end{proof}

\subsection{Ansatz for $n>1$}

In this paragraph, we construct by induction on $n$ a Lagrangian
state $b^{n}(t,x)$ supported on $\Lambda^{n}(t)$, in order to approximate
$v^{n}(t,x)$ up to order $\h^{K-d/2}$. 
\begin{prop}
\label{pro:Ansatze}There exists a sequence of functions \[
\{b_{k}^{n}(t,x),S_{n}(t,x)\ :\ n\geq1,\ k<K,\ x\in M,\ t\in[0,1]\}\]
such that $S_{n}(t,x)$ is a generating function for $\Lambda^{n}(t)$,
and \begin{equation}
v^{n}(t,x)=\frac{1}{(2\pi\h)^{\frac{d}{2}}}\e^{\i\frac{S_{n}(t,x)}{\h}}\sum_{k=0}^{K-1}\h^{k}b_{k}^{n}(t,x)+\h^{K-\frac{d}{2}}R_{K}^{n}(t,x)\label{eq:AnsatzV}\end{equation}
where $R_{K}^{n}$ satisfies \begin{equation}
\|R_{K}^{n}\|\leq C_{K}(1+C\h)^{n}\lp\sum_{i=2}^{n}\sum_{k=0}^{K-1}\|b_{k}^{i-1}(1,\cdot)\|_{C^{2(K-k)+d}}+C'\rp\label{eq:Remainder Brut}\end{equation}
where $C'=C^{(K)}(M,\chi)$, $C_{K}=C^{(K)}(M,\chi,\cV)$ and $C>0$
is fixed. \end{prop}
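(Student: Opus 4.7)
The proof proceeds by induction on $n$, using Proposition \ref{pro:BasicEvol} as the engine that advances the Ansatz by one projection--evolution step. The base case $n=1$ is essentially a repackaging: writing $v^{1}(t,\cdot)=\cU^{t}\fkP_{\ga_{1}}v^{0}(1,\cdot)$ and using the initial Ansatz \eqref{eq:Ansatz1}, we note that $\Lambda^{0}(1)=\Phi^{1}\Lambda^{0}$ is projectible (by the injectivity radius assumption); hence Proposition \ref{pro:BasicEvol} applies separately to each Lagrangian state $e^{\i S_{0}(1,\cdot)/\h}b_{k}^{0}(1,\cdot)$ for $k=0,\dots,K-1$, expanded to order $K-k$. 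Reordering by powers of $\h$ yields the new generating function $S_{1}$ of $\Lambda^{1}(t)$ together with amplitudes $b_{k}^{1}$, and a remainder which is the sum of stationary phase errors plus $\cU^{t}\fkP_{\ga_{1}}$ applied to $\h^{K}B_{K}^{0}(1,\cdot)$; both contributions are bounded in $L^{2}$ by a constant $C^{(K)}(M,\chi,\cV)$. This is exactly the $n=1$ case of \eqref{eq:Remainder Brut} (the sum being empty).

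\textbf{Inductive step.} Assume \eqref{eq:AnsatzV} for $n-1$ and write $v^{n}(t,\cdot)=\cU^{t}\fkP_{\ga_{n}}v^{n-1}(1,\cdot)$. Separating the amplitude part from the remainder, I apply Proposition \ref{pro:BasicEvol} to each Lagrangian state $e^{\i S_{n-1}(1,\cdot)/\h}b_{k}^{n-1}(1,\cdot)$; these are projectible because $\Lambda^{j}(t)$ is projectible for every $j\ge 1,t\in[0,1]$, as recalled in Section \ref{sub:Time1}. The $k$-th state is developed to order $K-k$, yielding contributions at orders $\h^{k},\h^{k+1},\dots,\h^{K-1}$ and a stationary-phase remainder of size $\h^{K}$ whose $L^{2}$ norm is bounded by $C^{(K)}(M,\cV)\,\|b_{k}^{n-1}(1,\cdot)\|_{C^{2(K-k)+d}}$. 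Collecting by total power of $\h$ defines the new amplitudes $b_{m}^{n}(t,\cdot)$ and the generating function $S_{n}$. The remainder from the inductive hypothesis contributes $\h^{K-d/2}\cU^{t}\fkP_{\ga_{n}}R_{K}^{n-1}$, whose $L^{2}$ norm is at most $(1+C\h)\,\|R_{K}^{n-1}\|$, since $\|\cU^{t}\|_{L^{2}\to L^{2}}\le 1$ (as $a\ge 0$) and $\|\fkP_{\ga_{n}}\|_{L^{2}\to L^{2}}\le 1+\cO(\h)$ by Calder\'on--Vaillancourt applied to the partition of unity. This produces the recurrence
\[
\|R_{K}^{n}\|\;\le\;(1+C\h)\,\|R_{K}^{n-1}\|\;+\;C_{K}\sum_{k=0}^{K-1}\|b_{k}^{n-1}(1,\cdot)\|_{C^{2(K-k)+d}},
\]
and unrolling it yields precisely \eqref{eq:Remainder Brut}, with the summand $C'=C^{(K)}(M,\chi)$ absorbing the contribution from the base case.

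\textbf{Main obstacle.} The execution of the induction is essentially bookkeeping once Proposition \ref{pro:BasicEvol} is at hand. The genuine difficulty lies \emph{outside} the present proposition: the bound \eqref{eq:Remainder Brut} is only useful once we have exponential-in-$n$ control of the Sobolev norms $\|b_{k}^{n-1}(1,\cdot)\|_{C^{2(K-k)+d}}$. These norms inherit the derivatives of the generating functions $S_{n}$ and of the transport operators $\cT_{s}^{t}$, both of which grow under iteration because the underlying dynamics is Anosov (unstable Jacobians $J^{u}$). Controlling this growth up to the Ehrenfest time $Nt_{0}\sim T\log\h^{-1}$, and showing that it is tamed by the pressure factor $\e^{\moy{a^{u}}_{t_{0},\beta_{j}}}$, is the real technical heart of the argument and must be addressed by a separate estimate on the amplitudes $b_{k}^{n}$; here I only record the structural Ansatz in which that estimate can subsequently be inserted.
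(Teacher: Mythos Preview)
Your proof is correct and follows essentially the same route as the paper: induction on $n$, applying Proposition~\ref{pro:BasicEvol} to each amplitude $b_{k}^{n-1}$ at order $K-k$, collecting by powers of $\h$, and bounding the propagated remainder via $\|\cU^{t}\fkP_{\ga_{n}}\|_{L^{2}\to L^{2}}\le 1+C\h$ to obtain the recurrence that unrolls into \eqref{eq:Remainder Brut}. Your closing remark that the real work lies in the subsequent control of $\|b_{k}^{n}\|_{C^{\ell}}$ is also exactly how the paper proceeds (Lemma~\ref{lem:Cnorms}).
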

\begin{proof}
The construction of the amplitudes $b_{k}^{n}$ for all $k\geq0$
is done by induction on $n$, following step by step the sequence
\eqref{eq:Diagram}. In Section \ref{sub:Time1} we obtained $\cU^{1}\delta_{\chi}$
as a projectible Lagrangian state: \begin{eqnarray*}
v^{0}(1,x) & = & \frac{1}{(2\pi\h)^{d/2}}\e^{\i\frac{S_{0}(1,x)}{\h}}\sum_{k=0}^{K-1}\h^{k}b_{k}^{0}(1,x)+\h^{K-d/2}B_{K}^{0}(1,x)\\
 & \defeq & \frac{1}{(2\pi\h)^{d/2}}b^{0}(1,x)+\h^{K-d/2}R_{K}^{0}(1,x)\,,\end{eqnarray*}
and we know that $b^{0}(1,\cdot)$ satisfies the hypotheses of Proposition
\ref{pro:BasicEvol}, which will be used to describe $\cU^{t}\fkP_{\ga_{1}}v^{0}(1,\cdot)$.
More generally, suppose that the preceding step has lead for some
$n\geq1$ to \begin{eqnarray*}
v^{n-1}(t,x) & = & \frac{1}{(2\pi\h)^{\frac{d}{2}}}\e^{\i\frac{S_{n-1}(t,x)}{\h}}\sum_{k=0}^{K-1}\h^{k}b_{k}^{n-1}(t,x)+\h^{K-d/2}R_{K}^{n-1}(t,x)\\
 & = & \frac{1}{(2\pi\h)^{\frac{d}{2}}}b^{n-1}(t,x)+\h^{K-d/2}R_{K}^{n-1}(t,x)\end{eqnarray*}
 where $b^{n-1}(t,\cdot)$ is a Lagrangian state, supported on the
Lagrangian manifold $\Lambda^{n-1}(t)$, and $R_{K}^{n-1}$ is some
remainder in $L^{2}(M)$. We now apply Proposition \ref{pro:BasicEvol}
to each Lagrangian state $\e^{\frac{\i}{\h}S_{n-1}(1,x)}\h^{k}b_{k}^{n-1}(1,x)$
appearing in the definition of $b^{n-1}$. Because of the term $\h^{k}$,
if we want an Ansatz as in \eqref{eq:AnsatzV}, it is enough to describe
$\cU^{t}\fkP_{\gamma_{n}}v_{k}^{n-1}(1,\cdot)$ up to order $K-k$,
which gives a remainder of order $C_{K-k}\h^{K-k}\|b_{k}^{n-1}(1,\cdot)\|_{C^{2(K-k)+d}}$.
Grouping the terms corresponding to the same power of $\h$ when applying
Proposition \ref{pro:BasicEvol} to each $(v_{k}^{n-1})_{0\leq k<K-1}$
yields to \[
[\cU^{t}\PP nb^{n-1}](x)=\e^{\frac{\i}{\h}S_{n}(t,x)}\sum_{k=0}^{K-1}\h^{k}b_{k}^{n}(t,x)+\h^{K}B_{K}^{n}(t,x)\defeq b^{n}(t,x)+\h^{K}B_{K}^{n}(t,x),\]
where $S_{n}(t,x)$ is a generating function of the Lagrangian manifold
\[
\Lambda^{n}(t)=\Phi^{t}(\Lambda^{n-1}(1)\cap\fkV_{\gamma_{n}}).\]
 The coefficients $b_{k}^{n}$ are given by \begin{equation}
b_{k}^{n}(t,x)=\sum_{i=0}^{k}\sum_{l=0}^{k-i}A_{2l}(a_{k-i-l}^{\ga_{n}}(t,x,y,\eta)b_{i}^{n-1}(1,y))|_{(y,\eta)=(y_{c},\eta_{c})}\label{eq:bnk}\end{equation}
where $y_{c}=\phi_{S_{n}(t)}^{-t}(x)\,,\ \eta_{c}=d_{y}S_{n-1}(1,y_{c})$.
In particular, $b_{0}^{n}(t,x)=\cD_{n}(t,x)b_{0}^{n-1}(1,y_{c})$,
with\begin{equation}
\cD_{n}(t,x)=\e^{-\int_{0}^{1}q(\phi_{S_{n}(t)}^{s-t}(x))ds}\lva\Jac(d\phi_{S_{n}(t)}^{-t}(x))\rva^{\fr12}\e^{\frac{\i}{\h}\beta_{n}(t)}\mathsf{F}_{\gamma_{n}}(y_{c},\eta_{c})\label{eq:Symbole Principal}\end{equation}
 for some $\beta_{n}(t)\in C^{\infty}(\R)$. The remainder $B_{K}^{n}$
satisfies \begin{equation}
\|B_{K}^{n}(t)\|\leq C_{K}\sum_{k=0}^{K-1}\|b_{k}^{n-1}(1,\cdot)\|_{C^{2(K-k)+d}}\:,\ \ C_{K}=C^{(K)}(M,\chi,\cV).\label{eq:RnN}\end{equation}
Hence, we end up with \begin{eqnarray*}
v^{n}(1,x) & = & \frac{1}{(2\pi\h)^{\frac{d}{2}}}\lp\e^{\i\frac{S_{n}(1,x)}{\h}}\sum_{k=0}^{K-1}\h^{k}b_{k}^{n}(1,x)+\h^{K}(B_{K}^{n}(1,x)+\cU^{1}\fkP_{\ga_{1}}B_{K}^{n-1}(1))\rp\\
 & \defeq & \frac{1}{(2\pi\h)^{\frac{d}{2}}}b^{n}(1,x)+\h^{K-d/2}R_{K}^{n}(1,x)\end{eqnarray*}
where $R_{K}^{n}(1,x)=(2\pi)^{-\frac{d}{2}}(B_{K}^{n}(1,x)+\cU^{1}\fkP_{\ga_{n}}R_{K}^{n-1}(1,\cdot))$.
Again, $b^{n}$ satisfies the hypotheses of Proposition \ref{pro:BasicEvol},
so we can continue iteratively. To complete the proof, we now have
to take all the remainders into account. From the discussion above,
we get :\begin{eqnarray*}
\cU^{t}\PP nv^{n-1}(1,\cdot) & = & (2\pi\h)^{-\frac{d}{2}}\cU^{t}\PP nb^{n-1}(1,\cdot)+\h^{K-d/2}\cU^{t}\PP n(R_{K}^{n-1}(1,\cdot))\\
 & = & (2\pi\h)^{-\frac{d}{2}}\lp b^{n}(t,\cdot)+\h^{K}B_{K}^{n}(t.\cdot)\rp+\h^{K-d/2}\cU^{t}\PP n(R_{K}^{n-1}(1,\cdot))\\
 & = & (2\pi\h)^{-\frac{d}{2}}b^{n}(t,\cdot)+\h^{K-d/2}R_{K}^{n}(t,\cdot)\end{eqnarray*}
where we defined $R_{K}^{n}=(2\pi)^{-\frac{d}{2}}B_{K}^{n}+\cU^{t}\PP n(R_{K}^{n-1})$.
Since $|\mathsf{F}_{\ga_{n}}|\leq1$, we have \[
\|\cU^{t}\fkP_{\ga_{n}}\|_{L^{2}\to L^{2}}\leq1+C\h,\ \ C>0.\]
This implies that $\|\cU^{t}\PP n(R_{K}^{n-1}(1,\cdot))\|\leq(1+C\h)\|R_{K}^{n-1}(1,\cdot)\|$,
and finally $R_{K}^{n}$ satisfies \begin{equation}
\|R_{K}^{n}\|\leq(1+C\h)^{n}\lp\|B_{K}^{n}\|+\|B_{K}^{n-1}\|+\dots\|B_{K}^{1}\|+\|B_{K}^{0}\|\rp\label{eq:Remainders}\end{equation}
In view of \eqref{eq:RnN} and \eqref{eq:Ansatz1}, this concludes
the proof. 
\end{proof}
Given $v^{n-1}(1,\cdot)$, we have then constructed $v^{n}(t,x)$
as in \eqref{eq:AnsatzV}, but it remains to control the remainder
$R_{K}^{n}$ in $L^{2}$ norm : from \eqref{eq:RnN} and \eqref{eq:Remainders},
we see that it is crucial for this to estimate properly the $C^{\ell}$
norms of the coefficients $b_{k}^{j}$ for $j\geq1$ and $k\in\llbracket0,K-1\rrbracket$.
\begin{lem}
\label{lem:Cnorms}Let $n\geq1$, and define \[
\bs\cD_{n}=\sup_{x\in\pi\Lambda^{n}(1)}\lva\prod_{i=0}^{n-1}\cD_{n-i}(1,\phi_{S_{n}}^{-i}(x))\rva\,,\quad\bs\cD_{0}=1.\]
If $x\in\pi\Lambda^{n}(1)$, the principal symbol $b_{0}^{n}$ is
given by \begin{equation}
b_{0}^{n}(1,x)=\lp\prod_{j=0}^{n-1}\cD_{n-j}(1,\phi_{S_{n}}^{-j}(x))\rp\, b_{0}^{0}(1,\phi_{S_{n}}^{-n}(x)).\label{eq: Ppal Symbol}\end{equation}
For $k\in\llbracket0,K-1\rrbracket$, the functions $b_{k}^{n}$ satisfy 

\begin{equation}
\|b_{k}^{n}(1,\cdot)\|_{C^{\ell}}\leq C_{k,\ell}(n+1)^{3k+\ell}\bs\cD_{n}\label{eq:CNorms}\end{equation}
where $C_{k,\ell}=C^{(\ell,k)}(M,\chi,\cV)$. It follows that \begin{eqnarray}
\|B_{K}^{n}(1,\cdot)\| & \leq & C_{K}n^{3K+d}\bs\cD_{n-1}\label{eq:DecayProd1}\\
\|R_{K}^{n}(1)\| & \leq & C_{K}(1+C\h)^{n}\sum_{j=1}^{n}j^{3K+d}\bs\cD_{j-1}\label{eq:DecayProd2}\end{eqnarray}
where $C>0$ and $C_{K}=C^{(K)}(M,\chi,\cV)$. On the other hand,
if $x\notin\pi\Lambda^{n}(1),$ we have $b_{k}^{n}(x)=0$ for $k\in\llbracket0,K-1\rrbracket.$ \end{lem}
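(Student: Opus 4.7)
My plan is to prove the three parts of the lemma by induction on $n$, using the recursion formula \eqref{eq:bnk} for $b_k^n$ that was established in Proposition \ref{pro:Ansatze}.

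First, for the explicit formula \eqref{eq: Ppal Symbol}, I would iterate the principal symbol relation $b_0^n(1,x) = \cD_n(1,x)\, b_0^{n-1}(1, \phi_{S_n}^{-1}(x))$, which is the $k=0$ case of \eqref{eq:bnk} with $t=1$. Since $\Lambda^n(0) = \Lambda^{n-1}(1) \cap \fkV_{\gamma_n}$ and $\Phi^{-1}$ maps $\Lambda^n(1)$ to $\Lambda^n(0)$, we have $\phi_{S_n}^{-1}(x) \in \pi\Lambda^{n-1}(1)$ for any $x \in \pi\Lambda^n(1)$, and moreover $\phi_{S_n}^{-j}(x) = \phi_{S_{n-1}}^{-(j-1)}(\phi_{S_n}^{-1}(x))$ for $j \geq 1$ (both generating functions parametrize the same iterated flow on the overlap). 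The formula then follows by iteration. The support property $b_k^n \equiv 0$ off $\pi\Lambda^n(1)$ comes from the cutoff $\FF_{\gamma_n}(y_c,\eta_c)$ built into $a_0^{\gamma_n}(0,\cdot)$, together with the requirement that a stationary point $(y_c,\eta_c) \in \Lambda^{n-1}(1) \cap \fkV_{\gamma_n} = \Lambda^n(0)$ exist.

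The heart of the proof is the $C^\ell$ estimate \eqref{eq:CNorms}, which I would obtain by induction on $n$ from the recursion
\[
b_k^n(1,x) = \sum_{i=0}^k \sum_{l=0}^{k-i} A_{2l}\bigl(a_{k-i-l}^{\gamma_n}(1,x,y,\eta)\, b_i^{n-1}(1,y)\bigr)\bigr|_{(y,\eta)=(y_c(x),\eta_c(x))}.
\]
The base case $n=1$ is direct: Proposition \ref{pro:BasicEvol} gives $\|b_k^1(1,\cdot)\|_{C^\ell}\leq C_{\ell,k}(M,\chi,\cV)\|b_0^0(1,\cdot)\|_{C^{\ell+2k}}$, and \eqref{eq:Ck Norms : time 0} bounds the right-hand side in terms of $M$ and $\chi$, while $\bs\cD_0=1$. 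For the inductive step, applying $\partial_x^\ell$ via the Leibniz rule and the chain rule through $y_c(x),\eta_c(x)$ produces at most $\ell+2l$ derivatives of $b_i^{n-1}$ in $y$, yielding
\[
\|b_k^n(1,\cdot)\|_{C^\ell} \leq \sum_{i=0}^k \sum_{l=0}^{k-i} \widetilde C_{k,\ell}^{(i,l)} \|b_i^{n-1}(1,\cdot)\|_{C^{\ell+2l}} \cdot \sup_{x\in\pi\Lambda^n(1)}|\cD_n(1,x)|,
\]
where the constants depend on $\cV$, $M$, $a$ and on $C^{\ell+2l+O(1)}$-bounds for the maps $y_c,\eta_c$ and for $\cD_n$. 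Using the inductive hypothesis on each $\|b_i^{n-1}\|_{C^{\ell+2l}}$ and the identity $\sup_{\pi\Lambda^n(1)}|\cD_n(1,\cdot)|\cdot\bs\cD_{n-1}=\bs\cD_n$ derived from the trajectory compatibility above, I get a sum of at most $O(k^2)$ terms each bounded by $C_{k,\ell}n^{3i+\ell+2l}\bs\cD_n$; the maximum occurs at $i=k,\,l=0$ or $i=0,\,l=k$, and absorbing the extra polynomial prefactors into $(n+1)^{3k+\ell}$ closes the induction.

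The main obstacle is controlling the chain-rule factors coming from differentiating $y_c$, $\eta_c$, and the amplitudes $\cD_{n-j}$ through $n$ iterations of the flow: a naive bound would be exponential in $n$. The polynomial growth $(n+1)^{3k+\ell}$ rests on the standard fact from Anosov dynamics that the iterated projectible Lagrangians $\Lambda^n(t)$ accumulate in the $C^\ell$-topology on local unstable leaves with uniformly bounded geometry — a consequence of the negative curvature, the absence of conjugate points, and the use of the adapted metric on $T^*M$. Once \eqref{eq:CNorms} is in hand, the bounds \eqref{eq:DecayProd1} and \eqref{eq:DecayProd2} are straightforward: substituting $\ell = 2(K-k)+d$ into \eqref{eq:CNorms} and using \eqref{eq:RnN} gives $\|B_K^n\|\leq C_K n^{3K+d}\bs\cD_{n-1}$, after which \eqref{eq:Remainders} telescopes to yield \eqref{eq:DecayProd2}.
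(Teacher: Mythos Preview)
Your derivation of the product formula \eqref{eq: Ppal Symbol}, the support statement, and the final bounds \eqref{eq:DecayProd1}--\eqref{eq:DecayProd2} from \eqref{eq:CNorms} are all fine. The gap is in the proof of \eqref{eq:CNorms} itself: your induction on $n$ does not close.

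Look at the dominant term in your one-step recursion, namely $(i,l)=(k,0)$, which contributes $\cD_n(1,x)\,b_k^{n-1}(1,y_c(x))$. Differentiating $\ell$ times in $x$ via Leibniz and Fa\`a di Bruno, the top term is $\cD_n\cdot(\partial^\ell b_k^{n-1})(y_c)\cdot(\partial y_c)^{\otimes\ell}$ plus lower-order terms involving $\partial^\alpha\cD_n$ with $\alpha\neq 0$ or fewer derivatives on $b_k^{n-1}$. You can only bound this by $D_n\,\|b_k^{n-1}\|_{C^\ell}$ times a constant $\widetilde C_\ell$ that depends on $\|\partial y_c\|,\dots,\|\partial^\ell y_c\|$ and on $\|\cD_n\|_{C^\ell}$. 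Feeding in the inductive hypothesis $\|b_k^{n-1}\|_{C^\ell}\le C_{k,\ell}\,n^{3k+\ell}\bs\cD_{n-1}$ you get $\widetilde C_\ell\,C_{k,\ell}\,n^{3k+\ell}\bs\cD_n$, and to recover $C_{k,\ell}(n+1)^{3k+\ell}\bs\cD_n$ you would need $\widetilde C_\ell\le (1+1/n)^{3k+\ell}\to 1$. Nothing forces $\widetilde C_\ell\le 1$; the one-step Jacobian $\|\partial x_{n-1}/\partial x_n\|$ is only bounded by some fixed $C$, not by $1$, and for small $n$ it can genuinely exceed $1$. The appeal to convergence of $\Lambda^n$ toward unstable leaves does not rescue this: that convergence gives you a uniform bound, not a bound $\le 1$, on the one-step derivative. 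A naive iteration therefore produces exponential growth in $n$.

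The paper proceeds differently and this is precisely the point of the argument. It does \emph{not} induct on $n$. Instead it exploits the fact that the \emph{composed} Jacobians $\partial x_{j-i}/\partial x_j$ are bounded uniformly in both $i$ and $j$ (this is where the Anosov structure enters, via \eqref{eq:Jacob Bounds}), and promotes this to the polynomial bound $\|\partial^\alpha x_{n-k}/\partial x_n^\alpha\|\le C_\alpha k^{|\alpha|-1}$ for higher derivatives (a separate sub-lemma, proved by induction on $k$ using Fa\`a di Bruno and the telescoping product $M_{k-1}\cdots M_1$). With these global flow-derivative bounds in hand, the paper differentiates the full $n$-fold product in \eqref{eq: Ppal Symbol} directly to get \eqref{eq:CNorms} for $k=0$, and then inducts on $k$ (not on $n$) by fully unfolding the recursion \eqref{eq:bnk} into a sum over all intermediate times $i\le n$, rather than a single step. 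This is what converts a multiplicative accumulation of constants into an additive one, producing the polynomial $(n+1)^{3k+\ell}$.
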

\begin{proof}
First, if $x\notin\pi\Lambda^{n}(1)$, then there is no $\rho\in\fkV_{\gamma_{n}}$
such that $\pi\Phi^{1}(\rho)=x$, and then $v^{n}(1,x)=\cO(\h^{\infty})$.
In what follows, we then consider the case $x\in\pi\Lambda^{n}(1)$.
We first see that \eqref{eq: Ppal Symbol} simply follows from \eqref{eq:Symbole Principal}
applied recursively. If $\rho_{n}=(x_{n},\xi_{n})\in\Lambda^{n}(1)$,
we call $\rho_{j}=(x_{j},\xi_{j})=\Phi^{j-n}(\rho_{n})\in\Lambda^{j}(1)$
if $j\geq0$. In other words,\[
\forall j\in\llbracket1,n\rrbracket,\ \ x_{j-1}=\phi_{S_{j}}^{-1}(x_{j}).\]
It will be useful to keep in mind the following sequence, which illustrates
the backward trajectory of $\rho_{n}\in\Lambda^{n}(1)$ under $\Phi^{-k}$,
$k\in\llb1,n\rrb$ and its projection on $M$ : 

$$
\xymatrix{
\rho_0\in\Lambda^{0}(1)\ar[d]_\pi&
\ar[l]_-{\Phi^{-1}}\ar[d]_\pi\rho_1\in\Lambda^{1}(1)&
\ar[l]_-{\Phi^{-1}}\dots&
\ar[l]_-{\Phi^{-1}}\ar[d]_\pi\rho_{n-1}\in\Lambda^{n-1}(1)&
\ar[l]_-{\Phi^{-1}}\ar[d]_\pi \rho_n\in\Lambda^{n}(1)\\
x_0 &
\ar[l]_-{\phi^{-1}_{S_1}}x_1 &
\ar[l]_-{\phi^{-1}_{S_2}}\dots &
\ar[l]_-{\phi^{-1}_{S_{n-1}}} x_{n-1}&
\ar[l]_-{\phi^{-1}_{S_n}}x_n
 }
$$We denote schematically the Jacobian matrix $d\phi_{S_{j}}^{-i}=\frac{\d x_{j-i}}{\d x_{j}}$
for $1\leq i\leq j\leq n$. Since for any $E>0$, the sphere bundle
$T_{z}^{*}M\cap p^{-1}(E)$ is transverse to the stable direction
\cite{Kli}, the Lagrangians $\Lambda^{n}\subset\Phi^{n}\Lambda^{0}$
converge exponentially fast to the weak unstable foliation as $n\to\infty$.
This implies that $\Phi^{t}|_{\Lambda^{0}}$ is asymptotically expanding
as $t\to\infty$, except in the flow direction. Hence, the inverse
flow $\Phi^{-t}|_{\Lambda^{n}}$ acting on $\Lambda^{n}$ and its
projection $\phi_{S_{n}}^{-t}$ on $M$ have a tangent map uniformly
bounded with respect to $n,t$. As a result, the Jacobian matrices
$\d x_{j-i}/\d x_{j}$ are uniformly bounded from above : for $1\leq i\leq j\leq n$
there exists $C=C(M)$ independent of $n$ such that \begin{equation}
\lno\frac{\d x_{j-i}}{\d x_{j}}\rno\leq C\,.\label{eq:Jacob Bounds}\end{equation}
It follows that if we denote $D_{j}=\sup_{x_{j}}\cD_{j}(1,x_{j})$,
there exists $C=C(M)>0$ such that \begin{equation}
C^{-1}\leq D_{j}\leq C.\label{eq:Uniform Dn}\end{equation}
Note also that \[
\sup_{x\in\pi\Lambda^{n}(1)}\lva\prod_{j=0}^{n-1}\cD_{n-j}(1,\phi_{S_{n}}^{-j}(x))\rva=\prod_{j=0}^{n-1}D_{n-j}=\bs\cD_{n}.\]
We first establish the following crucial estimate :
\begin{lem}
\label{lem:Flow Derivatives}Let $n\geq1$, and $k\in\llb1,n\rrb$.
For every multi index $\alpha$ of length $|\alpha|\geq2$, there
exists a constant $C_{\alpha}>0$ depending on $M$ such that \begin{equation}
\lno\frac{\d^{\alpha}x_{n-k}}{\d x_{n}^{\alpha}}\rno\leq C_{\alpha}k^{\alpha-1}\label{eq:Flow Derivatives}\end{equation}

\begin{proof}
We proceed by induction on $k$, from $k=1$ to $k=n$. The case $k=1$
is clear. Let us assume now that \[
\lno\frac{\d^{\alpha}x_{n-k'}}{\d x_{n}^{\alpha}}\rno\leq C_{\alpha}k'^{\alpha-1}\,,\quad k'\in\llb1,k-1\rrb\]
and let us show the bound for $k'=k$. For simplicity, we will denote
\[
\d_{j}^{\alpha}\defeq\frac{\d^{\alpha}}{\d x_{j}^{\alpha}},\ \ \d^{\alpha}x_{j}=\frac{\d^{\alpha}x_{j}}{\d x_{j+1}^{\alpha}}\,.\]
In particular, $\|\d^{\alpha}x_{j}\|\leq C_{\alpha}.$ We also recall
the Faà di Bruno formula : let $\Pi$ be the set of partitions of
the ensemble $\{1,...,|\alpha|\}$, and for $\pi\in\Pi$, write $\pi=\{B_{1},...B_{k}\}$
where $B_{i}$ is some subset of $\{1,...,|\alpha|\}$. Here $|\alpha|\geq k\geq1$,
and we denote $|\pi|=k$. For two smooth functions $g:\R^{d}\mapsto\R^{d}$
and $f:\R^{d}\mapsto\R^{d}$ such that $f\circ g$ is well defined,
one has \begin{equation}
\d^{\alpha}f\circ g\equiv\sum_{\pi\in\Pi}\d^{|\pi|}f(g)\prod_{B\in\pi}\d^{B}g\,.\label{e:fdb}\end{equation}
The term in the right hand side is written schematically, to indicates
a sum of derivatives of $f$ of order $|\pi|$, times a product of
$|\pi|$ terms, each of them corresponding to derivatives of $g$
of order $|B|$. It is important for our purpose to note that $\sum|B|=|\alpha|$.
Continuing from theses remarks, we compute \begin{eqnarray*}
X_{k}\defeq\d_{n}^{\alpha}x_{n-k} & = & \d x_{n-k}\d_{n}^{\alpha}x_{n-k+1}\\
 &  & +\sum_{\pi\in\Pi,|\pi|>1}\d^{|\pi|}x_{n-k}\prod_{B\in\pi}\d_{n}^{B}x_{n-k+1}\defeq\d x_{n-k}X_{k-1}+Y_{k-1}\,.\end{eqnarray*}
By the induction hypothesis, \begin{equation}
\|Y_{i}\|\leq C_{\alpha}i{}^{\alpha-2}\label{eq:Y bounds}\end{equation}
 since the partitions $\pi$ involved in the sum contains at least
two elements. Setting $M_{k-1}=\d x_{n-k}$ , we have\begin{eqnarray*}
X_{k} & = & M_{k-1}\dots M_{1}X_{1}+M_{k-2}\dots M_{1}Y_{1}+M_{k-3}\dots M_{1}Y_{2}\\
 &  & +\dots+M_{1}Y_{k-1}\,.\end{eqnarray*}
 From the chain rule we have \[
\frac{\d x_{j-i}}{\d x_{j}}=\frac{\d x_{j-i}}{\d x_{j-i+1}}\dots\frac{\d x_{j-1}}{\d x_{j}},\]
 and \eqref{eq:Jacob Bounds} yields to $\|M_{i-1}\dots M_{1}\|=\cO(1)$
for $2\leq i\leq k$. Adding up all the terms contributing to $X_{k}$
and taking \eqref{eq:Y bounds} into account yields to \[
\|X_{k}\|\leq C_{\alpha}(1+1^{\alpha-2}+2^{\alpha-2}+\dots+(k-1)^{\alpha-2})\leq C_{\alpha}k^{\alpha-1}\]
and the lemma is proved. 
\end{proof}
\end{lem}
We now prove \eqref{eq:CNorms}. For this, we will proceed in two
steps. First, we show the bounds for the principal symbol $b_{0}^{n}$.
Then, we treat the higher order terms $b_{k}^{n},k\geq1$ using the
bounds on $\|b_{0}^{n}\|_{C^{\ell}}$ for any $\ell$. For $b_{0}^{n}$,
The $C^{0}$ norm estimate follows directly from \eqref{eq: Ppal Symbol}.
From now on, we denote for convenience \[
\cD_{0}(x_{0})\defeq b_{0}^{0}(1,x_{0}).\]
Computing \[
\d_{n}^{\ell}b_{0}^{n}(x_{n})=\d_{n}^{\ell}(\cD_{n}(x_{n})\dots\cD_{1}(x_{1})\cD_{0}(x_{0}))\,,\]
we will obtain a sum of terms, each of them of the form\[
M_{\alpha_{n}\dots\alpha_{0}}=\d_{n}^{\alpha_{n}}\cD_{n}\d_{n}^{\alpha_{n-1}}\cD_{n-1}\dots\d_{n}^{\alpha_{1}}\cD_{1}\d_{n}^{\alpha_{0}}\cD_{0},\]
 with $\alpha_{n}+\dots+\alpha_{0}=\ell$. Note that if $\ell$ is
fixed with respect to $n$, most of the multi-indices $\alpha_{i}$
vanish when $n$ becomes large : actually, at most $|\ell|$ are non-zero,
and we will denote them by $\alpha_{i_{1}},\dots,\alpha_{i_{k}}$,
$k\leq|\ell|$. Hence the above expression is made of long strings
of $\cD_{i}$ , alternating with some derivative terms $\d_{n}^{\alpha_{i}}\cD_{i}$
which number depends only on $\ell$. We can then write \begin{equation}
\|M_{\alpha_{n}\dots\alpha_{0}}\|_{C^{0}}\leq\bs\cD_{n}\times\frac{\|\d_{n}^{\alpha_{i_{1}}}\cD_{i_{1}}\dots\d_{n}^{\alpha_{i_{k-1}}}\cD_{i_{k-1}}\d_{n}^{\alpha_{i_{k}}}\cD_{i_{k}}\|_{C^{0}}}{D_{i_{1}}\dots D_{i_{k}}}\,.\label{eq:Strings 1}\end{equation}
Let us examinate each terms $\d_{n}^{\alpha}\cD_{i}$ appearing in
the right hand side individually. By the Faà di Bruno formula and
Lemma \ref{lem:Flow Derivatives}, we have for $i\neq0$ \begin{equation}
\d_{n}^{\alpha}\cD_{i}(x_{i})=\sum_{\pi}\d_{i}^{|\pi|}\cD_{i}\prod_{B\in\pi}\d_{n}^{B}x_{i}\leq C_{\pi}n^{\alpha-|\pi|}\leq C_{\alpha}n^{\alpha-1}\label{eq:Strings 2}\end{equation}
where $C_{\alpha}=C^{(\ell,K)}(M,\chi,\cV)$. Of course, if $i=0$,
$\|\d_{0}^{\alpha}\cD_{0}(x_{0})\|_{C^{0}}\leq C_{\alpha}\|\d_{0}^{\alpha}b_{0}^{0}\|_{C^{0}}$
for some constant $C_{\alpha}>0$. Now, for a fixed configuration
of derivatives $\{\alpha\}=\{\alpha_{i_{1}},\dots\alpha_{i_{k}}\}$
we have to choose $i_{1},\dots,i_{k}$ indices among $n+1$ to form
the right hand side in \eqref{eq:Strings 1}, and the number of such
choices is at most of order $\cO((n+1)^{k})$. Hence, \begin{eqnarray}
\|\d_{n}^{\ell}b_{0}^{n}\|_{C^{0}} & \leq & \bs\cD_{n}\:\sum_{\{\alpha\}}\sum_{i_{i},\dots,i_{k}}\frac{\|\d_{n}^{\alpha_{i_{1}}}\cD_{i_{1}}\dots\d_{n}^{\alpha_{i_{k-1}}}\cD_{i_{k-1}}\d_{n}^{\alpha_{i_{k}}}\cD_{i_{k}}\|_{C^{0}}}{D_{i_{1}}\dots D_{i_{k}}}\nonumber \\
 & \leq & \bs\cD_{n}\sum_{\{\alpha\}}C_{\alpha}(n+1)^{k}(n+1)^{\alpha_{1}-1}\dots(n+1)^{\alpha_{k}-1}\nonumber \\
 & \leq & C_{\ell}\bs\cD_{n}(n+1)^{\ell}\label{eq: Cl norm -b0}\end{eqnarray}
where $C_{\ell}=C^{(\ell)}(M,\chi,\cV)$. For higher order terms $(b_{k}^{n},\ k>0)$,
we remark from \eqref{eq:bnk} that we can write \begin{equation}
b_{k}^{n}(x_{n})=\cD_{n}(x_{n})b_{k}^{n-1}(x_{n-1})+\sum_{j=1}^{k}\sum_{|\alpha|\leq2j}\Gamma_{j\alpha}^{n}(x_{n})\d_{n-1}^{\alpha}b_{k-j}^{n-1}(x_{n-1})\,.\label{eq:bnk - developped}\end{equation}
The function $\Gamma_{j\alpha}^{n}$ can be expressed with the flow,
the damping and the cutoff function $\FF_{\gamma_{n}}$. It follows
that the norms $\|\Gamma_{j\alpha}^{n}\|_{C^{\ell}}$ are uniformly
bounded with respect to $n$: \[
\|\Gamma_{j,\alpha}^{n}\|_{C^{\ell}}=C^{(\ell,K)}(M,\chi,\cV).\]
In order to show the bounds \eqref{eq:CNorms} for $k>0$, we will
proceed by induction on the index $k$. The case $k=0$ has been treated
above. Suppose now that for any $\ell$ and $k'\in\llb0,k-1\rrb$
we have proven \[
\|\d_{n}^{\ell}b_{k'}^{n}\|_{C^{0}}\leq C_{\ell}(n+1)^{3k'+\ell}\bs\cD_{n},\quad C_{\ell}=C^{(\ell,K)}(M,\chi,\cV)\,.\]
As above, to treat the case $k'=k$, we begin by the situation where
$\ell=0$. To shorten the formul\ae, we introduce for $1\leq i\leq j\leq n$
the functions \begin{eqnarray*}
\bs\Gamma^{i,k}(x_{n}) & = & \sum_{j=1}^{k}\sum_{|\alpha|\leq2j}\Gamma_{j\alpha}^{i}(x_{i})\d_{i-1}^{\alpha}b_{k-j}^{i-1}(x_{i-1})\\
\bs J_{i}^{j}(x_{n}) & = & \cD_{j}(x_{j})\cD_{j-1}(x_{j-1})\dots\cD_{i}(x_{i})\end{eqnarray*}
where the $x_{i}$, $i\leq n$ have to be considered as functions
of $x_{n}$, namely $x_{i}=\phi_{S_{n}}^{-n+i}(x_{n})$. Iterating
\eqref{eq:bnk - developped} further, we have :\begin{eqnarray}
b_{k}^{n}(x_{n}) & = & \bs J_{n}^{n}b_{k}^{n-1}(x_{n-1})+\bs\Gamma^{n,k}\nonumber \\
 & = & \bs J_{n}^{n}(\bs J_{n-1}^{n-1}b_{k}^{n-2}(x_{n-2})+\bs\Gamma^{n-1,k})+\bs\Gamma^{n,k}\nonumber \\
 & = & \bs J_{n-1}^{n}b_{k}^{n-2}(x_{n-1})+\bs J_{n}^{n}\bs\Gamma^{n-1,k}+\bs\Gamma^{n,k}\nonumber \\
 & = & \bs J_{1}^{n}b_{k}^{0}(x_{0})+\bs J_{2}^{n}\bs\Gamma^{1,k}+\bs J_{3}^{n}\bs\Gamma^{2,k}+\dots+\bs J_{n}^{n}\bs\Gamma^{n-1,k}+\bs\Gamma^{n,k}\label{eq:bnk - Sum}\end{eqnarray}
By the induction hypothesis and \eqref{eq:Uniform Dn}, each term
$\bs\Gamma^{i,k}$, $i>0$ satisfies \[
\|\bs\Gamma^{n-i,k}\|_{C^{0}}\leq C_{k}(n-i)^{3k-1}\bs\cD_{n-i}\]
hence adding up all the terms we get \[
\|b_{k}^{n}\|_{C^{0}}\leq C_{k}\bs\cD_{n}(b_{k}^{0}(x_{0})+\sum_{i=0}^{n-1}(n-i)^{3k-1})\leq C_{k}\bs\cD_{n}(n+1)^{3k}\]
and we obtain the bounds \eqref{eq:CNorms} for $\ell=0$. To evaluate
$\d^{\ell}b_{k}^{n}$, $\ell>1$, we start from the expression \eqref{eq:bnk - Sum}.
We notice first that \[
\d_{n}^{\beta}\bs\Gamma^{n-i,k}=\sum_{\beta_{1}+\beta_{2}=\beta}\sum_{j=1}^{k}\sum_{|\alpha|\leq2j}(\d_{n}^{\beta_{1}}\Gamma_{j\alpha}^{n-i}(x_{n-i}))(\d_{n}^{\beta_{2}}\d^{\alpha}b_{k-j}^{n-i-1}(x_{n-i-1}))\,.\]
Using the Faà di Bruno formula and Lemma \ref{lem:Flow Derivatives},
we get \[
\|\d_{n}^{\beta_{1}}\Gamma_{j\alpha}^{n-i,k}(x_{n-i})\|_{C^{0}}\leq C_{\beta_{1}}i^{\beta_{1}-1}\quad\et\quad\|\d_{n}^{\beta_{2}}\d^{\alpha}b_{k-j}^{n-i-1}(x_{n-i-1})\|_{C^{0}}\leq C_{\beta_{2}}i{}^{3k-1+\beta_{2}},\]
and this implies \[
\|\d_{n}^{\beta}\bs\Gamma^{n-i,k}\|_{C^{0}}\leq C_{\beta}i^{3k-1+\beta}.\]
Then, exactly the same strategy used to derive \eqref{eq: Cl norm -b0}
shows that \[
\|\d_{n}^{\ell}\bs J_{i+1}^{n}\bs\Gamma^{i,k}\|_{C^{0}}\leq C_{\ell}n^{3k-1+\ell}.\]
Using these estimates and \eqref{eq:bnk - Sum} yields to\[
\|\d_{n}^{\ell}b_{k}^{n}\|_{C^{0}}\leq C_{\ell}(n+1)n^{3k-1+\ell}\leq C_{\ell}(n+1)^{3k+\ell},\]
where the constant $C_{\ell}$ is such that $C_{\ell}=C^{(\ell,K)}(M,\chi,\cV)$.

\end{proof}

\subsection{The main estimate : proof of Proposition \ref{pro:HypDispEst}\label{sub:main}}

As noted before, the Lagrangians $\Lambda^{n}$ converge exponentially
fast as $n\to\infty$ to the weak unstable foliation. This implies
that for $x\in\pi\Lambda^{j}(1)$, the Jacobians $J_{S_{j}}(x)\defeq|\det\phi_{S_{j}(1)}^{-1}(x)|$
satisfy \[
\forall j\geq2,\ \forall(x,\xi)\in\Lambda^{j}(1),\ \lva\frac{J_{S_{j}}(x)}{J_{S^{u}(x,\xi)}(x)}-1\rva\leq C\e^{-j/C}\,,\ C=C(M)>0.\]
Here, $S^{u}$ generates the (Lagrangian) local weak instable manifold
at point $(x,\xi)$. Moreover, theses Jacobians decay exponentially
with $j$ as $j\to\infty$. This means that uniformly with respect
to $n$, \[
\prod_{j=0}^{n-1}J_{S_{n-j}}(\phi_{S_{n}}^{-j}(x))\leq C(M)\prod_{j=0}^{n-1}J_{S^{u}(\Phi^{-j}(x,\xi))}(\phi_{S_{n}}^{-j}(x))\,.\]
The Jacobian $J_{S^{u}(x,\xi)}(x)$ measures the contraction of $\Phi^{-1}$
along the unstable subspace $E^{u}(\Phi^{1}(\rho))$, where $\Phi^{1}(\rho)=(x,\xi)$,
and $x\in M$ serves as coordinates to compute this Jacobian (via
the projection $\pi$). The unstable Jacobian $J^{u}(\rho)\defeq|\det\lp d\Phi^{-1}|_{E^{u,0}(\Phi(\rho))}\rp|$
defined in Section \ref{sec:flows} express also this contraction,
but in different coordinates: for $n$ large enough, the above inequality
can then be extended to \begin{equation}
\prod_{j=0}^{n-1}J_{S_{n-j}}(\phi_{S_{n}}^{-j}(x))\leq C\prod_{j=0}^{n-1}J_{S^{u}(\Phi^{-j}(x,\xi))}(\phi_{S_{n}}^{-j}(x))\leq\ti C\prod_{j=0}^{n-1}J^{u}(\Phi^{-j}(\rho))\,.\label{eq:Prod Jac}\end{equation}
where $C,\ti C$ only depends on $M$. As noted above, because of
the Anosov property of the geodesic flow, the above products decay
exponentially with $n$. Together with the fact that the damping function
is positive, it follows that the right hand side in \eqref{eq:DecayProd1}
also decay exponentially with $n$. Recall now that $1\leq n\leq Nt_{0}$
and $N=T\log\h^{-1}$. Using \eqref{eq:DecayProd2}, we then see that
the remainders $R_{K}^{n}$ in \eqref{eq:AnsatzV} are uniformly bounded
: they satisfy \[
\|R_{K}^{n}\|\leq C_{K}\,,\quad C_{K}=C^{(K)}(M,\chi,\cV)\]
uniformly in $n$ and $z_{0}$, the point on which $\delta_{\chi,z_{0}}$
was based. From the very construction of $b^{n}(t,x)$, we then have
\begin{equation}
\|\cU^{1}\PP n\dots\cU^{1}\PP 1\cU^{1}\delta_{\chi}-\frac{1}{(2\pi\h)^{d/2}}b^{n}(1,\cdot)\|\leq C_{K}\h^{K-d/2}.\label{eq:Approx Finale}\end{equation}
But the bounds on the symbols $b_{k}^{n}$, $k>0$ given in Lemma
\ref{lem:Cnorms} tells us that \eqref{eq:Approx Finale} also holds
if we replace the full symbol $b^{n}$ by the principal symbol $b_{0}^{n}$,
provided $\h$ is chosen small enough -- say $\h\leq\h_{0}(\eps)$.
Hence, for $\h\leq\h_{0}$, \begin{eqnarray*}
\|\cU^{1}\PP n\dots\cU^{1}\PP 1\cU^{1}\delta_{\chi}\| & \leq & (2\pi\h)^{-\frac{d}{2}}\|b_{0}^{n}(1,\cdot)\|+C_{K}\h^{K-d/2}\,.\end{eqnarray*}
Now, using \eqref{eq: Ppal Symbol}, \eqref{eq:Prod Jac} and the
fact that $|\FF_{\gamma}|\leq1$, we conclude that for $a^{u}$ as
in \eqref{eq:Fonction Pression}, \[
\|b_{0}^{n}(1,x)\|\leq C\e^{n\cO(\h)}\sup_{x\in\pi\Lambda^{n}(1)}\exp\sum_{j=1}^{n}a^{u}\circ\Phi^{-j}(x,d_{x}S_{n}(1,x))\]
Here, $C=C(M)$ depends only on the manifold $M$. Let us consider
now the particular case $n=Nt_{0}$ with $N=T\log\h^{-1}$. It follows
immediately that \[
\sup_{x\in\pi\Lambda^{Nt_{0}}(1)}\exp\sum_{j=1}^{Nt_{0}}a^{u}\circ\Phi^{-j}(x,d_{x}S_{n}(1,x)\leq\prod_{k=1}^{N}\sup_{\rho\in\cW_{\beta_{k}}}\lp\exp\sum_{j=0}^{t_{0}-1}a^{u}\circ\Phi^{j}(\rho)\rp.\]
By the superposition principle already mentionned in \eqref{eq:superposition},
we then obtain for some $C=C(M)>0$ depending only on $M$:\begin{eqnarray*}
\|\cU\PP{Nt_{0}}\dots\cU\PP 1\cU^{1}\oph(\chi)\| & \leq & C\sum_{\ell}\sup_{z_{0}}\|\cU\PP{Nt_{0}}\dots\PP 1\cU^{1}\delta_{z,\alpha_{0}}^{\ell}\|\\
 & \leq & C\h^{-d/2}\|b_{0}^{n}\|+C_{K}\h^{K-d/2}\\
 & \leq & C\h^{-\frac{d}{2}}\prod_{k=1}^{N}\sup_{\rho\in\cW_{\beta_{k}}}\lp\exp\sum_{j=0}^{t_{0}-1}a^{u}\circ\Phi^{j}(\rho)\rp\end{eqnarray*}
To get the last line, we have noticed that $K$ can be chosen arbitrary
large: since $n\leq Tt_{0}\log\h^{-1}$, we see that for $\h$ small
enough, the main term in the right hand side of the second line is
larger than the remainder $C_{K}\h^{K-d/2}$, and $\e^{Nt_{0}\cO(\h)}=\cO(1)$.
This completes the proof of Proposition \ref{pro:HypDispEst}. 

\appendix

\section{Semiclassical analysis on compact manifolds\label{sec:semiclass}}

In this appendix we gather standard notions of pseudodifferential
calculus on a compact, $d$ dimensional manifold $M$ endowed with
a Riemannian structure coming from a metric $g$. As usual, $M$ is
equiped with an atlas $\{f_{\ell},V_{\ell}\}$, where $\{V_{\ell}\}$
is an open cover of $M$ and each $f_{\ell}$ is a diffeomorphism
form $V_{\ell}$ to a bounded open set $W_{\ell}\subset\R^{d}$. Functions
on $\R^{d}$ can be pulled back via $f_{\ell}^{*}:C^{\infty}(W_{\ell})\to C^{\infty}(V_{\ell})$.
The canonical lift of $f_{\ell}$ between $T^{*}V_{\ell}$ and $T^{*}W_{\ell}$
is denoted by $\ti f_{\ell}$:\[
(x,\xi)\in T^{*}V_{\ell}\mapsto\ti f_{\ell}(x,\xi)=(f_{\ell}(x),(Df_{\ell}(x)^{-1})^{T}\xi)\in T^{*}W_{\ell}\,,\]
where $A^{T}$ denotes the transpose of $A$. Its corresponding pull-back
will be denoted by $\ti f_{\ell}^{*}:C^{\infty}(T^{*}W_{\ell})\to C^{\infty}(T^{*}V_{\ell})$.
A smooth partition of unity adapted to the cover $\{V_{\ell}\}$ is
a set of functions $\phi_{\ell}\in C_{c}^{\infty}(V_{\ell})$ such
that $\sum_{\ell}\phi_{\ell}=1$ on $M$. 

Any observable (i.e. a function $a\in C^{\infty}(T^{*}M)$) can now
be split into $a=\sum_{\ell}a_{\ell}$ where $a_{\ell}=\phi_{\ell}a$,
and each term pushed to $\ti a_{\ell}=(\ti f_{\ell}^{-1})^{*}a_{\ell}\in C^{\infty}(T^{*}W_{\ell})$.
If $a$ belongs to a standard class of symbols, for instance \[
a\in S^{m,k}=S^{k}(\langle\xi\rangle^{m})\defeq\left\{ a=a_{\h}\in C^{\infty}(M),\ |\d_{x}^{\alpha}\d_{\xi}^{\beta}a|\leq C_{\alpha,\beta}\h^{-k}\langle\xi\rangle^{m-|\beta|}\right\} ,\]
 each $a_{\ell}$ can be be Weyl-quantized into a pseudodifferential
operator on $\cS(\R)$ via the formula \[
\forall u\in\cS(\R^{d}),\ \Op_{\h}^{w}(\ti a_{\ell})u(x)=\frac{1}{(2\pi\h)^{d}}\int\e^{\frac{\i}{\h}\langle x-y,\xi\rangle}\ti a_{\ell}\lp\frac{x+y}{2},\xi;\h\rp u(y)\: dy\: d\xi\]
To pull-back this operator on $C^{\infty}(V_{\ell})$, one first takes
another smooth cutoff $\psi_{\ell}\in C_{c}^{\infty}(V_{\ell})$ such
that $\psi=1$ in a neighbourhood of $\supp\phi_{\ell}$. The quantization
of $a\in S^{m,k}$ is finally defined by gluing local quantizations
together, yielding to \[
\forall u\in C^{\infty}(M),\oph(a)u=\sum_{\ell}\psi_{\ell}\times f_{\ell}^{*}\circ\oph^{w}(\ti a_{\ell})\circ(f_{\ell}^{-1})^{*}(\psi_{\ell}u)\]
The space of pseudodifferential operators obtained from $S^{k,m}$
by this quantization will be denoted by $\Psi^{m,k}$. Although this
quantization depends on the cutoffs, the principal symbol map $\sigma:\Psi^{m,k}\to S^{m,k}/S^{m,,k-1}$
is intrinsically defined and do not depend on the choice of coordinates.
The residual class is made of operators in the space $\Psi^{m,-\infty}$.
As an example, the (semiclassical) Laplacian $-\h^{2}\Dg\in\Psi^{0,2}$
is a pseudodifferential operator, and its principal symbol is given
by $\sigma(-\h^{2}\Dg)=\|\xi\|_{g}^{2}=g_{x}(\xi,\xi)\in S^{2,0}$. 

In this article, we are concerned with a purely semiclassical theory
and then deal only with compact subsets of $T^{*}M$. If $A\in\Psi^{m,k}$,
we will denote by $\WF_{\h}(A)$ the semiclassical wave front set
of $A$. A point $\rho\in T^{*}M$ belongs to $\WF{}_{\h}(A)$ if
for some choice of local coordinates near the projection of $\rho,$
the full symbol of $A$ is in the class $S^{m,-\infty}$. $\WF_{\h}(A)$
is a closed subset of $T^{*}M$, and $\WF_{\h}(AB)\subset\WF_{\h}(A)\cap\WF_{\h}(B)$.
In particular, if $\WF_{\h}(A)=\emptyset$, then $A$ is a negligible
operator, i.e. $A\in\Psi^{m,-\infty}$. If $\Psi\in L^{2}(M)$, we
also define the semiclassical wave front set of $\Psi$ by :\[
\WF_{\h}(\Psi)=\left\{ (x,\xi)\ :\ \exists a\in S^{m,0},\ a(x,\xi)\neq0,\ \|\Op_{\h}(a)\Psi\|_{L^{2}(M)}=\cO(\h^{\infty})\right\} ^{\bold c}\]
where the superscript $^{\bold c}$ indicates the complementary set.
We will often make use of the following fundamental propagation property
: if $U^{t}$ is a Fourier integral operator associated to a symplectic
diffeomorphism $\Phi^{t}:T^{*}M\to T^{*}M$, then \[
\WF_{\h}(U^{t}\Psi)=\Phi^{t}(\WF_{\h}(\Psi)).\]

\section*{Acknowledgements}

I am very grateful to Stéphane Nonnenmacher for numerous helpful discussions
on the subject of the present paper. I would also like to thank sincerely
Maciej Zworski for interesting suggestions, and the UC Berkeley for
its hospitality during March and April 2008.
\end{document}